\newsavebox{\imagebox}
\theoremstyle{plain}
\newtheorem{theorem}{Theorem}[section]
\newtheorem{lemma}[theorem]{Lemma}
\newtheorem{claim}[theorem]{Claim}
\newtheorem{fact}[theorem]{Fact}
\theoremstyle{definition}
\newtheorem{definition}[theorem]{Definition}
\theoremstyle{remark}
\newtheorem*{remark*}{Remark}
\DeclareMathOperator{\E}{\mathbb{E}}
\def\RR{{\mathbb{R}}}
\DeclareMathOperator{\poly}{poly}
\DeclareMathOperator{\polylog}{polylog}
\DeclareMathOperator{\MST}{MST}
\DeclareMathOperator{\OPT}{OPT}
\DeclareMathOperator{\dist}{dist}
\DeclareMathOperator{\Diam}{diam}
\newcommand{\col}{\ensuremath{\mathsf{color}}\xspace}
\newcommand{\cell}{\ensuremath{\mathsf{cell}}\xspace}
\newcommand{\indic}{\mathbb{I}}
\let\eps\varepsilon
\let\epsilon\varepsilon
\title{Streaming Algorithms for Geometric Steiner Forest}
\author{Artur Czumaj\thanks{Research partially supported by the Centre for Discrete Mathematics and its Applications (DIMAP), by EPSRC award EP/V01305X/1, by a Weizmann-UK Making Connections Grant, and by an IBM Award.
Email: \texttt{A.Czumaj@warwick.ac.uk} }\\
    University of Warwick
  \and Shaofeng H.-C. Jiang\thanks{Part of this work was done when the author was at the Weizmann Institute of Science and Aalto University. Partially supported by a startup fund from Peking University.
    Email: \texttt{shaofeng.jiang@pku.edu.cn}
  }\\
  Peking University
  \and Robert Krauthgamer\thanks{Work partially supported by ONR Award N00014-18-1-2364,
  the Israel Science Foundation grant \#1086/18,
  a Weizmann-UK Making Connections Grant,
  the Weizmann Data Science Research Center,
  and a Minerva Foundation grant.
    Email: \texttt{robert.krauthgamer@weizmann.ac.il}
  }\\
  Weizmann Institute of Science
  \and Pavel Vesel\'y\thanks{Part of this work was done when the author was at the University of Warwick.
Partially supported by European Research Council grant ERC-2014-CoG 647557.
PV was also partially supported by GA \v{C}R project 19-27871X and by Center for Foundations of Modern Computer Science (Charles Univ. project UNCE 24/SCI/008).
  	Email: \texttt{vesely@iuuk.mff.cuni.cz}}\\
  Charles University
}
\begin{document}

\maketitle
\begin{abstract}
We consider a generalization of the Steiner tree problem, the \emph{Steiner forest problem}, in the Euclidean plane: the input is a multiset $X \subseteq \RR^2$, partitioned into $k$ color classes $C_1, \ldots, C_k \subseteq X$. The goal is to find a minimum-cost Euclidean graph $G$ such that every color class $C_i$ is connected in $G$.
We study this Steiner forest problem in the streaming setting, where the stream consists of insertions and deletions of points to $X$.
Each input point $x\in X$ arrives with its color $\col(x) \in [k]$,
and as usual for dynamic geometric streams,
the input is restricted to the discrete grid $\{1,\ldots,\Delta\}^2$.

We design a single-pass streaming algorithm that uses $\poly(k \cdot \log\Delta)$ space and time,
and estimates the cost of an optimal Steiner forest solution
within ratio arbitrarily close to the famous Euclidean Steiner ratio $\alpha_2$
(currently $1.1547 \le \alpha_2 \le 1.214$).
This approximation guarantee matches the state-of-the-art bound for streaming Steiner tree, i.e., when $k=1$, and it is a major open question to improve the ratio to $1 + \epsilon$ even for this special case.
Our approach relies on a novel combination of streaming techniques, like sampling and linear sketching, with the classical Arora-style dynamic-programming framework for geometric optimization problems, which usually requires large memory and so far has not been applied in the streaming setting.

We complement our streaming algorithm for the Steiner forest problem
with simple arguments showing that any finite multiplicative approximation
requires $\Omega(k)$ bits of space.
\end{abstract}

\newpage

\section{Introduction}
\label{sec:intro}

We study combinatorial optimization problems in dynamic geometric streams, in the classical framework introduced by Indyk~\cite{Indyk04}.
In this setting, focusing on low dimension $d=2$,
the input point set is presented as a stream of
insertions and deletions
of points restricted
to the discrete grid $[\Delta]^2 := \{1, \ldots, \Delta\}^2$.
Geometric data is very common in applications
and has been a central object of algorithmic study,
in different computational paradigms (like data streams, property testing and distributed/parallel computing)
and various application domains (like sensor networks and scientific computing).
In particular, stream-processing of geometric data is motivated by the current demand
to analyze massive datasets using modest computational resources.
Research on geometric streaming algorithms has been very fruitful,
and in particular, streaming algorithms achieving
$(1+\epsilon)$-estimation (i.e., approximation of the optimal value)
have been obtained for fundamental geometric problems, such as
$k$-clustering~\cite{DBLP:conf/icml/BravermanFLSY17,FS05,hu2019nearly},\footnote{For clustering problems, streaming algorithms can often report the $k$ centers (but not the clusters themselves) in addition to estimating the optimal value.}
facility location~\cite{CLMS13},
maximum cut~\cite{FS05, LSS09, DBLP:conf/stoc/ChenJK23},
and minimum spanning tree (MST)~\cite{FIS08}.
Although such estimation algorithms approximate the optimal value
without reporting a feasible solution,
they can provide useful statistics very efficiently,
especially for massive datasets that change over time.

Despite the significant progress in geometric streaming,
many other (often similarly looking) fundamental geometric problems
are still largely open.
Spe\-cif\-i\-cal\-ly, for the Traveling Salesman Problem (TSP) and the Steiner tree problem,
which are cornerstones of combinatorial optimiza\-tion,
it is a major outstanding question (see, e.g., \cite{openquestion})
whether a streaming algorithm can match the $(1+\epsilon)$-approximation
known for the offline setting~\cite{DBLP:journals/jacm/Arora98,Mitchell99}.
In fact, the current streaming algorithms for TSP and Steiner tree
only achieve $O(1)$-approximation,
which follows easily by employing known streaming algorithms for MST on the same input.

While MST is closely related to TSP and to Steiner tree
-- the optimal MST value is within factor $2$ of the other two problems (for metric instances) --
it seems unlikely that techniques built around MST
could achieve $(1+\epsilon)$-approximation for either problem.
Indeed, even in the offline setting,
the only approach known to achieve $(1 + \epsilon)$-approximation
for TSP and/or Steiner tree relies on a framework devised independently
by Arora~\cite{DBLP:journals/jacm/Arora98} and by Mitchell~\cite{Mitchell99},
that combines geometric decomposition of the data points
and dynamic programming.
These two techniques have been used \emph{separately} in designing (one-pass) streaming algorithms in the past:
geometric decomposition, often using a quadtree,
in \cite{ABIW09,DBLP:conf/soda/AndoniIK08, CJLW22, CLMS13,FIS08,FS05, Indyk04, IT03,LS08}
and dynamic programming, mainly for string processing problems,
in \cite{BZ16, CGK16, CFHJLRSZ21, EJ15, GJKK07, SS13, SW07}.
However, we are not aware of any successful application of the Arora-Mitchell framework, which combines these two techniques,
for any geometric streaming optimization problem whatsoever.\footnote{We remark that \cite{ANOY14} does use a combination of dynamic programming and a quadtree, but not the Arora-Mitchell framework.
However, the context there is distributed computing, and it obtains streaming algorithms only for a restricted setting.}

We make an important step towards better understanding of these challenges
by developing new techniques that \emph{successfully adapt the Arora-Mitchell framework to streaming}.
To this end, we consider a generalization of Steiner tree, the classical \textbf{Steiner Forest Problem (SFP)}.
In this problem (also called \emph{Generalized Steiner tree}, see, e.g., \cite{DBLP:journals/jacm/Arora98}), the input is a multiset of $n$ \emph{terminal} points $X \subseteq [\Delta]^2$, partitioned into $k$ \emph{color classes} $X = C_1 \sqcup \cdots \sqcup C_k$, presented as a dynamic stream.
Thus, every point $x$ in the stream is described
by its coordinates and by its color $\col(x) \in [k]$.\footnote{The points are inserted and deleted in arbitrary order;
  there is no requirement that each color is presented in a batch,
  i.e., that its points are inserted/deleted consecutively in the stream.
}
The goal is to find a minimum-cost Euclidean graph $G$ such that every color class $C_i$ is connected in $G$.
We denote this cost, i.e., the optimal SFP value, by $\OPT$.
Observe that the Steiner tree problem is a special case of SFP in which all terminal points should be connected (i.e., $k=1$). Similarly to the Steiner tree problem, a solution to SFP may use points other than $X$; those points are called \emph{Steiner points}.
See \Cref{fig:sfp} for an illustration of the problem definition.

\begin{figure}[t]
  \centering
  \begin{subfigure}[c]{0.3\textwidth}
    \centering
    \includegraphics[width=\textwidth]{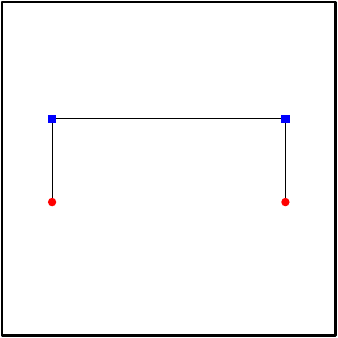}
  \end{subfigure}\qquad\qquad
  \begin{subfigure}[c]{0.3\textwidth}
    \centering
    \includegraphics[width=\textwidth]{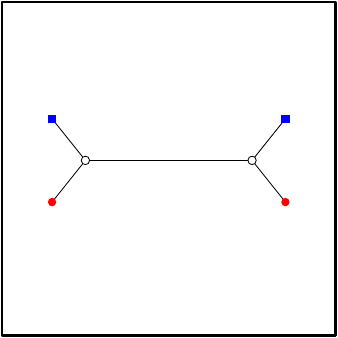}
  \end{subfigure}
\caption{An SFP instance with $k=2$ color classes (blue squares and red disks).
      The left figure shows a solution (straight lines) that forms a single connected component, which is cheaper than a separate component for each color.
  The right figure shows an even cheaper solution using Steiner points (black circles). 
  }
  \label{fig:sfp}
\end{figure}

\begin{remark*}
In the literature, the term SFP sometimes refers to the \emph{special case} where \emph{each color class contains only a pair of points}, i.e., each $C_i = \{s_i,t_i\}$ \cite{DBLP:journals/algorithmica/BateniH12,BHM11,DBLP:journals/talg/BorradaileKM15,CHJ18,GK15}.
In the standard setting of \emph{offline algorithms},
one can easily reduce one problem into another (see~\cite{Schaefer16}).
Thus, the special case of pairs is often simpler to present and does not restrict algorithmic generality for offline algorithms,
  while the definition used here is more natural in many applications,
  such as VLSI design (see~\cite[Section 2]{MW95} for more),
and allows for better parameterization over $k$, the number of colors.
\end{remark*}

\paragraph{Background.}

It is useful to recall here the \emph{Steiner ratio} $\alpha_2\ge 1$,
defined as the supremum over all finite point sets $X \subseteq \RR^2$,
of the ratio between the cost of an MST and that of an optimal Steiner tree.
The famous Steiner ratio Gilbert-Pollak Conjecture~\cite{gilbert1968steiner} speculates that $\alpha_2 = \frac{2}{\sqrt{3}} \approx 1.1547$, but the best upper bound to date is only that $\alpha_2 \le 1.214$ \cite{chung1985new}.
Hence, any $t$-approximation of MST in $\RR^2$ implies $(t \alpha_2)$-approximation for Steiner tree (which is a special case of SFP where $k = 1$).
In particular, employing the streaming algorithm of Frahling, Indyk, and Sohler~\cite{FIS08},
which $(1+\eps)$-approximates the MST cost using space $\poly(\eps^{-1}\log\Delta)$,
immediately yields a streaming algorithm that $(\alpha_2+\eps)$-approximates
the Steiner tree cost, with the same space bound.
Until our work, this simple $(\alpha_2 + \epsilon)$-approximation
was the only known streaming algorithm for Steiner tree,
and it was not known whether the same approximation can be obtained for SFP
(a more general problem).

\subsection{Our Results}
Our main result is a \emph{space and time} efficient,
\emph{single-pass} streaming algorithm that estimates the optimal \emph{cost}
for SFP within factor arbitrarily close to $\alpha_2$.
As usual, we analyze the streaming algorithm's space complexity,
and time complexity;
for the latter, we shall distinguish between two types of operations:
for inserting/deleting a point it is called \emph{update time},
and for reporting an estimate (of $\OPT$) it is called \emph{query time}.

\begin{restatable}{theorem}{thmmainprecise}
	\label{thm:main-precise}
	For every integers $k, \Delta\ge 1$, and every $0<\eps<1/2$,
	one can with high probability $(\alpha_2 + \eps)$-approximate
	the SFP cost of an input $X \subseteq [\Delta]^2$
	presented as a dynamic geometric stream,
	using space and update time $k^3\cdot \poly(\epsilon^{-1}\log k\log\Delta)$
	and query time $k^3\cdot \poly(\log k)\cdot (\epsilon^{-1}\log\Delta)^{O(\epsilon^{-2})}$. \end{restatable}

Our space bound is not too far from optimal
in terms of the dependence on $k$,
since any finite multiplicative approximation for SFP requires space $\Omega(k)$,
as we prove in Theorem~\ref{thm:lb} by a reduction from the communication complexity of indexing.
This lower bound holds already for insertion-only algorithms
and even in the one-dimensional case with all color classes of size at most $2$.
A polynomial dependence on $\log\Delta$ is common in this setting, 
and it is easy to prove an $\Omega(\log\Delta)$-bit lower bound,
even for the one-dimensional case and $k=1$, by reduction from augmented indexing. 
Furthermore, our approximation ratio matches the state of the art for Steiner tree (i.e., $k = 1$).

While our algorithm in Theorem~\ref{thm:main-precise} is an estimation algorithm,
i.e., it only approximates the optimal cost
and does not report an approximate solution (which has size $\Omega(n)$),
it can report how the colors are grouped in an approximate solution.
That is, our algorithm can report a partition of the colors $[k]=I_1\sqcup\cdots\sqcup I_r$,
so that the sum  (over $j=1,\ldots,r$) of the costs of
the minimum-cost Steiner trees for the sets $\bigcup_{i \in I_j} C_i$
is an $(\alpha_2+\eps)$-approximation of SFP.
It is worth noting that in estimating the optimal cost,
our algorithm does use Steiner points,
and thus the MST costs for sets $\bigcup_{i \in I_j} C_i$ of the aforementioned partition need not be within $(1+\eps)$-factor of the estimate of the algorithm.

\medskip

\begin{remark*}
Another streaming algorithm can be obtained 
by a simple brute-force enumeration combined with linear-sketching techniques,
as explained in Section~\ref{subsec:other_approaches}. 
The space complexity of this algorithm is close-to-optimal too,
but its query time is significantly worse and in fact \emph{exponential} in $k$.
Technically, this approach demonstrates the power of linear sketching,
however its core is an exhaustive search rather than an algorithmic insight,
and thus it is quite limited, offering no path for improvements or extensions.
Although the primary focus in streaming algorithms is on space complexity,
most applications require also low query time.
Indeed, exponential improvements in the query time of other streaming problems
have proved to be of key importance,
e.g., for moment estimation the query time was improved from
$\poly(\epsilon^{-1})$ to $\polylog (\epsilon^{-1})$~\cite{DBLP:conf/stoc/KaneNPW11},
and for heavy hitters from $\poly(n)$ to $\polylog (n)$ \cite{DBLP:journals/cacm/LarsenNNT19}.
\end{remark*}

\subsection{Related Work}
\label{subsec:related-work}

SFP has been studied extensively in operations research and algorithmic communities for several decades,
and is
part of a more general network-design problem,
where various subsets of vertices are required to maintain higher inter-connectivity,
see e.g.~\cite{AKR95,GW95,Jain01,MW95}.

In the offline setting, it is known that the Steiner tree problem is APX-hard in general graphs and in high-dimensional Euclidean spaces,
and the same thus holds for its generalization SFP. For SFP in general graphs,
a 2-approximation algorithm is known due to~\cite{AKR95}
(see also~\cite{GW95,Jain01}).
These 2-approximation algorithms rely on linear-programming relaxations,
and the only two combinatorial constant-factor approximations for SFP were devised more recently~\cite{GK15,gro_et_al:LIPIcs:2018:8313}.
For low-dimensional Euclidean space, which is the main focus of our paper,
Borradaile, Klein, and Mathieu~\cite{DBLP:journals/talg/BorradaileKM15}
and then Bateni and Hajiaghayi~\cite{DBLP:journals/algorithmica/BateniH12}
obtained a $(1+\epsilon)$-approximation by applying geometric decomposition and dynamic programming,
significantly extending the approach of Arora~\cite{DBLP:journals/jacm/Arora98}.
Further extensions of this approach
have led to a PTAS for SFP in metrics of bounded doubling dimension \cite{CHJ18},
and for planar graphs and graphs of bounded treewidth \cite{BHM11}.

There has been also extensive work on geometric optimization problems in the dynamic (turnstile) streaming setting.
Indyk~\cite{Indyk04} designed $O(d\log\Delta)$-approximation algorithms for several basic problems in $\RR^d$,
like MST and matching. 
Followup papers presented a number of streaming algorithms
achieving approximation ratio $1+\eps$ or $O(1)$ for the cost of Euclidean MST  \cite{FIS08}, various clustering problems \cite{FS05,HM04,hu2019nearly}, geometric facility location \cite{CLMS13,LS08,CJKVY22}, earth-mover distance \cite{ABIW09,Indyk04}, and various geometric primitives (see, e.g.,  \cite{AS15,AN12,Chan06,Chan16,FKZ04}).
Some papers have studied geometric problems with superlogarithmic but still sublinear space and in the multipass setting (see, e.g., \cite{ANOY14}).
A recent line of work~\cite{CJLW22,CJKVY22,WY22,DBLP:conf/stoc/ChenCJLW23,DBLP:conf/stoc/ChenJK23}
  focused on the high-dimensional geometric setting,
  where space complexity must be polynomial in the dimension $d$.
We are not aware of prior results that studied specifically
the (Euclidean) Steiner tree problem nor SFP in the streaming context,
but in $\RR^2$, as mentioned earlier,
$(1+\eps)$-approximation of the MST cost \cite{FIS08}
immediately gives $(\alpha_2+\eps)$-approximation for Steiner tree.

\section{Technical Overview: Adapting Arora's Framework to Streaming}
\label{sec:techoverview}

Our main technical contribution is to introduce the first streaming implementation
for the Arora-style (offline) framework \cite{DBLP:journals/jacm/Arora98}.\footnote{Arora~\cite{DBLP:journals/jacm/Arora98} and Mitchell~\cite{Mitchell99}
    have developed independently different versions of this framework. 
    We work with Arora's version, following previous work on SFP~\cite{DBLP:journals/talg/BorradaileKM15,DBLP:journals/algorithmica/BateniH12}.
}
More precisely, we rely on the offline algorithms
of Borradaile, Klein, and Mathieu~\cite{DBLP:journals/talg/BorradaileKM15}
and of Bateni and Hajiaghayi~\cite{DBLP:journals/algorithmica/BateniH12},
which extended Arora's framework to SFP,
and obtained a polynomial-time approximation scheme (PTAS).

At a high-level, Arora's framework achieves $(1 + \epsilon)$-approximation
by combining a hierarchical geometric decomposition with dynamic programming (DP).
The DP has $n$ basic subproblems (one for each input point),
and a streaming implementation of it would require, at best, space complexity $O(n)$.
We bypass this space requirement for SFP
by identifying a small set of new basic subproblems,
such that starting the DP from these subproblems
will achieve a $((1 + \epsilon)\alpha)$-approximation,
provided that we can $\alpha$-approximate each new basic subproblem.
This modified framework is built around two important steps:
identifying a small set of new basic subproblems, 
and solving (approximately) each of them.
Both have to be executed in streaming, and in fact in the same pass. 
Below, we first review briefly the offline algorithm of~\cite{DBLP:journals/talg/BorradaileKM15,DBLP:journals/algorithmica/BateniH12},
and then explain how we overcome the various challenges in a streaming implementation.

\paragraph{Review of Arora's framework for SFP \cite{DBLP:journals/talg/BorradaileKM15,DBLP:journals/algorithmica/BateniH12}.}
The key idea is to tweak an optimal solution so that
its cost remains nearly optimal but it now satisfies certain structural properties,
and then design a DP to find the best solution with this structure. 
The DP employs a standard tool, called a \emph{quadtree}
(see e.g.~\cite[Chapter~2]{Har-Peled-2011}),
which is a $4$-ary tree representing a hierarchical partitioning
(sometimes called dissection) of $(0, \Delta]^2$
(note that this partitions the underlying continuous space and not only the discrete grid),
where the nodes at level $i$ of the tree
represent a partition of $(0, \Delta]^2$ into (continuous) squares of sidelength $2^i$.
Specifically, the root node is at level $\log_2 \Delta$
and represents the entire $(0, \Delta]^2$;
each non-leaf node has four children representing four squares
that partition the parent's node square;
and the leaves at level $0$ represent squares of size $1\times 1$.
See \Cref{fig:quadtree} for illustration.

\begin{figure}[ht]
	\centering
	\begin{subfigure}[c]{0.2\textwidth}
		\centering
		\includegraphics[width=\textwidth]{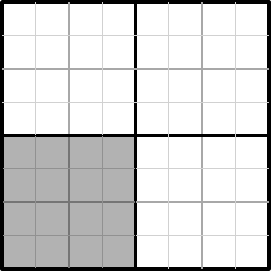}
	\end{subfigure}
	\quad
	\begin{subfigure}[c]{0.7\textwidth}
		\centering
		\includegraphics[width=\textwidth]{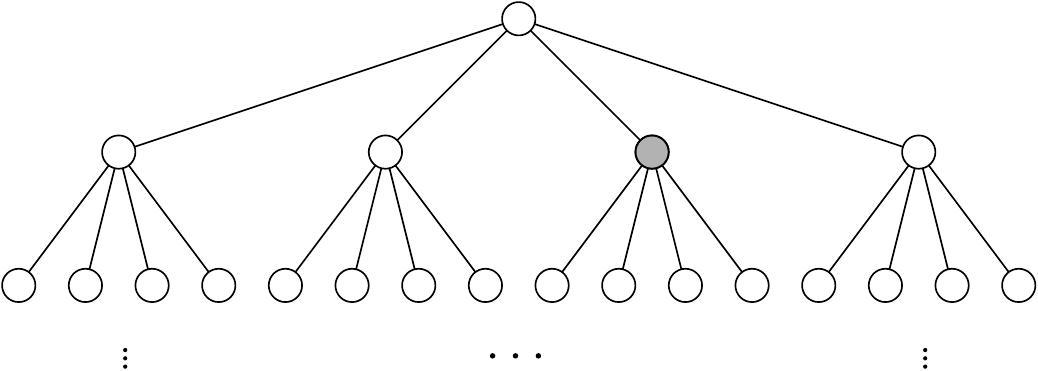}
	\end{subfigure}
	\caption{
            A hierarchical geometric decomposition with four levels (on the left) and its corresponding quadtree (on the right).
            In particular, the shaded square on the left corresponds to the shaded node in the quadtree.
          }
	\label{fig:quadtree}
\end{figure}

To compute a solution using DP over the quadtree,
we restrict the solution's structure inside every quadtree square $R$,
and in particular how the solution intersects the boundary of $R$. 
For the latter,
we designate $O(\eps^{-1}\cdot \log \Delta)$ equally-spaced points, called \emph{portals},
on all four boundary line segments of every quadtree square $R$.
Here, $\eps > 0$ is a parameter controlling the error introduced by imposing this structure.
The DP will consider only solutions that are \emph{portal-respecting},
meaning that they enter/exit every quadtree square only through its portals,
and the portals used in a square are called \emph{active portals}.
See Figure~\ref{fig:struct_prop} for illustration.

\begin{figure}[ht]
\centering
\includegraphics[width=0.3\textwidth]{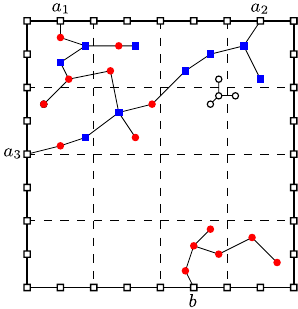}
\caption{A structured solution inside a square $R$
    that has portals (hollowed squares) on its boundary
    and is split into $4 \times 4$ cells (depicted by dashed lines).
    The solution inside $R$ is portal-respecting
    with active portals $a_1, a_2, a_3, b$. 
    It has three local components; 
two of them are connected to the boundary,
    and then by the cell property, points of same color (say red disks)
    in the same cell must be in the same local component. 
A third local component (whose points are depicted as hollowed circles)
    is entirely inside $R$,
    so the cell property does not apply to it. 
    For simplicity, local components are depicted without Steiner points (i.e., using MST).
  }
\label{fig:struct_prop}
\end{figure}

To restrict the solution's structure inside a quadtree square $R$,
we focus on the \emph{local components} of $R$,
i.e., the connected components of the intersection of the solution and $R$.
Notice that a \emph{global component},
i.e., a connected component of the entire solution,
may be split inside $R$ into multiple local components,
that are connected to each other outside $R$. 
There are two types of local components.
One is components completely contained in the interior of $R$,
i.e., not connected to the boundary,
which must be also global components of the solution.
Observe that a color class cannot intersect such a component only partially,
i.e., every color appearing in the component must be completely contained in it.
These components are usually handled by smaller subproblems in the DP.
Hence, we focus mostly on the other type,
of local components that are connected to the boundary of $R$.
We restrict their structure by partitioning the square $R$
into $O(\eps^{-1})\times O(\eps^{-1})$ subsquares called \emph{cells},
which are themselves quadtree squares at a lower level,
and requiring the solution to satisfy the so-called \emph{cell property}:
points that are in the same cell and are connected to the boundary of $R$
must be in the same local component of $R$.

Portal-respecting solutions that satisfy the cell property might be too costly.
However, if we apply a uniformly random shift on the quadtree, 
then with constant probability, there exists such a structured solution
whose cost is within $(1+\eps)$-factor of the optimal solution~\cite{DBLP:journals/talg/BorradaileKM15,DBLP:journals/algorithmica/BateniH12}; see Theorem~\ref{thm:dp_struct} for details.

We can design a DP that finds the cheapest solution with the above structure
(portal-respecting and satisfying the cell property). 
The details appear in Section~\ref{sec:dp_review},
but in a nutshell, each DP subproblem is specified by a quadtree square $R$,
a set of $O(\eps^{-1})$ active portals in $R$,
a mapping of each cell in $R$ to a set of active portals, 
and a partition of the active portals such that portals in the same part
must be connected outside of $R$.
The last datum ensures that certain local components inside $R$
will (eventually) be put together into the same global component.
The choice of parameters guarantees that 
the number of subproblems associated with each quadtree square $R$
is $\polylog(\Delta)$ (omitting dependence on $\epsilon$).

\paragraph{Streaming algorithm overview.}
Our overall plan is to maintain a sketch of the input while reading the stream,
and then use this sketch to evaluate the DP at the end of the stream. 
It turns out that the key is to identify a small set of basic DP subproblems
(that the DP will start from),
and then estimate the DP value of these subproblems, 
all executed in the streaming setting.
Our basic DP subproblems just correspond
to a small set of so-called \emph{simple} squares
(instead of all leaf squares, i.e., squares containing a single input point).
To approximate the value of a subproblem on a basic square,
we enumerate all partitions of the cells in this square,
and check if requiring that the cells in each part form a local component
would be compatible with the constraints imposed by the subproblem
and the colors of points inside the cells.
Next, we need to estimate the cost of each local component in this enumeration,
for which we use a $(1+\eps)$-approximate MST cost,
and this is the place that introduces $\alpha_2$ to our overall ratio.
Note that we use no Steiner point in these local components,
but portals of the squares might become Steiner points in the final solution.
We discuss below how to implement these steps in the streaming setting.

\paragraph{New definition of basic DP subproblems.}
The main difficulty in using Arora's approach in a streaming algorithm
is that it requires to evaluate too many basic subproblems,
because each input point lies in a separate (small enough) square,
so we essentially need to store the whole input.
To significantly reduce the number of basic subproblems,
it suffices to identify a small set of squares $\mathcal{R}$ that covers the input points,
evaluate the $\polylog(\Delta)$ subproblems for each square in $\mathcal{R}$,
and continue the original DP from these squares
(i.e., assuming all other squares are empty). 

Crucially, we enforce on these squares additional structural properties 
so that their associated subproblems can be evaluated in streaming.
The idea is to avoid local components
that are contained in the interior of $R$ and thus not connected to any portal. 
Specifically, a square $R$ is called \emph{simple}
if no color class is totally contained in $R$.
Observe that in a simple square $R$, all local components must be connected to the boundary;
furthermore, by the cell property,
points in the same cell of $R$ must be in the same local component.
We then identify a small set of simple squares that covers all points;
in fact, we just take all simple squares in the quadtree that are maximal
(with respect to containment).
Their number turns out to be only $O(k \log \Delta)$ (\Cref{lemma:leaves}),
and we further show (in \Cref{sec:streaming_simp_square})
how to find them in the streaming setting
using standard sparse recovery techniques.
See Figure~\ref{fig:simple_square} for an illustration.

\begin{figure}[t]
	\centering
	\includegraphics[width=0.3\textwidth]{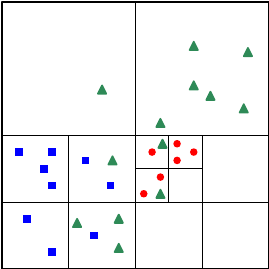}
	\caption{An input with three colors that has $13$ maximal simple squares (counting also the $4$ empty squares).}
	\label{fig:simple_square}
\end{figure}

These maximal simple squares are then our set of ``leaf'' squares
for the DP computation; 
that is, the basic DP subproblems will be all subproblems for these squares.
We discuss below how to estimate the values of these basic subproblems.
Finally, using these estimates, we run the DP 
in the same way as the offline algorithm~\cite{DBLP:journals/talg/BorradaileKM15,DBLP:journals/algorithmica/BateniH12}.

\paragraph{Enumerating local components.}
Given a simple square $R$ and a basic DP subproblem for $R$,
we find an approximately cheapest solution 
by enumerating all the possible ``arrangements'' of local components inside $R$.
By the cell property,
this boils down to enumerating all partitions of the cells of $R$.
However, not all partitions are compatible
with the DP subproblem, for one of the following reasons:
First, the assignment of cells to active portals in the subproblem
must agree with the partition,
for example, cells connected to the same portal must be in the same part of the partition. 
Second, two local components that share a color must be connected from outside of $R$,
hence the subproblem must specify two portals, one assigned to each component,
that are connected from outside of $R$.
While the former condition is simple to check given the subproblem and the partition,
for the latter the algorithm needs to know which color classes intersect every cell,
which is provably impossible in the streaming setting.
See \Cref{fig:comp_checking} for illustration. 

\begin{figure}[t]
	\centering
	\includegraphics[width=0.3\textwidth]{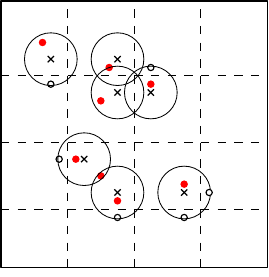}
	\caption{
            Approximately testing which color classes intersect every cell. 
            Input points (red disks) are covered by nearby grid points (crosses),
            and the artificial points added in close-enough cells (black circles), shown for simplicity only for cells containing no input points.
          }
	\label{fig:comp_checking}
\end{figure}

We therefore use the following approximate tester:
For every color class $C_i\subseteq X$, we compute a $\delta\cdot \Diam(C_i)$-covering
(defined in Section~\ref{sec:prelim} using grid points)
where $\delta>0$ is small enough and $\Diam(C_i)$ is the diameter of points in $C_i$.
We can approximately test which colors are present inside or near a cell
by only looking at the nearby points in these $k$ coverings.
More precisely, we declare that a cell intersects color $C_i$
if the covering for $C_i$ contains a point within distance $\delta\cdot \Diam(C_i)$ from the cell.
To make it more explicit, we add a new (artificial) point of color $C_i$ into the cell.
We can show that adding these points increases the SFP value only slightly (for small enough $\delta$),
and that these covering can be computed using standard streaming techniques,
namely, sparse recovery.

\paragraph{Estimating the MST cost for a local component.}
Our next goal is to estimate the MST cost of every local component
arising in the enumeration for a basic DP subproblem.
Note that by not using here Steiner points (except for active portals),
we lose a factor of $\alpha_2$;
however, improving upon this would require a streaming algorithm
with a better factor for Steiner tree, i.e., the case $k=1$.

For estimating the MST cost of a local component (given by a subset of cells and portals),
a natural idea is to employ the MST sketch designed in~\cite{FIS08} in a black-box manner.
This requires building an MST sketch for every possible local component
of every maximal simple square,
however the maximal simple squares are computed only at the end of the stream. 
We therefore modify the MST sketch from~\cite{FIS08} to handle \emph{subset} queries,
i.e., estimate the MST cost of points inside a subset of cells given as a query.
We first briefly overview this algorithm and then explain how we modify it.

\paragraph{Review of the MST sketch~\cite{FIS08}.}
The first observation (originally from~\cite{DBLP:journals/siamcomp/ChazelleRT05})
is that the MST cost can be written as a weighted sum
of the number of connected components in certain threshold graphs;
these graphs are obtained from the metric graph of the point set
by removing all edges whose length exceeds a threshold $\tau>0$.
Essentially, the idea is to count the number of MST edges of length above $\tau$.

To estimate the number of components in a threshold graph,
we round the points to a suitable grid
and sample a small number of rounded points uniformly, using $\ell_0$-samplers.
An $\ell_0$-sampler (cf.~\cite{FIS08,DBLP:journals/dpd/CormodeF14}) is a data structure that processes a dynamic stream (possibly containing duplicate items),
and with high probability
returns a uniform sample from the distinct items in the stream,
i.e., every item in the stream is chosen with the same probability $1/t$,
where $t$ is the $\ell_0$ norm of the resulting frequency vector
(see Lemma~\ref{lemma:neighbor_l0} for precise statement).
For each sampled (rounded) point $y$,
the algorithm in~\cite{FIS08} runs a stochastic-stopping BFS from $y$
to check if the entire component of $y$ is explored within a randomly chosen number of steps.
This BFS requires an extended $\ell_0$-sampler,
that returns for each sampled point also its neighboring points,
as presented in~\cite{FIS08} and stated in Lemma~\ref{lemma:neighbor_l0}.
The MST cost is estimated by a weighted sum of the number of BFS's that were completed,
summed over different thresholds $\tau>0$.

\paragraph{Adjusting the MST sketch to subset queries.}
We run an MST sketch for each color separately,
and when we need to estimate the MST cost inside a region
(a subset of cells and colors representing a local component,
together with some portals),
we first merge the MST sketches for the relevant colors, 
using the fact that the MST sketch is linear.\footnote{Linear sketches are merged by summing their memory contents,
  which requires they all use the same random coins. 
}
Then, out of all the (rounded) points sampled by the sketch, we only use those inside the region to estimate its MST cost.
That is, we run a stochastic-stopping BFS as in~\cite{FIS08} from each sampled point that falls inside the region, also restricting the BFS to stay inside the region.

However, the original analysis of the MST sketch in~\cite{FIS08}
has to be modified to deal with the restriction to the region. 
For instance, we might sample no points from the region
if that region contains only a few points,
in which case our MST-cost estimation will necessarily be $0$.
We observe that in this case the MST cost inside the region
is a small fraction of the overall cost,
which we can account for in a global way. 
Namely, we prove that each MST subset query is accurate up to an additive error
that is proportional to the MST cost of a \emph{global} component of a DP solution.
We can make this additive error small enough
so that the total additive error over all queries relevant to a global component
is at most an $\eps$-fraction of the component's cost.

Altogether, our adjustment of the MST sketch can answer MST subset queries
with small multiplicative and additive errors,
for subsets that satisfy some technical conditions.
This result is stated in \Cref{lemma:streaming_mst_add}
and may be of independent interest.

\subsection{Could Other Approaches Work?}
\label{subsec:other_approaches}

\paragraph{An exponential-time streaming algorithm based on linear sketching.}
An obvious challenge in solving SFP is to determine
the connected components of an optimal (or approximate) solution.
Each color class must be connected,
hence the crucial information is which \emph{colors} are connected together
(even though they do not have to be).
Suppose momentarily that the algorithm receives an advice with this information,
which can be represented as a partition of the color set
$[k]=P_1\sqcup\cdots\sqcup P_l$.
Then a straightforward approach for SFP is to solve the Steiner tree problem
separately on each part $P_j$ (i.e., the union of some color classes),
and report their total cost.
In our streaming model,
we can achieve $(\alpha_2+\eps)$-approximation the Steiner tree cost
by running the aforementioned MST algorithm~\cite{FIS08},
which requires space $\poly(\eps^{-1}\log\Delta)$,
and we would need only $l\leq k$ parallel executions of it (one for each $P_j$).
An algorithm can bypass having such an advice by enumeration,
i.e., by trying in parallel all the $k^k$ partitions of $[k]$
and reporting the minimum of all their outcomes.
This would still achieve $(\alpha_2+\eps)$-approximation,
because each possible partition gives rise to a feasible SFP solution
(in fact, this algorithm optimizes the sum-of-MST objective).
However, this naive enumeration increases the space and time complexities
by a factor of $O(k^k)$.
We can drastically improve the space complexity by the powerful fact
that the MST algorithm of~\cite{FIS08}
is based on a \emph{linear sketch}, i.e., its memory contents
is obtained by applying a randomized linear mapping to the input $X$
(viewed as a frequency vector in $\RR^{[\Delta^2]}$).
The huge advantage is that linear sketches of several point sets are mergeable,
by simply summing their memory contents.
In our context, one can compute a separate MST sketch for each color class $C_i$,
and then obtain a sketch for the union of some color classes,
say for some $P_j\subset [k]$, by merging their sketches.
Since these sketches are randomized,
one has to make sure they use the same random coins (same linear mapping),
and also amplify the success probability of the sketching algorithm
so as to withstand a union bound over all $2^k$ possible subsets of $[k]$.
This technique improves the space complexity and update time to be
$\poly(k\eps^{-1}\log\Delta)$,
however the query time is still \emph{exponential} in $k$.
See Theorem~\ref{thm:easy} for more details.

\paragraph{Tree embedding.}
Indyk~\cite{Indyk04} used the low-distortion tree embedding approach of Bartal~\cite{Bartal96}
to design dynamic streaming algorithms with approximation ratio $O(\log \Delta)$ for several geometric problems.
This technique can be easily applied to SFP as well, but the approximation ratio is $O(\log \Delta)$ which is far from what we are aiming at.

\paragraph{Other $O(1)$-approximate offline approaches.}
In the regime of $O(1)$-approximation, SFP has been extensively studied using various other techniques, not only dynamic programming. For example, in the offline setting there are several 2-approximation algorithms for SFP using the primal-dual approach and linear-programming (LP) relaxations \cite{AKR95,GW95,Jain01}, and there is also a combinatorial (greedy-type) constant-factor algorithm called \emph{gluttonous} \cite{GK15}. Both of these approaches work in the general metric setting. While there are no known methods to turn the LP approach into low-space streaming algorithms, the gluttonous algorithm of \cite{GK15} may seem amenable to streaming.
Indeed, it works similarly to Kruskal's MST algorithm as it builds components by considering the edges in order of increasing length.
Moreover, the MST cost estimation in~\cite{FIS08} is similar in flavor to Kruskal's algorithm.
However, a crucial difference is that the gluttonous algorithm stops growing a component once all terminals inside the component are satisfied,
i.e., no color class $C_i$ has points both inside and outside the component.
This creates a difficulty that the algorithm must know
for each component whether or not it is ``active'' (i.e., not satisfied),
and there are up to $n$ components, requiring overall $\Omega(n)$ bits of space. This information is crucial because ``inactive'' components do not have to be connected to anything else, but they may help to connect two still ``active'' components in a much cheaper way than by connecting them directly.
Furthermore,
we have a simple one-dimensional example showing that the approximation ratio of the gluttonous algorithm
cannot be better than $2$
(moreover, the approximation factor in \cite{GK15} is significantly larger than 2).
In comparison, our dynamic-programming approach gives a substantially better approximation ratio $\alpha_2+\eps$. Nevertheless, an interesting open question is whether the gluttonous algorithm admits a low-space streaming implementation.

 \section{Preliminaries}
\label{sec:prelim}

Let $\indic$ be the indicator function, i.e., $\indic(\mathcal{E}) = 1$ if and only if the condition $\mathcal{E}$ evaluates to true.  
For points $x, y \in \mathbb{R}^2$,
let $\dist(x, y) := \|x - y\|_2$.
For point sets $S, T \subset \mathbb{R}^2$,
let $\dist(S, T) := \min_{x \in S, y \in T}{\dist(x, y)}$,
and let $\Diam(S) := \max_{x, y \in S}{\dist(x, y)}$.
A \emph{$\rho$-packing} $S \subseteq \mathbb{R}^2$ is a point set
such that for all $x\neq y \in S$ one has $\dist(x, y) \geq \rho$.
A \emph{$\rho$-covering} of $X\subset \RR^2$ is a subset $S \subseteq \mathbb{R}^2$,
such that for all $x \in X$ there exists $y \in S$ with $\dist(x, y) \leq \rho$.

\begin{fact}[Packing Property, cf.~{\cite{pollard1990empirical}}] \label{fact:packing}
A $\rho$-packing $S \subseteq \mathbb{R}^d$ has size $|S| \leq \left( \frac{3\Diam(S)}{\rho} \right)^d$.
\end{fact}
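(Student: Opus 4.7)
The plan is to prove this by a standard volume-packing argument. First I would dispose of the trivial case: if $|S|\le 1$ the bound holds vacuously (and if $S$ is empty there is nothing to show), so I may assume $|S|\ge 2$, which in particular forces $\Diam(S)\ge\rho$ by the packing condition.

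Next I would inflate each packed point into a ball. By the definition of a $\rho$-packing, for any two distinct $x,y\in S$ we have $\dist(x,y)\ge\rho$, so the open balls $B(x,\rho/2)$ indexed by $x\in S$ are pairwise disjoint. Fixing an arbitrary anchor $x_0\in S$, every $x\in S$ satisfies $\dist(x,x_0)\le\Diam(S)$, hence $B(x,\rho/2)\subseteq B(x_0,\Diam(S)+\rho/2)$. Taking the union over $x\in S$ and comparing $d$-dimensional Lebesgue volumes yields
\[
    |S|\cdot \omega_d\, (\rho/2)^d \;\le\; \omega_d\, (\Diam(S)+\rho/2)^d,
\]
where $\omega_d$ is the volume of the unit ball in $\mathbb{R}^d$. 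The $\omega_d$ factors cancel, and rearranging gives $|S|\le \bigl(\tfrac{2\Diam(S)+\rho}{\rho}\bigr)^d$.

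Finally I would simplify: using $\Diam(S)\ge\rho$ from the first paragraph, $2\Diam(S)+\rho\le 3\Diam(S)$, so $|S|\le(3\Diam(S)/\rho)^d$ as claimed. I do not expect any real obstacle; the only thing to be slightly careful about is the edge case $\Diam(S)<\rho$, which is precisely the trivial $|S|\le 1$ case handled at the start, and the use of Lebesgue measure to ensure the disjointness-to-volume step is rigorous.
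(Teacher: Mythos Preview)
Your argument is the standard volume-packing proof and is correct; the paper does not supply its own proof of this fact but simply cites the result. One small caveat: the case $|S|=1$ does not actually hold ``vacuously,'' since then $\Diam(S)=0$ and the stated bound reads $1\le 0$; this is a defect of the inequality as phrased rather than of your argument, and in the paper the fact is only invoked for nets with at least two points.
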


	\paragraph{Metric graphs.}
	We call a weighted undirected graph $G=(X, E, w)$ a \emph{metric graph}
	if for every edge $\{u, v\} \in E$ one has $w(u, v) = \dist(u, v)$,
	and we let $w(G)$ be the sum of the weights of edges in $G$.
	A solution $F$ of SFP may be interpreted as a metric graph.
	For a set of points $S$ (e.g., $S$ can be a square), let $F_{|S}$ be the subgraph
	of $F$ formed by edges whose both endpoints belong to $S$.
	Note that we think of $F$ as a \emph{continuous} graph in which every point along an edge is itself a vertex,
	so $F_{|S}$ may be interpreted as a geometric intersection of $F$ and $S$.

\paragraph{Geometric streaming setting.}
	Recall that we assume the input points are from a discrete grid $[\Delta]^2$,
	and are presented as a dynamic point stream.
	This setting of geometric streaming was introduced in~\cite{Indyk04},
	and has become standard model for geometric streaming,
        see e.g.~\cite{FS05,FIS08,LS08,CLMS13,CJLW22,CJKVY22}.
	The discrete grid assumption is convenient
        for addressing and simplifying issues of precision and bit complexity.

	\paragraph{Randomly-shifted quadtrees~\cite{DBLP:journals/jacm/Arora98}.}
	Without loss of generality, let $\Delta$ be a power of 2, and let $L := 2\Delta$.
	A quadtree (hierarchical partitioning) is constructed on $(0, L]^2$.
	Each node $u$ in the quadtree corresponds to a (continuous) square $R_u$, and
        if it is not a leaf then it has four children, whose squares partition $R_u$.
        The squares in the quadtree have sidelengths that are powers of $2$,
	and a square $R_u$ is said to be at \emph{level} $i$ if its sidelength is $2^i$
	(this is also the level of its corresponding node $u$ in the quadtree);
        thus, the root is at level $\log_2 L$
        and tree nodes at level $0$ correspond to the smallest possible squares, of size $1\times 1$.
	To make sure the squares \emph{partition} the underlying space $(0, L]^2$,
	each square is a Cartesian product of two intervals of length $2^i$,
        where each interval is open on the left and closed on the right.
        For instance, the shaded square in \Cref{fig:quadtree} is $(0, 4] \times (0, 4]$.
    Finally, the squares of the entire quadtree are shifted by a uniformly random vector
    $v\in [-\Delta, 0]^2$,
    which is equivalent to shifting all input points by $-v\in[0, \Delta]^2$. 
	Throughout, we assume that the random shift $v$ has been sampled
	from the very beginning,
        and the input stream is independent of it (i.e., oblivious adversary). 
	
	We emphasize that a quadtree square $R$ refers to a continuous square,
	and contains points not from the input $X$.
	For $i = 0, \ldots \log_2{L}$, let the $2^i$-\emph{grid}
	$\mathcal{G}_i \subset \mathbb{R}^2$ be the set of centers of all level-$i$ squares in the quadtree,
        where the center of a square is its mean (which might have half-integral coordinates).

    \subsection{Review of Dynamic Programming for SFP~\cite{DBLP:journals/algorithmica/BateniH12,DBLP:journals/talg/BorradaileKM15}}
    \label{sec:dp_review}

    The (offline) PTAS for geometric SFP,
    due to~\cite{DBLP:journals/algorithmica/BateniH12,DBLP:journals/talg/BorradaileKM15}
    is based on the quadtree framework of Arora~\cite{DBLP:journals/jacm/Arora98}, with modifications tailored to SFP.
For a square $R$ in the (randomly-shifted) quadtree,
    let $\partial R$ be the boundary of $R$ (consisting of four line segments).
    Now for each $R$, 
    \begin{itemize} 
    \item designate $O(\epsilon^{-1}\log{L})$ equally-spaced points on $\partial R$
      as \emph{portals}; and
    \item designate the $\gamma \times \gamma$ subsquares of $R$ (at lower level)
      as the \emph{cells} of $R$, denoted $\cell(R)$,
      where $\gamma = \Theta(\epsilon^{-1})$ is a power of~$2$. 
    \end{itemize}
The following is the main structural theorem from~\cite{DBLP:journals/algorithmica/BateniH12}, see \cref{fig:struct_prop} for illustration.
    \begin{theorem}[\cite{DBLP:journals/algorithmica/BateniH12}]
        \label{thm:dp_struct}
        For an optimal solution $F$ of SFP, there is a solution $F'$ (defined with respect to the randomly-shifted quadtree),
        such that
        \begin{enumerate} 
			\item $w(F') \leq (1 + O(\epsilon)) \cdot w(F)$ with constant probability (over the randomness of the quadtree);
            \item For each quadtree square $R$, $F'_{\; |\partial R}$ has at most $O(\epsilon^{-1})$ components,
            and each component of $F'_{\; |\partial R}$ contains a \emph{portal} of $R$;
            \item For each quadtree square $R$ and each cell $P$ of $R$,
            if two points $x_1, x_2 \in X \cap P$ are connected 
            in $F'$ to $\partial R$, then they are connected in $F'_{\; |R}$;
            this is called the \emph{cell property}.
        \end{enumerate}
    \end{theorem}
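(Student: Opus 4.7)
The plan is to adapt Arora's framework from TSP/Steiner tree to SFP by starting with an optimal solution $F$ and applying three successive modifications, each increasing the cost by at most an $O(\eps)$ factor in expectation over the random shift. The overall structure mirrors the classical portal/patching arguments, but with a new step tailored to Steiner forest for establishing the cell property. Feasibility must be preserved throughout: none of the three steps may connect two color classes that were disconnected in $F$.

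First, I would portal-respect the solution: for every edge of $F$ that crosses the boundary of some quad-tree square, reroute the crossing to the nearest portal. A level-$i$ square has side-length $2^i$ and $O(\eps^{-1} \log L)$ portals per side, so each snap costs $O(\eps \cdot 2^i / \log L)$. By Arora's random-shift expectation, an edge of length $\ell$ contributes $O(\ell / 2^i)$ expected level-$i$ crossings, and summing over levels gives expected total added cost $O(\eps \cdot w(F))$. Second, I would apply the patching step: if after portal-ization some boundary segment has more than $\Theta(\eps^{-1})$ crossings (equivalently, more than that many components of $F'|_{\partial R}$ touching that side), replace all but $O(\eps^{-1})$ of them by short walks along the boundary, using Arora's patching lemma (possibly doubling a few boundary segments to preserve connectivity). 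The added cost on a level-$i$ boundary is $O(\eps \cdot 2^i)$, and the random-shift expectation again yields $O(\eps \cdot w(F))$ in total. This establishes property (2) because each surviving component can be charged to a single portal.

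Third, and SFP-specific, I would enforce the cell property. For each quad-tree square $R$ and each cell $P$ of $R$, let $Y_P \subset X \cap P$ be the terminals currently connected to $\partial R$ via $F'$. Add within $P$ an MST on $Y_P$ plus a single Steiner connection from this MST to $\partial R$ (piggybacking on some existing outgoing edge); this causes all points of $Y_P$ to share a single component of $F'|_R$. The added cost is at most $O(|Y_P|) \cdot \Diam(P) = O(|Y_P|) \cdot 2^i/\gamma$ for a level-$i$ square. Crucially, this does not create new cross-color connections: every $y \in Y_P$ is already in the same connected component of $F'$ (they all reach $\partial R$), so the augmentation only reroutes existing connectivity. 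To bound the added cost, I charge each point $y \in Y_P$ to an outgoing $F$-edge $e_y$ from $P$ towards $\partial R$: since $e_y$ has length at least $\dist(y, \partial P) \ge 0$ and crosses the cell boundary, a careful application of the random-shift analysis at the cell scale (i.e., at the grid of resolution $2^i/\gamma$ inside $R$) shows that the total cell-augmentation cost is $O(\eps)$ times the cost of edges of $F$ crossing cell boundaries, which is $O(\eps \cdot w(F))$.

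The hard part will be Step 3, both the feasibility argument (making sure we only add edges among already co-connected points) and the charging scheme that ties the added cell MSTs back to $w(F)$ via the parameter $\gamma = \Theta(\eps^{-1})$. Once Steps 1--3 are combined, each modification adds expected cost $O(\eps \cdot w(F))$, so by Markov's inequality with constant probability $w(F') \le (1 + O(\eps)) \cdot w(F)$, which is property (1); property (2) follows from patching, and property (3) by construction. The resulting $F'$ retains all color-connectivity constraints since every modification either reroutes or merges edges already inside a common component of $F'$.
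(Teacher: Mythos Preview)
The paper does not prove this theorem; it is quoted from \cite{DBLP:journals/algorithmica/BateniH12} as background in \cref{sec:dp_review}. Your three-step outline (portal-snapping, patching, cell property) does match the architecture of the Borradaile--Klein--Mathieu/Bateni--Hajiaghayi argument. Steps~1 and~2 are fine. Step~3 has one cosmetic error and one genuine gap.

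The cosmetic error is the feasibility claim. You assert that all $y\in Y_P$ lie in the same component of $F'$ ``because they all reach $\partial R$''. That is false: two terminals in the same cell can reach $\partial R$ through disjoint components of $F'$. Fortunately this does not matter for SFP, since feasibility only demands that each color class be connected, and adding edges never destroys that; your stated invariant ``never connect previously disconnected color classes'' is not actually needed.

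The genuine gap is the cost bound. You add an MST on $Y_P\subseteq X\cap P$ and propose to charge each $y\in Y_P$ to an $F$-edge $e_y$ crossing $\partial P$. This fails because arbitrarily many terminals in $P$ can reach $\partial R$ through a single path, yielding only one edge across $\partial P$ to absorb $|Y_P|$ charges; and even when enough crossing edges exist, their lengths can be tiny compared to $\Diam(P)$, so the charges do not pay for the added MST. The fix used in \cite{DBLP:journals/talg/BorradaileKM15,DBLP:journals/algorithmica/BateniH12} is to work with \emph{components}, not terminals: first apply Arora-style patching to the cell-grid lines (not only to square boundaries), so that each cell boundary carries at most $O(\eps^{-1})$ crossings; then, to enforce the cell property, one merges at most $O(\eps^{-1})$ components of $F'|_P$ that touch $\partial P$, at cost $O(\eps^{-1})\cdot \Diam(P)=O(2^i)$ charged against the already-accounted patching at that scale. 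With this change the per-level overcount disappears and the total added cost is $O(\eps)\cdot w(F)$ in expectation, which is what Markov's inequality then converts into property~(1).
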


    \paragraph{Defining DP subproblems.}
    It suffices to find a cheapest solution that satisfies the structure imposed by \cref{thm:dp_struct}.
    This is implemented using dynamic programming (DP), where a subproblem of the DP is identified as
    a tuple $(R, A, f, \Pi)$, specified as follows.\footnote{We follow the ideas from~\cite{DBLP:journals/algorithmica/BateniH12},
      although our notation and details differ slightly for sake of presentation.
    }
    \begin{itemize}  
        \item $R$ is a quadtree square.
        \item $A$ is a set of 
          $O(\epsilon^{-1})$ \emph{active portals}
          through which the local solution enters/exits $R$.
        \item $f : \cell(R) \to 2^A$ such that for all $S \in \cell(R)$, $f(S)$
        represents the active portals that the points inside $S$ connect to.
        We also require that for every two cells $S, S'\in \cell(R)$, either $f(S)=f(S')$ or $f(S)\cap f(S')=\emptyset$,
        and that every active portal in $A$ appears in $f(S)$ for some cell $S$.
        \item $\Pi$ is a partition of $A$, where active portals in each part of $\Pi$
		have to be connected outside of $R$ (in a larger subproblem).
	\end{itemize}
	The use of $R$ and $A$ is immediate, and $f$ is used to capture the
	connectivity between cells and portals (this suffices because we have the ``cell property'' in \cref{thm:dp_struct}).
	The additional requirements for $f$ do not impose real restrictions, but help to rule out $f$'s that are trivially infeasible.
        In particular, for the requirement that $f(S)$ and $f(S')$ are either disjoint or equal,
if these two sets intersect, 
then $S$ and $S'$ are in the same component and necessarily $f(S) = f(S')$.
Finally, $\Pi$ is used to ensure feasibility, since a global connected component may be broken into several local components in square $R$, and
	it is crucial to record whether or not these local components still need to be connected from outside of $R$.

	In fact, $f$ defines a partition of $\cell(R)$ and of $A$ into local components inside $R$ (taking into account only components connected to $\partial R$), and $\Pi$ should encode
	which local components need to be connected from the outside of $R$, implying that $\Pi$ should be a partition of local components
	(instead of $A$). Thus, $\Pi$ can be also thought of as a grouping of the parts in the partition of $A$ induced by $f$.

	An optimal solution for subproblem $(R, A, f, \Pi)$ is
	defined as a minimum weight metric graph in $R$
	that satisfies the constraints $A, f, \Pi$.
Standard combinatorial bounds show that the number of subproblems associated with each square is bounded by $(\epsilon^{-1}\log \Delta)^{O(\epsilon^{-2})}$ (see \cite{DBLP:journals/algorithmica/BateniH12}).

\paragraph{Evaluating DP subproblems.}
	Given a subproblem $(R, A, f, \Pi)$, we briefly describe how to evaluate its value in the offline setting.
The base case is when $R$ is a $1\times 1$ square, and the evaluation of the subproblem is trivial,
	as this subproblem has only a constant number of data points, hence one can brute-force enumerate all its possible solutions in small space and time.
	For the general case, let squares $R_1, \ldots, R_4$ be the four children of $R$.
	We enumerate all the possible subproblems of $R_i$'s, namely, tuples $(R_i, A_i, f_i, \Pi_i)$ for $i = 1, \ldots, 4$,
	and enumerate all possible edge sets of a metric graph $G$ with vertex set $\cup_i A_i$
        that connects active portals of child squares from outside (but still in $R$).
	This graph $G$, together with the information from tuples $(A_i, f_i, \Pi_i)$,
	defines how cells from $\cup_i \cell(R_i)$ are connected to the active portals in $\cup_i A_i$,
	and this can be used to deduce how the larger cells in $\cell(R)$ are connected to $A$.
	We check if this connectivity is compatible with $(A, f, \Pi)$, 
	particularly if it is consistent with the partition induced by $f$,
	and if components that are required to be connected outside the $R_i$'s  
	(according to the $\Pi_i$'s) are either already connected by $G$,
	or are still required to be connected outside $R$ (according to $\Pi$).
	The DP value for $(R, A, f, \Pi)$ is then taken to be the minimum,
	of the sum of the values of the subproblems and the weight of $G$,
	provided they are compatible with $(R, A, f, \Pi)$.
	This entire enumeration and compatibility checking process can be done
	in $(\epsilon^{-1} \log \Delta)^{\poly(\epsilon^{-1})}$ time and space.

To obtain the final solution, one should take the subproblem defined on the root square (which bounds the entire input $X$), 
	and require no active portals, hence $f$ is trivial and the constraint $\Pi$ is empty.

 \section{Streaming Dynamic Programming: $k^3$ Query Time and Space}\label{sec:dp_made_streaming}
	
In this section, we prove our main result, Theorem~\ref{thm:main-precise}, restated below with more precise bounds.
Formally, the \emph{update time} is the time bound for processing the insertion/deletion of one point,
and the \emph{query time} is that for reporting an estimate of $\OPT$.
	
\thmmainprecise*

\paragraph{Overview.}
Our approach for the streaming algorithm relies on a novel modification of the known PTAS for SFP in the offline setting \cite{DBLP:journals/algorithmica/BateniH12, DBLP:journals/talg/BorradaileKM15}, which is based on dynamic programming (DP). One important reason why the DP requires $\Omega(n)$ space is that $\Omega(n)$ leaves in the quadtree have to be considered as basic subproblems which correspond to singletons. To make the DP use only $\poly(k\cdot \log \Delta)$ space, we will use only $O(k\cdot \log \Delta)$ leaf nodes. Then, since each internal node in the quadtree has degree~$4$, the total number of squares to consider is $O(k\cdot \log \Delta)$. Furthermore, we design an algorithm that runs in time and space $\poly(k\cdot \log \Delta)$ and finds an $(\alpha_2 + \epsilon)$-approximate estimation for each new leaf and each DP subproblem associated with it. Finally, we apply the DP using such leaves as basic subproblems to obtain the estimation.
We start in \cref{sec:offline} with a description of this approach in the offline setting, and then present its streaming implementation in \cref{sec:streaming_simp_square,sec:streaming_comp_check,sec:streaming_dp}.
We then complete the proof of Theorem~\ref{thm:main-precise} in Section~\ref{sec:proof_kd}.

    \subsection{Offline Algorithm}
    \label{sec:offline}
    \paragraph{New definition of basic subproblems.}
    Each of our new leaves in the DP will be a \emph{simple square}
    defined below.
    The idea behind the definition is also simple: If no color class is contained in $R$, then all data points inside $R$ must be connected to $\partial R$,
    so we can make better use of the cell property in Theorem~\ref{thm:dp_struct}.

    \begin{definition}[Simple squares]
        \label{def:simple_sq}
        We call a square $R$ \emph{simple}
        if for every $1 \leq i \leq k$, $C_i \cap R \neq C_i$.
        In other words, there is no color class totally contained in $R$.
    \end{definition}

    We note that the number of all possible simple squares can still be large (in particular, any empty square is simple as well as any square containing a single point of color $C_i$ with $|C_i|\ge 2$).
    We use \cref{lemma:leaves} below to show the existence
	of a small subset of simple squares that covers the whole instance and can be found efficiently;
	these are in fact the maximal simple squares in the quadtree.
	Our new leaves are naturally defined using such subset of squares.

    \begin{lemma}
        \label{lemma:leaves}
        There is a subset $\mathcal{R}$ of disjoint simple squares,
        such that the union of the squares in $\mathcal{R}$
        covers $X$, and $|\mathcal{R}| = O(k\cdot \log \Delta)$.
    \end{lemma}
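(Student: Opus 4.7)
The plan is to construct $\mathcal{R}$ by a top-down recursion on the randomly-shifted quad-tree from \cref{sec:prelim}. Beginning at the root, whose square contains all of $X$, we test whether the current square $R$ is simple in the sense of \cref{def:simple_sq}: if so, we add $R$ to $\mathcal{R}$ and stop; otherwise we subdivide $R$ into its four quad-tree children and recurse on each. By construction the output $\mathcal{R}$ is a family of pairwise disjoint simple squares whose union contains the root and hence covers $X$, so the content of the lemma reduces to bounding $|\mathcal{R}|$ (plus verifying termination).

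For the cardinality bound I would introduce the $4$-ary subtree $T$ of the quad-tree traced out by the recursion, whose internal nodes are exactly the non-simple squares we chose to subdivide and whose leaves are exactly the squares in $\mathcal{R}$. Since $T$ is $4$-ary, $|\mathcal{R}| \leq 3\,|V_{\mathrm{int}}(T)|+1$, so it suffices to bound $|V_{\mathrm{int}}(T)|$. For each node $R$ of $T$ define $\mathcal{F}(R) := \{i\in[k] : C_i \subseteq R\}$. By \cref{def:simple_sq}, $R$ is non-simple iff $\mathcal{F}(R)\neq \emptyset$, so every internal node of $T$ admits at least one fully-contained color class.

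The key observation, which drives the counting, is that for every fixed color $i\in[k]$ the set $\{R\in T : i\in \mathcal{F}(R)\}$ is a single root-to-descendant chain in $T$: when a non-simple square $R$ is subdivided into children $R_1,\dots,R_4$ that partition $R$, any class $C_i\subseteq R$ is fully contained in at most one child (all other children either miss $C_i$ entirely or contain only a proper subset of it). Combined with the $O(\log\Delta)$ depth of the quad-tree, this yields
\begin{equation*}
|V_{\mathrm{int}}(T)| \;\leq\; \sum_{i=1}^{k}\bigl|\{R\in V_{\mathrm{int}}(T) : i\in\mathcal{F}(R)\}\bigr| \;=\; O(k\log\Delta),
\end{equation*}
and therefore $|\mathcal{R}| = O(k\log\Delta)$, as required.

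The only genuine obstacle is termination (and here lies the one subtlety I would flag). A ``degenerate'' class $C_i$ all of whose points coincide at a single location $p$ can never be split by a subdivision, so the recursion along the quad-tree path through $p$ would descend all the way to a single-point square that is still non-simple. I would handle this by a trivial preprocessing step: any color class whose points all coincide contributes $0$ to $\OPT$ and may be discarded, so we may assume each surviving class has at least two distinct points. Then at the bottom level of the quad-tree every square contains at most one distinct point and therefore cannot fully contain any surviving class, forcing the recursion to stop at a simple (possibly empty) square and making the entire argument go through.
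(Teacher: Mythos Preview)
Your proof is correct and follows essentially the same approach as the paper: both construct $\mathcal{R}$ via the same top-down recursion (the paper's \cref{alg:simpleSquares}) and bound $|\mathcal{R}|$ by counting the non-simple ``intermediate'' squares, using that each color class $C_i$ is fully contained in at most $O(\log\Delta)$ such squares along a root-to-leaf chain. Your treatment is in fact more explicit than the paper's, which neither spells out the chain argument nor addresses the termination issue for degenerate single-point color classes.
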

    \begin{proof}
        Consider the recursive procedure specified in Algorithm~\ref{alg:simpleSquares} that takes as input a square $R$ and returns a set of disjoint simple squares $\mathcal{R}$ that covers $R$; see \cref{fig:simple_square} for an illustration of the outcome of the procedure.
        For our proof, we apply the procedure with $R$ being the root square covering the whole instance.
        Suppose the procedure returns $\mathcal{R}$.
        \begin{algorithm}[ht]
            \caption{Algorithm for finding simple squares}
            \label{alg:simpleSquares}
            \begin{algorithmic}[1]
                \Procedure{Simp-Square}{$R$}
                    \If{$R$ is simple}
                            {return $\{R\}$}
\Else
                        \State let $\{R_i\}_i$ be the child squares of $R$ in the quadtree
                        \State \textbf{return} $\bigcup_{i}\textsc{Simp-Square}(R_i)$
                    \EndIf
                \EndProcedure
            \end{algorithmic}
        \end{algorithm}
    	
        We call a square $R$ intermediate square if it is a square visited
        in the execution of the algorithm and it is not simple (i.e., $R$ contains a color class).
        We observe that $|\mathcal{R}|$ is $O(1)$ times the number of intermediate squares.
        On the other hand, each color $C_i$ can be totally contained in
        at most $O(\log \Delta)$ intermediate squares.
        Therefore, $|\mathcal{R}| = O(k\cdot \log \Delta)$.
    \end{proof}

    \paragraph{Approximation algorithm for subproblems on simple squares.}
    Fix some simple square $R$. We now describe how each DP subproblem $(R, A, f, \Pi)$ associated with $R$
    can be solved directly using an $\alpha_2$-approximate algorithm that is amenable to the streaming setting.

        Since $R$ is a simple square, every point in $R$ has to be connected
        to the outside of $R$, as otherwise the color connectivity constraint is violated.
        Hence, by the cell property of \cref{thm:dp_struct}, for every cell $R' \in \cell(R)$,
        all points in $R'$ are connected in $R$.
Therefore, we enumerate all possible partitions of $\cell(R)$ that is consistent with the $f$ constraint,
i.e., that for each part of the partition, $f$ maps all
        	cells in the part to the same set of portals.
For each partition, we further check whether it satisfies the constraint defined by $\Pi$.
        To do so, for each cell $R' \in \cell(R)$, we scan through all
        colors, and record the set of colors $\mathcal{C}_{R'} \subseteq \mathcal{C}$ that intersects $R'$.
        The $\mathcal{C}_{R'}$'s combined with the $f$ constraint as well as the enumerated connectivity between cells suffice for checking the $\Pi$ constraint.
		In more detail, if there are two cells $R'$ and $R''$ in different parts (of the partition of cells) such that 
       	both $R'$ and $R''$ contain a point of one color, then there must be two portals $a'$ and $a''$
       	with $a'\in f(R')$ and $a''\in f(R'')$ that are in the same part
       	of the partition $\Pi$.

        Observe that every feasible solution of the subproblem corresponds to
        the above-mentioned partition of cells.
        Therefore, to evaluate the cost of the subproblem,
        we evaluate the sum of the MST costs of the parts in each partition
        and return the minimum one.
The time complexity of evaluating each subproblem is bounded since $|A| = O(\epsilon^{-1})$ and $|\cell(R)| = O(\epsilon^{-2})$.
        The approximation ratio is $\alpha_2$ because we use MST instead of Steiner tree for evaluating the cost.
        Using MST will enable us to implement this algorithm in the streaming setting.

	\subsection{Constructing Simple Squares in Streaming}
	\label{sec:streaming_simp_square}
	The first step is to construct a set of simple squares,
	as in \cref{lemma:leaves}, and an offline construction is outlined in \cref{alg:simpleSquares}.
	For the streaming construction of simple squares,
	we observe that the key component of \cref{alg:simpleSquares} is
	a subroutine that tests whether a given square is simple or not.
	To implement this subroutine, we compute the smallest bounding quadtree square for each color, and we test whether a given square contains any bounding square as a subsquare (observing that the given square is simple if and only if it does not contain a bounding square of any color). Specifically, we make use of a streaming sparse recovery structure (see \Cref{lemma:streaming_recover})
		to find the minimal bounding square for each color.
	\begin{lemma}
		\label{lemma:streaming_simp_square}
		\cref{alg:simpleSquares} can be implemented in the streaming setting,
		using space $O(k \polylog \Delta)$ and in time $O(k \polylog \Delta)$ per stream update, with success probability at least $1 - \poly(\Delta^{-1})$.
	\end{lemma}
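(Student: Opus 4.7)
The plan is to implement the subroutine ``is $R$ simple?'' in the dynamic streaming setting and then simulate \cref{alg:simpleSquares} at query time. A square $R$ is simple iff no color class $C_i$ is entirely contained in $R$. The key observation is that for a quad-tree square $R$, we have $C_i \subseteq R$ iff the smallest enclosing quad-tree square $Q_i$ of $C_i$ satisfies $Q_i \subseteq R$ (this uses the laminar structure of quad-tree squares). Thus the streaming task reduces to maintaining, for each color $i$, the four extreme coordinates $(\min_x, \max_x, \min_y, \max_y)$ of $C_i$ (which determine $Q_i$) under the turnstile stream of point insertions and deletions.

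For each color $i$, I would maintain standard turnstile linear sketches supporting extremal-coordinate queries. Concretely, for each axis and each dyadic level $\ell \in \{0, \ldots, \log \Delta\}$, maintain an $\ell_0$-estimator together with a $1$-sparse recovery sketch over the vector of color-$i$ point multiplicities in the $2^\ell$-length dyadic intervals, and perform a top-down binary search to locate the extreme coordinates with high probability. Each such sketch uses $\poly \log \Delta$ space and is updatable in $\poly \log \Delta$ time. Since each stream update touches only the sketches for the single color involved, the update time is $\poly \log \Delta$ and the total space is $O(k \cdot \poly \log \Delta)$. At query time I would simulate \cref{alg:simpleSquares} from the root square: by the argument in the proof of \cref{lemma:leaves}, only $O(k \log \Delta)$ squares are ever visited by the recursion, and each simplicity test takes $O(k)$ time once the $Q_i$'s have been extracted from the sketches.

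The main obstacle is handling deletions: a naive direct maintenance of extremal coordinates breaks as soon as the extreme-coordinate point itself is deleted, and we cannot afford to store the full point set. I bypass this by using linear sketches, which natively support arbitrary insertions and deletions at the cost of a controllable failure probability. To achieve overall success probability $1 - \poly(\Delta^{-1})$, I would set each individual sketch's failure probability to $\poly(\Delta^{-1})/(k \log \Delta)$ via standard independent-repetition amplification, and then union-bound over all sketches and over all simplicity tests performed during \cref{alg:simpleSquares}; this increases per-sketch space by only a $\poly(\log k, \log \log \Delta)$ factor, which is absorbed into the $\poly \log \Delta$ bound.
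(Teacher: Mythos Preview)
Your approach coincides with the paper's: compute, for each color $i$, the minimal enclosing quad-tree square $Q_i$ via per-level $\ell_0$/sparse-recovery sketches, declare $R$ simple iff no $Q_i \subseteq R$, and simulate \cref{alg:simpleSquares} at query time. The paper carries this out in 2D (one sketch per quad-tree level over the grid $\mathcal{G}_j$, using \cref{lemma:streaming_recover} with $T=1$) rather than per axis, but the content is the same.

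There is one small slip in your implementation detail. You say you will ``perform a top-down binary search to locate the extreme coordinates'' using only an $\ell_0$-estimator and a $1$-sparse recovery sketch at each dyadic level. That search cannot be carried out as stated: once you descend below the root you must decide, for a \emph{specific} child interval of your current interval, whether it is non-empty, but your per-level sketch only reports the global $\ell_0$ count (and identifies the unique nonzero interval when that count is $1$); it cannot answer emptiness queries on particular intervals when several are non-empty. Fortunately you do not actually need the extreme coordinates. The sketches you describe already suffice to compute $Q_i$ directly by the paper's method: find the smallest level at which the $\ell_0$ count equals $1$ and use $1$-sparse recovery to read off that unique cell. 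Since $Q_i$ is all the simplicity test requires, your argument goes through once this step is swapped in.
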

	
	\begin{proof}
		We show that for a point set $S \subseteq [\Delta]^2$ and a quadtree decomposition,
		the smallest quadtree square that contains $S$ as a subset can be computed
		in the dynamic streaming setting.
		\begin{algorithm}[t]
			\caption{Finding the minimal enclosing quadtree square}
			\label{alg:min_square}
			\begin{algorithmic}[1]
				\Procedure{Min-Square}{$S \subseteq [\Delta]^2$}
					\Comment{$S$ presented as a dynamic point stream}
					\State for each $1 \leq i \leq \log_2{L}$,
					maintain sketch $\mathcal{K}_i$ of \cref{lemma:streaming_recover} for the $2^i$-grid $\mathcal{G}_i$, with parameter $T = 1$
					\For{insertion/deletion of point $x$ in the stream}
						\For{$i \gets 0, \ldots, \log_2{L}$}
							\State let $y \in \mathcal{G}_i$ be the grid point
							for the level-$i$ square that $x$ belongs to

							\Comment{recall that grid points in $\mathcal{G}_i$ are centers of level-$i$ squares}
							\State increase/decrease the frequency of $y$ by
							$1$ in sketch $\mathcal{K}_i$
						\EndFor
					\EndFor
					\Comment{the stream terminates, and the computing phase starts}
					\State find the least $i$ where
					$\mathcal{K}_i$ returns exactly one element,
					and \textbf{return} the corresponding square
				\EndProcedure
			\end{algorithmic}
		\end{algorithm}
		Namely, we use \cref{alg:min_square}. Next, we argue that this algorithm indeed finds the minimal enclosing quadtree square for a point set $S$,
w.h.p.
		Suppose $R$ is the minimal quadtree square that contains $S$, and suppose $R$ is of level $i$.
		Then all points in $S$ correspond to the same grid point $y$ in $\mathcal{G}_i$, so $\mathcal{K}_i$ returns exactly square $R$ w.h.p.
		On the other hand, for any level $i' < i$, sketch $\mathcal{K}_{i'}$
		either reports that the number of non-empty grid squares is larger than $2$,
		or returns 2 squares w.h.p. Hence
\cref{alg:min_square}
		finds
the minimal enclosing square of $S$ w.h.p.

		We apply \cref{alg:min_square} for each color $C_i \in \mathcal{C}$.
		After the stream ends, we get the bounding squares $\{R_i \}_i$ for colors $\{C_i\}_i$,
and we execute \cref{alg:simpleSquares} using $\{R_i\}$.
		In particular, to test whether a square $R$ is simple or not,
		it suffices to scan through all the bounding squares $\{R_i\}_i$,
		and $R$ is simple if and only if $R$ does not contain any $R_i$ as a subsquare
		(note that if $R = R_i$ for some $i$, then $R$ is not simple).
By the guarantee of \cref{lemma:streaming_recover},
		the space is bounded by $O(k \polylog \Delta)$,
		and the overall success probability is at least $1 - \poly(\Delta^{-1})$.
\end{proof}

	\subsection{Approximate Compatibility Checking in Streaming}
	\label{sec:streaming_comp_check}
	Suppose we applied \cref{lemma:streaming_simp_square} to obtain a set of
	simple squares $\mathcal{R}$.
	We proceed to evaluate the cost of subproblems associated with each simple square.
	Fix a simple square $R \in \mathcal{R}$.
	We next describe how to evaluate the cost for every subproblem associated with $R$, in streaming.

	Suppose we are to evaluate the cost of a subproblem $(R, A, f, \Pi)$.
	Since $R$ is known, we have access to $\cell(R)$, and hence, we can enumerate
	the connectivity between the cells, which is a partition of $\cell(R)$, on-the-fly without maintaining other information about the input.
	(Note that, by \Cref{thm:dp_struct}, there are at most $O(\eps^{-1})$ components inside $R$, 
	so we only consider partitions of $\cell(R)$ with at most $O(\eps^{-1})$ parts.)
	We can also check the compatibility of the partition of cells with the $f$ constraint,
	since the constraint only concerns the information about $A$ and the partition.
	Then, to check the compatibility of the partition of cells with $\Pi$, 
	we need to compute the set of colors $\mathcal{C}_{R'} \subseteq \mathcal{C}$ that a cell $R'$ intersects.

	However, computing this set $\mathcal{C}_{R'}$ is difficult in the streaming setting, even
	if there is only one color $C$. Indeed, testing whether color $C$ has an intersection with cell $R'$
	can be immediately reduced to the INDEX problem (see e.g.\ \cite{KushilevitzNisan97} or \cref{sec:lb} for the definition),
	which implies an $\Omega(n)$ space bound, where $n$ is the number of points of color $C$.
	Therefore, we need to modify the offline algorithm, and only test the intersection approximately.
	
	To implement the approximate testing, we use the following process to add some new points to the input and obtain an augmented instance $(X', \mathcal{C}')$.
	Specifically, for every color $C \in \mathcal{C}$,
	we impose a $\delta \cdot \Diam(C)$-covering $N_C$ for $C$
	(see \Cref{sec:prelim} for the definition of covering), 
	where $\delta := O(\epsilon^3 (k\log \Delta)^{-1})$.
	Hence, using $D_C := \Diam(N_C)$, we have that $D_C \in (1 \pm \delta) \cdot \Diam(C)$.
	Then, for each cell $R' \in \cell(R)$ of each simple square $R$, we examine each point in $N_C$,
	and if $\dist(R', N_C) \leq \delta \cdot D_C$, we add a new point $x \in R'$
	such that $\dist(x, N_C) \leq \delta \cdot D_C$ to the stream,
	and assign it color $C$.
	Furthermore, we declare $C$ intersects $R'$.
	Crucially, when we need to add a new point for some other color $C'$ but for the same cell $R'$,
	we need to add the same point $x \in R'$. This is to ensure the extra cost can be efficiently charged to $\OPT$,
	since otherwise the total added cost for one cell could be as large as $\delta \sum_C \Diam(C)$, while we only have $\OPT \geq \max_C \Diam(C)$ as a lower bound for $\OPT$.

In \cref{lemma:streaming_net}, we present a streaming algorithm to compute the covering $N_C$ mentioned above.
	Roughly speaking, this algorithm makes use of a sparse recovery data structure (\Cref{lemma:streaming_recover}) to find the grid points in squares of side-length $\approx \delta \cdot \Diam(C)$.
	Only the grid points that are within the target distance $\delta \cdot \Diam(C)$ are kept and recovered,
	and this is crucially used to bound the size of the returned covering and thus the space complexity.

	\begin{lemma}
		\label{lemma:streaming_net}
		There is an algorithm that for every $0 < \delta \leq 1$ and
		every point set $S \subset \mathbb{R}^2$ provided as a dynamic geometric stream,
		computes a subset $N_S \subset \mathbb{R}^2$ that is an $O(\delta) \cdot \Diam(S)$-covering for $S$,
with probability at least $1 - \poly(\Delta^{-1})$,
		using space $O(\delta)^{-2} \cdot \polylog \Delta$,
		and running in time $O(\delta)^{-2} \cdot \polylog \Delta$ per stream update.
	\end{lemma}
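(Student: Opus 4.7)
The plan is to maintain sparse-recovery sketches at $O(\log \Delta)$ scales in parallel, so that at query time one can select the scale matching $\Diam(S)$ and read off a bounded number of representative cells. For each level $j \in \{0,1,\dots,\lceil\log_2(2\Delta)\rceil\}$, impose an axis-aligned grid $\mathcal{H}_j$ of side $s_j := \rho \cdot 2^j / C$ for a sufficiently large absolute constant $C$, and instantiate a sketch $\mathcal{K}_j$ from \cref{lemma:streaming_recover} of capacity $T := O(\rho^{-2})$ whose coordinates are indexed by the cells of $\mathcal{H}_j$. On each insertion/deletion of a point $x$, update $\mathcal{K}_j$ at the cell containing $x$, for every level $j$. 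Each sketch uses $O(T \cdot \polylog\Delta) = O(\rho^{-2}) \cdot \polylog\Delta$ space, and there are $O(\log \Delta)$ sketches, matching the claimed overall space and per-update time.

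At query time, scan $j$ from smallest to largest and let $j^*$ be the smallest level for which $\mathcal{K}_{j^*}$ successfully returns its list of non-empty cells (i.e., the non-empty count is at most $T$). The set of non-empty cell centers at level $j$ forms an $s_j$-packing contained in a square of side at most $\Diam(S) + s_j\sqrt{2}$, so by \cref{fact:packing} it has size at most $(3(\Diam(S)+s_j\sqrt{2})/s_j)^2$. Hence recovery succeeds as soon as $2^j = \Omega(\Diam(S))$, and, since $j^*-1$ failed, also $2^{j^*} = O(\Diam(S))$, so $s_{j^*} = \Theta(\rho\Diam(S))$; the constant $C$ can be chosen so that $s_{j^*} \le \rho \Diam(S)$.

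To build $N_S$, first extract from $\mathcal{K}_{j^*}$ the list of non-empty cell centers, and then greedily thin it by retaining a center only if it is at distance $\ge \rho\Diam(S)$ from every previously retained one. By construction $N_S$ is a $\rho\Diam(S)$-packing. For the covering property, each $x \in S$ lies in some non-empty cell whose center $y$ is within $s_{j^*}\sqrt{2}/2 \ll \rho\Diam(S)$ of $x$, and by the maximality of the thinning $y$ is within $\rho\Diam(S)$ of some $z \in N_S$; this yields $\dist(x,z) \le (1+o(1))\rho\Diam(S)$, and any leftover constant slack can be absorbed by running the algorithm with a slightly smaller internal parameter $\rho' < \rho$. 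The side condition that every $z\in N_S$ has a point of $S$ within $\rho\Diam(S)$ holds because such $z$ is the center of a cell declared non-empty by $\mathcal{K}_{j^*}$, and any point witnessing non-emptiness belongs to $S$. Finally, a union bound over the $O(\log\Delta)$ sketches keeps the total error probability at $\poly(\Delta^{-1})$.

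The main technical obstacle is quantifying precisely the interplay between the cell side $s_{j^*}$ (which controls covering), the packing threshold imposed by the greedy thinning, and the discrete slack between $2^{j^*}$ and $\Diam(S)$, in order to eliminate the constant factors and deliver a true $\rho\Diam(S)$-net rather than only an approximate one. The sparse-recovery primitive of \cref{lemma:streaming_recover} handles the rest in a fairly black-box way, and the overall structure closely parallels the multi-scale use of recovery sketches already employed for \textsc{Min-Square} in \cref{lemma:streaming_simp_square}, just with capacity $T = O(\rho^{-2})$ rather than $T=1$.
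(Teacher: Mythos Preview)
Your approach is essentially the same as the paper's: maintain sparse-recovery sketches (\cref{lemma:streaming_recover}) at $O(\log\Delta)$ grid scales with capacity $T=O(\rho^{-2})$, then at query time pick the coarsest scale at which recovery succeeds and output the non-empty cell centers. The paper uses the quad-tree grids $\mathcal{G}_i$ of side $2^i$ directly (with $T=(3\rho^{-1})^2$) and returns all recovered centers without any thinning, whereas you parameterize the grid side as $\rho\,2^j/C$ and add a greedy thinning pass; these are cosmetic differences, and both versions are equally loose about the exact constant in ``$\rho\Diam(S)$-net'' (the paper simply says the output is ``a net that can only be finer''). Your write-up is in fact more careful than the paper's in flagging this constant-factor slack explicitly.
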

	\begin{proof}
		We give the procedure in \cref{alg:streaming_net}. It makes use of \cref{lemma:streaming_recover} in a way similar to \cref{alg:min_square}.
		\begin{algorithm}[t]
			\caption{Streaming algorithm for constructing the covering}
			\label{alg:streaming_net}
			\begin{algorithmic}[1]
				\Procedure{Find-Covering}{$S \subseteq [\Delta]^2, \delta \in (0, 1]$}
					\Comment{$S$ presented as a dynamic point stream}
					\State for each $1 \leq i \leq \log_2{L}$, maintain sketch $\mathcal{K}_i$ of \cref{lemma:streaming_recover}
					for the $2^i$-grid $\mathcal{G}_i$, with parameter $T = (3\delta^{-1})^{2}$
					\For{insertion/deletion of point $x$ in the stream}
						\For{$i \gets 0, \ldots, \log_2{L}$}
							\State let $y \in \mathcal{G}_i$ be the grid point for the level-$i$ square that $x$ belongs to
							\State increase/decrease the frequency of $y$ by $1$
							in sketch $\mathcal{K}_i$
						\EndFor
					\EndFor
					\Comment{the stream terminates, and the computing phase starts}
					\State find the least $i$ such that $\mathcal{K}_i$ reports the number of elements is $\leq 2T$
					\State \textbf{return} the $\leq 2T$ grid points in $\mathcal{G}_i$ that $\mathcal{K}_i$ reports
				\EndProcedure
			\end{algorithmic}
		\end{algorithm}
		Observe that the only source of randomness is the sparse recovery structure (\Cref{lemma:streaming_recover}),
		and that the parameters are set so that we can reach the target failure probability bound conditioning on the success of all the sparse recovery structures.
		Below, condition on the success of all the sparse recover structure.

		In \Cref{alg:streaming_net}, for the iterations with $i$ such that $2^i \geq \delta \cdot \Diam(C)$,
		every two $2^i$-grids have distance at least $\delta \cdot \Diam(C)$,
		so by \Cref{fact:packing}, the sparse recovery would find at most $T$ points (conditioning on success).
		Thus, the algorithm would return a set of grid points $W$
		with $|W| \leq 2T$ and that each point $x \in S$ has a point in $W$ within distance $O(\delta) \cdot \Diam(C)$.
		This finishes the proof.
	\end{proof}

	In fact, such procedure of adding points is oblivious to the subproblem,
	and should be done only once as a \emph{pre-processing step} before evaluating any subproblems.
	Therefore, the subproblems are actually evaluated on a new instance
	$(X', \mathcal{C}')$ after the pre-processing.
	Since we apply \cref{lemma:streaming_net} for every color $i$,
	and by the choice of $\delta$, the space complexity for the pre-processing
	step is $O\left(k^3\cdot \poly(\epsilon^{-1} \log \Delta)\right)$,
	and the time complexity per update is bounded by this quantity.
Next, we upper bound the error introduced by the new instance.
	\begin{lemma}
		\label{lemma:error_new_inst}
		Let $\OPT$ be the optimal SFP solution for the original instance $(X, \mathcal{C})$,
		and let $\OPT'$ be that for $(X', \mathcal{C}')$.
		Then $w(\OPT) \leq w(\OPT') \leq (1 + \epsilon) \cdot w(\OPT)$.
	\end{lemma}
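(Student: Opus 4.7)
The plan is to prove the two inequalities separately. The lower bound $w(\OPT) \leq w(\OPT')$ is immediate by monotonicity: the pre-processing of \cref{sec:streaming_comp_check} only \emph{inserts} points, and every inserted point is assigned one of the \emph{existing} colors in $\mathcal{C}$, so $X \subseteq X'$ and $C_i \subseteq C_i'$ for every $i$. Consequently any Euclidean graph that connects each enlarged class $C_i'$ automatically connects each original class $C_i$; applying this to $\OPT'$ yields a solution feasible for $(X, \mathcal{C})$ of cost $w(\OPT')$, hence $w(\OPT) \leq w(\OPT')$.

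For the upper bound $w(\OPT') \leq (1+\eps)\, w(\OPT)$ I would start from an optimal solution $F$ for $(X, \mathcal{C})$ and augment it into a feasible $F^*$ for $(X', \mathcal{C}')$ by attaching each inserted point to the existing $C$-component through a single short edge. Fix an inserted point $x$ of color $C$ placed into a cell $R' \in \cell(R)$ with $R \in \mathcal{R}$. The construction guarantees $\dist(x, N_C) \leq \delta D_C$, and the net guarantee of \cref{lemma:streaming_net} supplies, for the nearest net point $y \in N_C$, a witness $z \in C$ with $\dist(y, z) \leq \delta\,\Diam(C) \leq 2\delta D_C$ (using $D_C \in (1\pm\delta)\Diam(C)$); the triangle inequality then gives $\dist(x, z) \leq 3\delta D_C$. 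Adding the edge $\{x, z\}$ attaches $x$ to the $C$-component of $F$, and iterating over all added points produces a feasible $F^*$ with $w(F^*) - w(F) = \sum_{x} \dist(x, z_x)$.

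The remaining work is to bound this additive overhead by $\eps\cdot w(\OPT)$. Each attachment edge has length at most $3\delta D_C \leq 6\delta \Diam(C)$. The number of added points of color $C$ is at most the number of cells $R' \in \cell(R)$, $R \in \mathcal{R}$, that lie within $\delta D_C$ of $N_C$; a global count over all colors and all $(R,R')$ pairs gives at most $k \cdot |\mathcal{R}| \cdot |\cell(R)| = O(k^2 \log\Delta \cdot \eps^{-2})$ such events. Plugging in the chosen $\delta = \Theta(\eps^3 (k\log\Delta)^{-1})$ and using $\Diam(C) \leq w(\OPT)$ for each $C$, a careful color-by-color summation---charging the $D_C$-weighted cell count against the portion of $F$ that services $C$ in the neighborhood of $N_C$, rather than against $\max_C \Diam(C)$ or $\sum_C \Diam(C)$ directly---produces total overhead $\leq \eps \cdot w(\OPT)$, as desired.

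\textbf{Main obstacle.} The principal difficulty is the final cost accounting. A naive charging that multiplies the maximum per-edge length $O(\delta \max_C \Diam(C))$ by the crude count $O(k^2 \log \Delta\,\eps^{-2})$ of added points, or that uses only the trivial bound $\sum_C \Diam(C) \leq k\cdot w(\OPT)$, loses an additional factor of $k$ and would yield only an $O(k\eps)\cdot w(\OPT)$ bound. Obtaining the clean $(1+\eps)$ factor requires exploiting the geometric locality of each net $N_C$---it hugs $C$ at scale $\delta\Diam(C)$, so cells that trigger an insertion for color $C$ must lie near $C$---and charging these attachment edges locally against the Steiner subgraph of $F$ that already services $C$ there. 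This local charging is where I expect the bulk of the technical care to lie.
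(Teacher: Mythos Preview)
Your lower-bound argument is exactly the paper's. Your upper-bound construction (attach each new point by a short edge to its color class) is also the paper's. Where you diverge is in the final accounting, and you are making it much harder than necessary.

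You count $(\text{cell},\text{color})$ pairs, obtain $O(k^2\log\Delta\cdot\eps^{-2})$ events, and then correctly observe that multiplying by $O(\delta\max_C\Diam(C))$, or using $\sum_C\Diam(C)\le k\cdot w(\OPT)$, loses a factor of $k$. Your proposed remedy---a ``local charging'' of the attachment edges against the portion of $F$ servicing $C$ near $N_C$---is not spelled out, and in fact no such geometric argument is needed.

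The paper's accounting is essentially one line. By construction at most \emph{one distinct point} $x$ is inserted per cell $R'$ (the same geometric location is reused for every color that triggers an insertion in $R'$), so the number of inserted points is at most the number of cells, which by \cref{lemma:leaves} is $O(k\log\Delta\cdot\eps^{-2})$---not $O(k^2\log\Delta\cdot\eps^{-2})$. Each attachment edge has length $O(\delta)\cdot\max_C\Diam(C)$, and crucially $\max_C\Diam(C)\le w(\OPT)$ since any feasible SFP solution must span every color class. Multiplying,
\[
O(k\log\Delta\cdot\eps^{-2})\cdot O(\delta)\cdot\max_{C}\Diam(C)
\;\le\; O(\eps)\cdot w(\OPT),
\]
using $\delta=O\bigl(\eps^3(k\log\Delta)^{-1}\bigr)$. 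No color-by-color summation and no local charging are involved; the single inequality $\max_C\Diam(C)\le w(\OPT)$ already absorbs the $k$ you were worried about. Your ``main obstacle'' arises only because you counted insertion \emph{events} rather than inserted \emph{points}.
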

	\begin{proof}
		Since $\OPT'$ is a feasible solution for $(X, \mathcal{C})$,
		we obtain $w(\OPT) \leq w(\OPT')$ by the optimality of $\OPT$.
		It remains to prove the other inequality.

		Recall that for every color $C$,
		we use \cref{lemma:streaming_net} to obtain a $\delta \cdot \Diam(C)$-covering $N_C$ and
		estimate $\Diam(C)$ using $D_C := \Diam(N_C)$.
		Now, for every cell $R'$ of every simple square, if $\dist(R', N_C) \leq \delta \cdot D_C$
		for some color $C$
we add a point $x$ to $C$
		satisfying $d(x, N_C) \leq \delta \cdot D_C$,
		and for any other color $C'\neq C$ with $\dist(R', N_{C'}) \leq \delta \cdot D_{C'}$,
		we add the same point $x$ to $C'$. Note that we add at most one distinct point for each cell.
		For an added point $x$, let $z\in N_C$ satisfy $d(x, z)\leq \delta \cdot D_C$.
		Then adding $x$ increases $\OPT$ by at most $2\delta \cdot D_C \leq 3\delta \cdot \Diam(C)$,
		since one can connect $x$ to $y \in C$ such that
		$\dist(y, z) \leq \delta \cdot \Diam(C)$ (such $y$ must exist due to \cref{lemma:streaming_net}).
		
		Since there are in total at most $O(k \log \Delta \cdot \epsilon^{-2})$
		cells in all simple squares by Lemma~\ref{lemma:leaves},
		the total increase of the cost is at most
$O(\delta \cdot k \log \Delta \cdot \epsilon^{-2})
			\cdot \max_{C \in \mathcal{C}}{ \Diam(C) }
			\leq \epsilon \max_{C \in \mathcal{C}}{\Diam(C)}
			\leq \epsilon w(\OPT)$,
using a suitable choice of $\delta = O\left(\epsilon^3 (k\log \Delta)^{-1}\right)$ and recalling that $\max_{C \in \mathcal{C}}(\Diam(C)) \leq w(\OPT)$.
		We conclude that $w(\OPT') \leq (1 + \epsilon) \cdot w(\OPT)$.
\end{proof}

\subsection{Evaluating Basic Subproblems in Streaming}
	\label{sec:streaming_dp}
	After we obtain the new instance $(X', \mathcal{C}')$, we
	evaluate the cost for every subproblem $(R, A, f, \Pi)$.
	Because of the modification of the instance, we know for sure
	the subset of colors $\mathcal{C}_{R'}$ for each cell $R'$.
	To evaluate the subproblem, recall that we start with enumerating a partition of $\cell(R)$ that is compatible
	with the subproblem, which can be tested efficiently using $\mathcal{C}_{R'}$'s.
	Suppose now $\{ P_i := R_i \cup A_i \}_{i=1}^{t}$ is a partition of $\cell(R) \cup A$ that we enumerated, 
where $R_i$ represents the input points in the subset of cells corresponding to the $i$-th part of the partition
(recall that $A$ is the set of active portals, which needs to be connected to cells in a way that is compatible to the constraint $f$).
	Then, as in the offline algorithm, we
	evaluate $\MST(P_i)$ of each part $P_i$, and compute the sum of them, i.e. $\sum_{i=1}^{t}{\MST(P_i)}$, however, we need to show how to do this in the streaming setting.

	Frahling et al.~\cite{FIS08} designed an algorithm that reports
	a $(1+\epsilon)$-approximation for the value of the MST of a point set
	presented in a dynamic stream, using space $O(\epsilon^{-1}\log{\Delta})^{O(1)}$.
	Furthermore, as noted in Section~\ref{sec:intro}, their algorithm maintains a linear sketch.
Now, a natural idea is to apply this MST sketch, that is,
	create an MST sketch for each color, which only takes $k \cdot O(\epsilon^{-1}\log{\Delta})^{O(1)}$ space.
	Then, for each $P_i = R_i \cup A_i$, we compute the set of intersecting colors,
	and we create a new MST sketch $\mathcal{K}$ by first adding up the MST sketches of these colors (recalling that they are linear sketches),
	and then adding the active portals $A_i$ connected to $P_i$ to the sketch.
	We wish to query the sketch $\mathcal{K}$ for the cost of $\MST(P_i)$.

	However, this idea cannot directly work, since the algorithm by \cite{FIS08} only gives
	the MST value for \emph{all} points represented by $\mathcal{K}$, instead of the MST value for a subset $P_i$.
	Therefore, we modify the MST sketch to answer the MST cost
	of a subset of points of interest. 

\medskip

\paragraph{Generalizing the MST algorithm to handle subset queries.}\footnote{
	Here we assume the reader has some familiarity with the MST algorithm of \cite{FIS08}; we provide a more detailed overview of \cite{FIS08} in \Cref{sec:proof_mst_add}.}
	Fix some part $P_i$.
Recall that the $P_i$'s always consist of at most $O(\epsilon^{-2})$ cells
	(which are quadtree squares), plus $O(\epsilon^{-1})$ active portal points.
	Hence, a natural first attempt is to use or modify the $\ell_0$-samplers
        so that they sample only from the squares defined by $P_i$.
	Unfortunately, this approach would not work, since the squares are not known in advance and may be very small (i.e., degenerate to a single point),
	so sampling a point from them essentially solves the INDEX problem.
	
	Therefore, to estimate $\MST(P_i)$, we still use the original $\ell_0$-samplers,
	and we employ a careful sampling and estimation step.
	We sample from the whole point set maintained by the sketch $\mathcal{K}$ but keep only the samples contained in $P_i$.
	We execute the stochastic-stopping BFS from the points that are kept,
        restricting the BFS to points contained in $P_i$.

	One outstanding problem of this sampling method is that if the number of points in $P_i$,
	or to be exact, the number of non-zero entries of level-$i$ $\ell_0$-samplers,
	is only a tiny portion of that of the entire point set,
    then with high probability, we do not sample
	any point from $P_i$ at all.
	In this case, no BFS can be performed,
	and we inevitably answer $0$ for the number of successful BFS's,
	which eventually leads to some overall additive error.
The next lemma summarizes the above discussion and provides a bound on this additive error.

\begin{restatable}{lemma}{lemmaStreamingMSTadd}
\label{lemma:streaming_mst_add}
There is an algorithm that, given $0 < \epsilon < 1/2$, integers $k,\Delta \geq 1$,
and a set of points $S \subseteq [\Delta]^2$ presented as a dynamic geometric stream,
maintains a linear sketch of size $k^2 \cdot \poly(\epsilon^{-1} \log k \log{\Delta})$.
For every query $(R, \{R_j\}_{j=1}^{t}, A)$ at the end of the stream, 
where
\begin{enumerate} 
\item $R$ is a simple square, $A$ is a subset of portals of $R$, and
\item $\{R_j\}_{j=1}^{t} \subseteq \cell(R)$,
\end{enumerate}
the algorithm computes from the linear sketch an estimate $E\ge 0$
such that with probability at least $1 - \exp(-\log k\cdot \poly(\epsilon^{-1} \log\Delta))$,
\begin{align}
  \MST(P)
  \leq E
  \leq (1 + \epsilon) \cdot \MST(P) + O\left(\frac{\poly(\epsilon)}{k \cdot \log \Delta}\right) \cdot \MST(S)\,, \label{eqn:lemmaStreamingMSTadd}
\end{align}
where $P = \left(S \cap  \left(\bigcup_{j=1}^{t}{R_j}\right)  \right) \cup A$.
The algorithm runs in time $k^2 \cdot \poly(\epsilon^{-1} \cdot \log k\cdot \log{\Delta})$ per update,
and the query time is also $k^2 \cdot \poly(\epsilon^{-1} \cdot \log k\cdot \log{\Delta})$.
\end{restatable}

This lemma constitutes the main algorithm for the evaluation of the subproblem.
	Note that we only need to prove it for one point set $S$, since the sketch is linear.
	Indeed, when applying \cref{lemma:streaming_mst_add}, we obtain the sketch for each color separately from the stream,
	and for every query, we first merge the sketches of colors relevant to the query and add query portals to the resulting sketch.
	By linearity, this is the same as if we obtain the sketch for all these colors and portals at once.
We present the proof of \cref{lemma:streaming_mst_add} in \cref{sec:proof_mst_add},
	where we give a more detailed discussion of the technical issues and our novel ideas to overcome them.
	The proof of \cref{thm:main-precise} using \cref{lemma:streaming_mst_add} appears in \cref{sec:proof_kd}.

\subsection{Proof of \cref{thm:main-precise}}
\label{sec:proof_kd}
	
This section proves \cref{thm:main-precise}, which is restated below for convenience.
	
\thmmainprecise*
	
	We combine the above building blocks to prove \cref{thm:main-precise}.
	A description of the complete algorithm can be found in \cref{alg:streaming_kd};
	here, we provide a brief outline.
		When processing the stream, we maintain a sketch of \cref{lemma:streaming_simp_square} to find the maximal simple squares $\mathcal{R}$,
		a set of sketches of \cref{lemma:streaming_net}
		for every color to construct the covering for approximate compatibility checking,
		and a set of linear sketches of \cref{lemma:streaming_mst_add} for every color to answer MST cost queries.
		Upon a query to estimate the SFP cost,
		we compute the simple squares $\mathcal{R}$ by simulating \cref{alg:simpleSquares}
		and use the coverings to add artificial points to every cell of a square in $\mathcal{R}$ that is close enough to a point in one of the coverings
		as described in Section~\ref{sec:streaming_comp_check};
		the latter allows for checking the compatibility of a partition of cells with the color constraints.
		The main part is to estimate the values of basic DP subproblems, which we enumerate
		for every simple square in $\mathcal{R}$. For each such subproblem, we enumerate all
		partitions of cells with at most $O(\eps^{-1})$ parts (which is sufficient by \Cref{thm:dp_struct})
		and check compatibility of the partition with the subproblem and the color constraints as described in \Cref{sec:offline}.
		Finally, for each local component of each compatible partition, we query the merge of sketches of \cref{lemma:streaming_mst_add}
		for the relevant colors (i.e., those appearing in the component) to get an estimate of the MST cost of the local component.

	\begin{algorithm}[H]
	\caption{Main streaming algorithm}
	\label{alg:streaming_kd}
	\begin{algorithmic}[1]
	\Procedure{SFPinitialization}{$\mathcal{C}$} \Comment{$\mathcal{C}$ is the set of colors}
		\State initialize a sketch $\mathcal{K}^{(1)}$ of \cref{lemma:streaming_simp_square},
		a set of sketches of \cref{lemma:streaming_net}
		$\{ \mathcal{K}^{(2)}_C \}_{C \in \mathcal{C}}$
		for every color $C \in \mathcal{C}$ with parameter $\delta := \poly(\epsilon)(k\log \Delta)^{-1}$,
		and a set of (linear) sketches\footnotemark\  
		of \cref{lemma:streaming_mst_add}
		$\{ \mathcal{K}^{(3)}_C \}_{C \in \mathcal{C}}$
		for every color $C \in \mathcal{C}$
	\EndProcedure
\medskip
	\Procedure{SFPupdate}{$x, C$, \texttt{insert/delete}}
		\Comment{insert/delete point $x$ of color $C$}
		\State insert/delete point $x$ in sketches $\mathcal{K}^{(1)}, \mathcal{K}^{(2)}_C, \mathcal{K}^{(3)}_C$
	\EndProcedure
		\medskip
		
	\Procedure{SFPquery}{} \Comment{the stream terminates}
		\State use sketch $\mathcal{K}^{(1)}$ to compute a set of simple squares $\mathcal{R}$
		\Comment{see \cref{sec:streaming_simp_square}}
\State for each color $C \in \mathcal{C}$, use sketch $\mathcal{K}^{(2)}_C$
		to compute a covering $N_C$, and let $D_C := \Diam(N_C)$
		\Comment{$D_C$ is a $(1\pm \epsilon)$-approximation for $\Diam(C)$}

\State initialize a Boolean list $\mathcal{I}$ that records whether
		a cell of a simple square and a color intersect

		\Comment{This uses space at most $O(k\cdot \log \Delta\cdot \poly(\epsilon^{-1}))$}
		\For{every $R \in \mathcal{R}$, $R' \in \cell(R)$} \If{$\dist(N_C, R') \leq \delta \cdot D_C$ for some color $C$}
		\State let $x \in R'$ be a point such that $\dist(x, N_C) \leq \delta \cdot D_C$
		\For{every color $C'$ with $\dist(N_{C'}, R') \leq \delta \cdot D_{C'}$}
		\State add $x$ to $\mathcal{K}^{(3)}_{C'}$ and record in $\mathcal{I}$ that $R'$ intersects color $C'$
		\Comment{see \cref{sec:streaming_comp_check}}
		\EndFor
		\EndIf
		\EndFor
		\For{each simple square $R$ and an associated subproblem $(R, A, f, \Pi)$} \label{line:start_subproblem}
		\For{each partition of $\cell(R)$ with at most $O(\eps^{-1})$ parts}
		\If{the partition is compatible with the subproblem} \Comment{see \cref{sec:streaming_comp_check}}
		\For{each part $\{R_j\}_{j=1}^{t} \subseteq \cell(R)$ in the partition}
		\State let $A_j \subseteq A$ be the set of active portals that $\{R_j\}_{j=1}^{t}$ connects to
		\State create linear sketch $\mathcal{K}'$,
		by adding up $\mathcal{K}^{(3)}_C$ for every $C$
		intersecting a cell $R_j$ for $j \in \{1, \dots, t\}$
		
		\Comment{the intersection information is recorded in $\mathcal{I}$}
		\State add points in $A_j$ to sketch $\mathcal{K}'$
		\State query sketch $\mathcal{K}'$ for the value of the MST of the part $\{R_j\}_{j=1}^{t}$ and portals $A_j$ (as in \cref{lemma:streaming_mst_add}) \label{line:query_subproblem}
		\Comment{see \cref{sec:streaming_dp}}
		\EndFor
		\State store the sum of the queried values of $\MST(\{R_j\}_{j=1}^{t}, A_j)$ as the estimated cost for the subproblem
		\EndIf
		\EndFor
		\EndFor \label{line:end_subproblem}
		\State invoke the DP (as in~\cite{DBLP:journals/algorithmica/BateniH12}) using the values of basic subproblem estimated as above
		\State \textbf{return} the DP value (for the root square with no active portals)
		\EndProcedure
	\end{algorithmic}
\end{algorithm}
\footnotetext{We need to use the same randomness for sketches $\{ \mathcal{K}_C^{(3)} \}$ among all colors $C$ so that they can be combined later.} 	
	
	The space and update time follow immediately
	from \cref{alg:streaming_kd}, \cref{thm:dp_struct,lemma:streaming_simp_square,lemma:streaming_net,lemma:streaming_mst_add}.

	To bound the query time, note that there are at most
	\begin{itemize}[nosep]
		\item $O(k\cdot \log \Delta)$ simple squares in $\mathcal{R}$,
		and thus, $O(k\cdot \log \Delta)$ quadtree squares for which we evaluate DP subproblems,
		\item $(\epsilon^{-1}\cdot \log \Delta)^{O(\epsilon^{-2})}$ DP subproblems associated with each square (see Section~\ref{sec:dp_review}), and
		\item $\epsilon^{-O(\epsilon^{-1})}$ MST queries evaluated for each subproblem,
			since there are most $O(\eps^{-1})$ MST queries per each of the at most
			$\epsilon^{-O(\epsilon^{-1})}$ partitions of cells with at most $O(\eps^{-1})$ parts.
	\end{itemize}
	Furthermore, the time complexity of an MST query is $k^2 \cdot \poly(\log k\cdot\epsilon^{-1} \log{\Delta})$ by \cref{lemma:streaming_mst_add}, so
	the query time is bounded by
$$O(k\cdot \log \Delta)\cdot (\epsilon^{-1}\cdot \log \Delta)^{O(\epsilon^{-2})}\cdot \epsilon^{-O(\epsilon^{-1})} \cdot k^2 \cdot \poly(\log k\cdot\epsilon^{-1} \log{\Delta}) \le k^3\cdot \poly(\log k)\cdot (\epsilon^{-1}\log\Delta)^{O(\epsilon^{-2})}\,.$$

	To bound the failure probability,
	we use a union bound over the failure probabilities of all applications
	and queries of the streaming algorithms as well as the error bound in \cref{thm:dp_struct}.
	We observe that \cref{thm:dp_struct} incurs an $O(1)$ failure probability,
	and every other step, except for the use of \cref{lemma:streaming_mst_add},
	have a failure probability of $\poly(\Delta^{-1})$.
	To bound the failure probability of all applications of \cref{lemma:streaming_mst_add},
	observe that we have $k\cdot (\epsilon^{-1}\cdot \log \Delta)^{O(\epsilon^{-2})}$ basic subproblems (see \cref{sec:prelim}),
	and for each basic subproblem we need to evaluate at most $\epsilon^{-O(\epsilon^{-1})}$ MST queries.
	Thus, the total failure probability of evaluating the subproblems is at most
	\begin{align*}
		k\cdot (\epsilon^{-1}\cdot \log \Delta)^{O(\epsilon^{-2})}\cdot \epsilon^{-O(\epsilon^{-1})} \cdot \exp(-\log k\cdot \poly(\epsilon^{-1}\log \Delta))
		\leq \poly(\Delta^{-1})\,,
	\end{align*}
	by the guarantee of \cref{lemma:streaming_mst_add}.
Therefore, we conclude that the failure probability is at most
	$\frac{1}{3}$. It remains to analyze the error.

	\paragraph{Error analysis.}
	For the remaining part of the analysis, we condition on no failure of the sketches used in \cref{alg:streaming_kd}
	and on that the error bound in \cref{thm:dp_struct} holds.
	By \cref{lemma:error_new_inst}, for the part of evaluating the
	basic subproblems (lines \ref{line:start_subproblem}-\ref{line:end_subproblem} of \cref{alg:streaming_kd}),
	the actual instance that the linear sketches
	work on is $(1 + O(\epsilon))$-approximate.
	Hence, it suffices to show the DP value is accurate to that instance.

	Our estimation is never an underestimate, by \cref{lemma:streaming_mst_add}
	and since all partitions that we enumerated are compatible with the subproblems;
	see \cref{sec:streaming_comp_check}.
	Hence, it remains to upper bound the estimation.
	Consider an optimal DP solution $F$, which we interpret as a metric graph (see \cref{sec:prelim}).
	Then we create a new solution $F'$ from $F$ by modifying $F$ using the following procedure.
	For each simple square $R$, we consider $F_{|R}$ which is the portion of
	$F$ that is totally inside of $R$ (see \cref{sec:prelim}).
	For each component $S \subset R$ in $F_{|R}$, let $S'$ be the point set formed by removing all Steiner points from $S$,
	except for portals of $R$ (note that we remove portals of subsquares of $R$ if they appear in $S$).
	Then, for each component $S$, we replace the subtree in $F$ that spans $S$
	with the MST on $S'$.
	It is immediate that after the replacement, the new solution has the same
	connectivity of portals and terminal points as before.
	We define $F'$ as the solution after doing this replacement for all simple squares.

	$F'$ is still a feasible solution. Furthermore, for every simple square $R$,
	if $F$ is compatible with a subproblem $(R, A, f, \Pi)$, then
	so does $F'$.
	By the construction of $F'$, the definition of Steiner ratio $\alpha_2$,
	and \cref{thm:dp_struct},
	we know that
\begin{equation}
		w(F') \leq \alpha_2 \cdot w(F) \leq (1 + O(\epsilon)) \cdot \alpha_2 \cdot \OPT, \label{eqn:f_fprime}
	\end{equation}
	where the last inequality holds as we condition on that the error bound  in \cref{thm:dp_struct} holds.

	Now we relate the algorithm's cost to $w(F')$.
	Fix a simple square $R$, and suppose $(R, A, f, \Pi)$ is the subproblem
	that is compatible with $F'_{|R}$.
	Then, the components in $F'_{|R}$ can be described by a partition
	of the cells plus their connectivity to active portals.
	Such a subproblem, together with the partition, must be examined
	by \cref{alg:streaming_kd} (in lines \ref{line:start_subproblem}-\ref{line:end_subproblem}),
	and the MST value for each part is estimated
	in line~\ref{line:query_subproblem}.
	Since the algorithm runs a DP using the estimated values,
	the final DP value is no worse than the DP value that is only evaluated from
	the subproblems that are compatible to $F'$.
	Recall that our estimation for each subproblem not only has a multiplicative error
	of $(1+\epsilon)$ but also an additive error by \cref{lemma:streaming_mst_add}.
	By the fact that $F'$ always uses MST to connect points in components of basic subproblems,
	it suffices to bound the \emph{total} additive error for the estimation
	of the MST cost of the components of $F'$. 

	Fix a connected (global) component $Q$ of $F'$, and let $\mathcal{C}_Q \subseteq \mathcal{C}$
	be the subset of colors that belongs to $Q$.
	By \cref{lemma:streaming_mst_add}, for every basic subproblem $(R, f, A, \Pi)$ that is compatible with $F'$,
	and every component $P$ of $Q_{|R}$, the additive error
	is at most
		$O\left(\frac{\poly(\epsilon)}{k \log \Delta}\right) \cdot \MST(S)$,
	where $S$ is the union of color classes that intersect $P$ plus the active portals $A$.
	Observe that $\mathcal{C}_S\subseteq \mathcal{C}_Q$ (where $\mathcal{C}_S$ is the set of colors used in $S$), so
	$S$ is a subset of the point set of $Q$ (note that $Q$ contains all portals in $A$ as $F'$ is a portal-respecting solution and the subproblem is compatible with $F'$)
	and thus $\MST(S)\le \MST(Q)$, which implies
	\begin{equation*}
		O\left(\frac{\poly(\epsilon)}{k \log \Delta}\right) \cdot \MST(S)
		\leq O\left(\frac{\poly(\epsilon)}{k \log \Delta}\right)\cdot \MST(Q)
		\leq O\left(\frac{\poly(\epsilon)}{k \log \Delta}\right) \cdot w(Q).
\end{equation*}
	Observe that for each simple square $R$, $Q_{|R}$ has at most $O(\epsilon^{-1})$ local components by \cref{thm:dp_struct}.
	Hence, summing over all local components of $Q_{|R}$ and all simple squares $R$, the total additive error is bounded by
$O(\epsilon^{-1}) \cdot O(k \log \Delta) \cdot O\left(\poly(\epsilon) / (k \log \Delta)\right) \cdot w(Q)
		\leq \epsilon \cdot w(Q)$, where we used that there are at most $O(k \log \Delta)$ simple squares by Lemma~\ref{lemma:leaves}.
	Finally, summing over all components $Q$ of $F'$, we conclude that the total additive error is $\epsilon \cdot w(F')$,
	and combined with \cref{eqn:f_fprime}, the error guarantee follows.
	This finishes the proof of \cref{thm:main-precise}.

\section{Proof of Lemma~\ref{lemma:streaming_mst_add}: Subset MST Query in Streaming}
\label{sec:proof_mst_add}

This section proves \cref{lemma:streaming_mst_add}, which is restated below for convenience.

\lemmaStreamingMSTadd*

	The general proof strategy is similar to that of~\cite{FIS08},
	hence we start with a review of~\cite{FIS08},
        with minor adjustments to our setting,
	described in terms of a generic input $V \subseteq [\Delta]^2$. 

    \begin{table}[ht]
        \caption{List of major notations in the proof of \Cref{lemma:streaming_mst_add}}
        \label{tab:notation}
        \centering
        \begin{tabular}{ll}
            \toprule
            notation & definition \\
            \midrule
            $S$ & input dataset \\
            $P$ & the subset in the query \\
            $V$ & a generic set of points (used mostly in definitions and lemma statements) \\
            $G_V$ & the (complete) metric graph on point set $V$ \\
            $G_V^{(i)}$ & $(1 + \epsilon)^i$-threshold graph of $G_V$ \\
            $c_V^{(i)}$ & number of connected components in $G_V^{(i)}$ \\
            $i'$ & $i' := \log_2 O(\epsilon \cdot (1 + \epsilon)^i)$, as in \eqref{eqn:iprime} \\
            $\mathring{G}_V^{(i)}$ & rounded $(1 + \epsilon)^i$-threshold graph \\
            $\mathring{c}_V^{(i)}$ & number of components in $\mathring{G}_V^{(i)}$ \\
            $\mathring{n}_V^{(i)}$ & number of (distinct) vertices in $\mathring{G}_V^{(i)}$ \\
            parameters $\kappa, \sigma, \Gamma$ & \eqref{eqn:param_def} \\
            $\widetilde{n}_V^{(i)}$, $\widetilde{c}_V^{(i)}$ & estimates for $\mathring{n}_V^{(i)}$ and $\mathring{c}_V^{(i)}$ defined in \Cref{alg:mst_add_query} \\
            parameters $\lambda_1, \lambda_2$ & $\lambda_1 := \frac{\kappa}{2\sigma}$, $\lambda_2 := \frac{2\kappa}{\sigma}$ \\
            \bottomrule
        \end{tabular}
    \end{table}

	\paragraph{Review of~\cite{FIS08}.}
	A key observation is that the cost of the MST of a point set $V \subseteq [\Delta]^2$
	can be related to the number of components in the metric threshold graphs of different scales;
	a similar observation was first given in~\cite{DBLP:journals/siamcomp/ChazelleRT05}.
	In particular, let $G_V$ be the complete metric graph on $V$,
	i.e., the vertex set is $V$, the edge set is $\{\{ u, v\} : u\neq v \in V\}$, and the edge weights are given by $\dist(\cdot, \cdot)$.
For $i \geq 0$, let $G_V^{(i)}$ be the $(1 + \epsilon)^i$-threshold graph,
	which consists only of the edges of $G_V$ that have weight at most $(1 + \epsilon)^i$.
	Let $c_V^{(i)}$ be the number of connected components in $G_V^{(i)}$.
	Then for every $W \geq \Diam(V)$ that is a power of $(1 + \epsilon)$,
	\begin{align}
		\MST(V)
		\leq c_V^{(0)} - W + \epsilon \cdot \sum_{i=0}^{\log_{1+\epsilon}{W} - 1}{ (1 + \epsilon)^{i} \cdot c_V^{(i)} }
		\leq (1 + \epsilon) \cdot \MST(V).
		\label{eqn:mst_component}
	\end{align}

	It remains to estimate $c_V^{(i)}$ for $0 \leq i \leq \log_{1 + \epsilon}{W} - 1$.
For each $i$, Frahling et al.~\cite{FIS08} consider the subdivision of $[\Delta]^2$
	into squares of sidelength $(1+\epsilon)^{i'}$ for a small enough $i'$,
	namely, $i'$ is the largest integer satisfying $(1+\epsilon)^{i'} \leq O(\epsilon \cdot (1+\epsilon)^i)$.
        They round the input points to grid points formed by centers of these squares.
	However, this is not convenient in our setting as we need to restrict the MST query to a subset of quadtree squares
	(with sidelengths of powers of~$2$). As it is not possible to ``align'' squares of size $(1+\epsilon)^{i'}$ with quadtree squares,
	the aforementioned rounding could move a point not relevant to the MST query to a quadtree square of the query
	(thus making it appear relevant), or vice versa.
	
	To avoid this issue, we adjust the rounding for MST cost estimation so that the grid points are centers
	of squares in the randomly-shifted quadtree we use in the DP computation.
	Recall from Section~\ref{sec:prelim} that $\mathcal{G}_{i'}$ is
	the set of centers of all level-$i'$ quadtree squares.
	Namely, to estimate $c_V^{(i)}$, we round input points to the $2^{i'}$-grid $\mathcal{G}_{i'}$,
	where $i'$ is the largest integer such that $2^{i'} \leq O(\epsilon \cdot (1+\epsilon)^i)$, that is,
	\begin{equation}\label{eqn:iprime}
	i' := \log_2 O(\epsilon \cdot (1+\epsilon)^i)\,.
	\end{equation}
	We require the constant hidden in the $O$ notation to be sufficiently small so that an inequality in~\eqref{eqn:relevanceTesting}
	holds.
	(If $i' < 0$ according to this definition, then we can of course take $i' = 0$ and no rounding is needed as we reach the granularity
	of the input data.)
	Our rounding is only finer compared to~\cite{FIS08} (i.e., to centers of smaller squares),
	but within a constant factor, so the space bound remains asymptotically the same.
	Note that while the squares of $\mathcal{G}_{i'}$ are from the quadtree, we still need to show
	that we do not round to centers of larger squares than the cells of the MST query, which will
	imply that indeed, our rounding does not move a point irrelevant for the query to a cell of the query, or vice versa.
	We remark that for several consecutive indexes $i\in [0, \log_{1 + \epsilon}{W} - 1]$, the index $i'$ could be of the same value.

	Similarly to~\cite{FIS08},
	define for each $i$ a ``rounded'' metric graph $\mathring{G}^{(i)}_V$ as follows.
	\begin{enumerate}
		\item Move each point $x \in V$ to the center $y \in \mathcal{G}_{i'}$ of the quadtree square that $x$ belongs to,
		where $i'$ is defined as in~\eqref{eqn:iprime}.
		\item The vertex set of $\mathring{G}^{(i)}_V$ consists of
		\emph{non-empty} grid points in $\mathcal{G}_{i'}$, i.e., for each vertex $v$ at least one point in $V$ was
		moved/rounded to the grid point corresponding to $v$, and the edge set consists of pairs of vertices that
		are of distance at most $(1 + \epsilon)^i$.
	\end{enumerate}
	We refer to the vertices of $\mathring{G}^{(i)}_V$ as \emph{non-empty} grid points.
	Let $\mathring{c}^{(i)}_V$ be the number of components in $\mathring{G}^{(i)}_V$,
	and let $\mathring{n}^{(i)}$ be the number of vertices in $\mathring{G}^{(i)}_V$.
	As shown in~\cite{FIS08}, $c^{(i)}_V$ is well approximated by $\mathring{c}^{(i)}_V$,
	which we restate in \cref{lemma:num_component_relation}.
	Strictly speaking, \cref{lemma:num_component_relation} (\cite[Claim 4.1]{FIS08}) is for
	the rounding to centers of squares with sidelength $O(\epsilon\cdot (1 + \epsilon)^i)$,
	but its proof more generally applies to any rounding which moves points by at most $O(\epsilon\cdot (1 + \epsilon)^i)$
	and thus, to our rounding as well.
	\begin{lemma}[{\cite[Claim 4.1]{FIS08}}]
		\label{lemma:num_component_relation}
		For every $i$, $c_V^{(i+1)} \leq \mathring{c}^{(i)}_V \leq c^{(i-2)}_V$.
	\end{lemma}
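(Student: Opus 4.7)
The plan is to adapt the proof of [FIS08, Claim 4.1] to the quad-tree-aligned grid rounding used here; the substitution is essentially cosmetic, provided the constant hidden in~\eqref{eqn:iprime} is chosen small enough. The key quantity to track is the maximum rounding displacement of any point, which in our case is $\rho_i := \tfrac{\sqrt{2}}{2}\cdot 2^{i'} = O(\epsilon\cdot (1+\epsilon)^i)$, with a constant we can shrink at will by tightening~\eqref{eqn:iprime}.

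For the upper bound $\mathring{c}^{(i)}_V \leq c_V^{(i-2)}$, I would show that any connectivity present in $G_V^{(i-2)}$ survives the rounding into $\mathring{G}^{(i)}_V$. Given a path $u=u_0,u_1,\dots,u_\ell=v$ in $G_V^{(i-2)}$ whose edges have length at most $(1+\epsilon)^{i-2}$, the triangle inequality bounds each rounded edge by $(1+\epsilon)^{i-2}+2\rho_i$. Choosing the constant in~\eqref{eqn:iprime} small enough so that this quantity is at most $(1+\epsilon)^i$, the rounded path is a walk in $\mathring{G}^{(i)}_V$. Additional collapses of multiple original points to the same grid center can only merge components further. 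Consequently, every component of $G_V^{(i-2)}$ is contained in a component of $\mathring{G}^{(i)}_V$, giving the desired inequality.

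For the lower bound $c_V^{(i+1)} \leq \mathring{c}^{(i)}_V$, I would go the other way. For each non-empty grid center, fix one preimage in $V$, and lift each edge $\{\bar u,\bar v\}$ of $\mathring{G}^{(i)}_V$ to its preimage pair $\{u,v\}\subseteq V$; by the triangle inequality $\|u-v\|\leq (1+\epsilon)^i+2\rho_i$, which we arrange to be at most $(1+\epsilon)^{i+1}$, so $\{u,v\}$ is an edge of $G_V^{(i+1)}$. Moreover, any two original points collapsing to the same grid center lie within $2\rho_i \leq (1+\epsilon)^{i+1}$ of each other, hence in the same component of $G_V^{(i+1)}$. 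Thus every component of $\mathring{G}^{(i)}_V$ sits inside a single component of $G_V^{(i+1)}$.

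The only real obstacle is verifying that a single sufficiently small constant hidden in~\eqref{eqn:iprime} simultaneously enforces both slack inequalities, namely $(1+\epsilon)^{i-2}+2\rho_i\leq (1+\epsilon)^i$ and $(1+\epsilon)^i+2\rho_i\leq (1+\epsilon)^{i+1}$. Since both right-hand-side gaps are $\Theta(\epsilon\cdot(1+\epsilon)^i)$ while $\rho_i$ is $O(\epsilon\cdot(1+\epsilon)^i)$ with a constant we control, such a choice is always possible (and, notably, this is exactly the sense in which the lemma is insensitive to switching from the $(1+\epsilon)^{i'}$-grid of~\cite{FIS08} to our quad-tree grid at scale $2^{i'}$). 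Everything else is a routine repetition of the FIS08 argument.
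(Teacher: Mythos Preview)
Your proposal is correct and follows exactly the approach the paper intends: the paper does not give its own proof but simply cites \cite[Claim 4.1]{FIS08} and remarks that the original argument ``more generally applies to any rounding which moves points by at most $O(\epsilon\cdot (1+\epsilon)^i)$.'' You have faithfully reconstructed that argument, correctly identifying the displacement bound $\rho_i$ as the only parameter that matters and verifying that both slack inequalities are satisfied once the constant in~\eqref{eqn:iprime} is taken small enough.
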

	Therefore, we focus on estimating $\mathring{c}^{(i)}_V$'s.
	As in~\cite{FIS08}, it suffices to account for \emph{small}
	components that have at most $O(\epsilon^{-2} \log\Delta)$ non-empty grid points.
	The reason is that the number of large components that have more than
	$\epsilon^{-2}\log \Delta$ points is at most $O(\epsilon^2 / \log\Delta \cdot \mathring{n}^{(i)}_V)$,
	which contributes $O(\epsilon^2  / \log\Delta \cdot (1 + \epsilon)^i \cdot   \mathring{n}^{(i)}_V)$ in \cref{eqn:mst_component}.
	This contribution can be bounded using the following lemma.\footnote{Note that Lemma~4 in~\cite{FIS08}
          proves the lower bound only for the case when $W\ge \Omega(\epsilon\cdot (1+\epsilon)^i)$,
          however, using the argument of the first case of their proof implies our bound.
        }
	
	\begin{lemma}[Lower bound on MST~{\cite[Lemma 4]{FIS08}}]
		\label{mst:lb}
		For every $i \geq 0$ and any point set $V\subseteq S$ with $\mathring{n}^{(i)}_V \ge 8$,\footnote{The constant $8$ is not arbitrarily picked --- in general dimension $d$, this bound is $\mathring{n}_V^{(i)} \geq 2^{d+1}$.} it holds that $\MST(V) \geq \Omega(\epsilon\cdot (1+\epsilon)^i \cdot \mathring{n}^{(i)}_V)$.
	\end{lemma}

	To estimate the number of small components, we use the BFS algorithm with a stochastic stopping condition;
	this idea was first applied in~\cite{DBLP:journals/siamcomp/ChazelleRT05}.
	In particular, for each $i$,
	$\poly(\epsilon^{-1}\log \Delta)$ samples are taken from the point set of
	$\mathring{G}^{(i)}_V$, which may be efficiently maintained and sampled using $\ell_0$-samplers.
	After that, we perform a stochastic-stopping BFS in $\mathring{G}^{(i)}_V$
	starting from each sampled point and using a random number of steps (but at most $O(\epsilon^{-2} \log \Delta)$ steps).
	An estimate for the number of small components, and thus for $\mathring{c}^{(i)}_V$, is computed using the outcome of each BFS,
	i.e., whether the whole component is discovered or not.
	The random exploration is made possible by the following modified $\ell_0$-sampler designed in~\cite{FIS08},
	which also returns the non-empty neighborhood when sampling a point.
	See also a survey about $\ell_0$-samplers by Cormode and Firmani~\cite{DBLP:journals/dpd/CormodeF14}.

\begin{lemma}[$\ell_0$-Sampler with neighborhood information~{\cite[Corollary 3]{FIS08}}]
\label{lemma:neighbor_l0}
There is an algorithm that, given $\delta > 0$, integers $\rho,\Delta\ge 1$,
and a set of points $S\subseteq [\Delta]^2$ presented as a dynamic geometric stream,
succeeds with probability at least $1-\delta$ and, conditioned on it succeeding,
returns a point $p\in S$ such that for every $s\in S$ it holds that $\Pr[p = s] = 1 / |S|$.
Moreover, if the algorithm succeeds, it returns also all points $s\in S$ such that $\dist(p, s)\le \rho$.
The algorithm has space and both update and query times
bounded by $\poly(\rho\cdot \log\Delta\cdot \log \delta^{-1})$,
and its memory contents is a linear sketch of $S$.
\end{lemma}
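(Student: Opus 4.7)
The plan is to adapt the Frahling--Indyk--Sohler MST estimator reviewed in the paragraphs preceding the lemma so that it supports \emph{subset} queries. For every threshold-graph scale $i\in\{0,\dots,\log_{1+\epsilon} W\}$ with $W = O(\Delta\sqrt{2})$, the sketch maintains a battery of $N = \Theta(k\cdot \poly(\epsilon^{-1}\log\Delta))$ independent $\ell_0$-samplers with neighborhood information of radius $(1+\epsilon)^i$, obtained from Lemma~\ref{lemma:neighbor_l0} and applied to the set of points rounded to the $2^{i'}$-grid $\mathcal{G}_{i'}$, where $i'$ is as in~\eqref{eqn:iprime}. I boost each sampler's failure probability to $\exp(-\log k\cdot \poly(\epsilon^{-1}\log\Delta))$, which gives each sampler footprint $\poly(\log k\cdot \epsilon^{-1}\log\Delta)$; multiplying by $N$ and by the $O(\epsilon^{-1}\log\Delta)$ scales yields the stated space and update-time bound of $k^2\cdot \poly(\log k\cdot\epsilon^{-1}\log\Delta)$. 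Linearity of the whole sketch is inherited from Lemma~\ref{lemma:neighbor_l0}, so the per-color sketches can be merged at query time and the $\poly(\epsilon^{-1})$ portals in $A$ can be injected on the fly.

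On query $(R,\{R_j\}_{j=1}^t,A)$, for each scale $i$ I will (i) read off one sample from every sampler at scale $i$ and retain only those whose rounded point lies in $P = (\bigcup_j R_j)\cup A$; (ii) from each retained sample run the FIS stochastic-stopping BFS restricted to $P$, using the sampler's radius-$(1+\epsilon)^i$ neighborhood oracle at every step and filtering out grid points outside $P$; (iii) aggregate the ``small-component'' outcomes into an estimator $\hat{c}^{(i)}_P$ for the number of components of the rounded graph $\mathring{G}^{(i)}_P$. The final estimate $E$ is obtained by plugging the $\hat{c}^{(i)}_P$ into formula~\eqref{eqn:mst_component}. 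The key geometric observation is that, by making the constant hidden in~\eqref{eqn:iprime} small enough, the side length $2^{i'}$ of each $\mathcal{G}_{i'}$-square is strictly smaller than the side length of any cell $R_j\in\cell(R)$, so any rounded grid point lies entirely inside some $R_j$ or entirely outside every $R_j$; membership in $P$ therefore depends only on the rounded point, not on its preimages in $S$, and the BFS restricted to $P$ exactly realizes $\mathring{G}^{(i)}_P$.

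Error analysis proceeds scale by scale. Let $f_i := \mathring{n}^{(i)}_P/\mathring{n}^{(i)}_S$. If $f_i \ge 1/(k\cdot \poly(\epsilon^{-1}\log\Delta))$, then a Chernoff bound over the $N$ samples shows that $\Omega(\poly(\epsilon^{-1}\log\Delta))$ of them land in $P$, uniformly distributed on the non-empty grid points of $P$, with probability $1-\exp(-\log k\cdot \poly(\epsilon^{-1}\log\Delta))$; the FIS argument then yields a $(1+\epsilon)$-approximation of the small-component count of $\mathring{G}^{(i)}_P$, while the large-component contribution is bounded by Lemma~\ref{mst:lb} applied with $V=P$. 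Combined with Lemma~\ref{lemma:num_component_relation}, this produces the multiplicative $(1+\epsilon)$ guarantee in the statement. If instead $f_i$ is below the threshold, I return $0$ for scale $i$; the true contribution of scale $i$ to $\MST(P)$ is at most $O(\epsilon(1+\epsilon)^i\cdot \mathring{n}^{(i)}_P)\le O(\poly(\epsilon)/(k\log^2\Delta))\cdot(1+\epsilon)^i\cdot \mathring{n}^{(i)}_S$, which by Lemma~\ref{mst:lb} applied to $S$ is $O(\poly(\epsilon)/(k\log^2\Delta))\cdot \MST(S)$; summing over the $O(\epsilon^{-1}\log\Delta)$ scales yields the claimed additive error of $O(\poly(\epsilon)/(k\log\Delta))\cdot \MST(S)$.

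The hardest part is precisely the ``thin-subset'' regime: the FIS samplers are uniform over $S$, not over $P$, so if $P$ covers only a small fraction of $S$ we could obtain essentially no usable samples, breaking the small-component estimator. Resolving this forces the $\Theta(k)$-fold increase in the number of samplers per scale (so that subsets as small as a $1/(k\cdot \polylog\Delta)$ fraction of $S$ still yield enough retained samples for the concentration bound), together with the amplification of the per-sampler failure probability to survive a union bound over the $\poly(k,\log\Delta,\epsilon^{-1})^{O(\epsilon^{-2})}$ queries implicit in the outer DP; this is what drives the space from the plain-FIS $\poly(\epsilon^{-1}\log\Delta)$ up to $k^2\cdot \poly(\log k\cdot \epsilon^{-1}\log\Delta)$, and the ``concede-and-return-zero'' scales are then charged via Lemma~\ref{mst:lb} to the advertised additive term. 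The remaining ingredients -- the geometric alignment via~\eqref{eqn:iprime}, linearity of $\ell_0$-sketches, and the reductions in Lemmas~\ref{lemma:num_component_relation} and~\ref{mst:lb} -- are inherited essentially unchanged from FIS.
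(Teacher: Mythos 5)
Your proposal proves the wrong statement. Lemma~\ref{lemma:neighbor_l0} is \emph{not} the subset-query MST estimator of \cref{lemma:streaming_mst_add}; it is the basic building block imported from Frahling, Indyk, and Sohler~\cite[Corollary~3]{FIS08}: a single linear-sketch $\ell_0$-sampler that, in addition to returning a uniformly random point $p\in S$, also returns all points of $S$ within distance $\rho$ of $p$, with space and time $\poly(\rho\cdot\epsilon^{-1}\cdot\log\Delta\cdot\log\delta^{-1})$ and \emph{no dependence on $k$}. The paper states this lemma purely as a citation, without proof. Everything in your write-up --- the $\Theta(k\cdot\poly(\epsilon^{-1}\log\Delta))$ samplers per scale, the relevance filtering against $P=(\bigcup_j R_j)\cup A$, the stochastic-stopping BFS, the estimators for $\mathring{c}^{(i)}_P$, the thin-subset regime, and the charging via Lemma~\ref{mst:lb} --- is the argument for \cref{lemma:streaming_mst_add}, not for the statement at hand.

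The proposal is also circular as a proof of the stated lemma: in your first paragraph you build the sketch from ``$\ell_0$-samplers with neighborhood information \ldots\ obtained from Lemma~\ref{lemma:neighbor_l0}'', i.e.\ you invoke the very lemma you are supposed to be proving. The claimed bounds confirm the mismatch: you report $k^2\cdot\poly(\log k\cdot\epsilon^{-1}\log\Delta)$ space and a query interface $(R,\{R_j\},A)$ with a multiplicative-plus-additive error guarantee, whereas the lemma asks for one uniformly random point together with its $\rho$-neighborhood and a $k$-free bound. A proof of the actual statement would simply recall the FIS08 construction: take a standard linear $\ell_0$-sketch over the grid domain, and maintain alongside it a sparse-recovery structure that, for whichever coordinate the sampler outputs, exactly recovers the frequency sub-vector supported on the $O(\rho^2)$ grid points within distance $\rho$ of it; amplify to failure probability $\delta$ by independent repetition. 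Since the paper deliberately treats this as a black box, no new argument is required here, and a proof attempt for it should not reproduce the machinery of \cref{lemma:streaming_mst_add}.
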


	\paragraph{Handling subset queries.}
	In our case, we need to apply \cref{eqn:mst_component}
	with $V = P$, recalling that $P$ is the point set of the query.
	To pick $W$ in~\eqref{eqn:mst_component},
	we need an upper bound on $\Diam(P)$.
	Suppose in the query, the simple square $R$ is of level-$i_0$,
	then $\Diam(P) \leq 2^{i_0}$. Hence, we pick $W$ to be the smallest power of $(1+\epsilon)$ that is no smaller than $2^{i_0}$,
	which implies
	\begin{equation}\label{eqn:diameter_bound}
	2^{i_0} \leq W \leq (1+\epsilon) \cdot 2^{i_0}\,.
	\end{equation}		

	However,
	the query set $P$ is given after the stream, and it is provided
	in a compact form as a union of several cells and portals.
	On the other hand, what we can maintain is only a sketch for $S$ (the whole point set).
	Therefore, the above idea from~\cite{FIS08} cannot immediately solve our problem,
	and as discussed in \cref{sec:streaming_dp},
	we cannot count on adapting $\ell_0$-samplers to directly
	work on a subset unknown in advance either.
	Hence, we still have to maintain the sketch for the whole $S$,
	and query the $\ell_0$-samplers to obtain a sample from $S$.
	Then naturally, the only way out seems to
	filter out the irrelevant sampled non-empty grid points,
	which are those not corresponding to the query set,
	and then we simulate the MST algorithm on the local query instance,
	by using those relevant samples.
	Apart from filtering out irrelevant samples, we also need to restrict the neighborhoods (from \cref{lemma:neighbor_l0})
	to the query cells, i.e., filter out irrelevant points from the neighborhoods.

At a first glance, this idea makes sense and should generally work.
	However, we now discuss an outstanding issue
	and present our new technical ideas for resolving it.

	\paragraph{Additive error and failure probability.}
	As mentioned in \cref{sec:streaming_dp}, the sampling idea may not work
	if $P$ has relatively few relevant grid points $\mathcal{G}_{i'}$, since it is difficult to collect enough relevant grid points using a small number of samples, and in this case
	we suffer an additive error.
	Luckily, we show that if the number of relevant grid points is indeed small relative to that of $S$,
	then the contribution of these points
	to the global MST cost is small after all; see \cref{lemma:np_smaller_ns}.
	This yields a small additive error that can be eventually charged to the global MST of $S$.
	A technical issue here is that the algorithm does not know whether or not
	$P$ contains very few relevant grid points in advance.
	Hence, we introduce an additional sampling step
	to estimate this density, and we use this estimate ($\mathcal{S}^{(i)}$ in \cref{alg:mst_add_query}) to guide the algorithm.
	This works well for the algorithm, but in the analysis,
	the random estimate cannot assert for sure whether or not $P$ contains relatively few relevant grid points.
	We thus need to do the case analysis based on the actual number of grid points in $P$,
	which is not directly aligned with the case separation of the algorithm.

	Finally, since we need to apply the query on all subproblems (in \cref{sec:proof_kd}),
	this requires that both the success probability and the additive error
	are as small as $\frac{1}{k}$ (ignoring other factors).
	In the original analysis by~\cite{FIS08},
	Chebyshev's inequality was used to bound the failure probability,
	which in our case only suffices for an $O(k^3)$ space bound (as opposed to $O(k^2)$ bound in \Cref{lemma:streaming_mst_add}).
	We instead apply Hoeffding's inequality,
	and save a factor of $k$ in space compared to the original calculation in~\cite{FIS08}.

	\paragraph{Algorithm description.}
	We state the complete algorithm for maintaining the linear sketch in \cref{alg:mst_add_sketch},
	and the query algorithm in \cref{alg:mst_add_query}.
	In a nutshell, maintaining the sketch in \cref{alg:mst_add_sketch} requires updating extended $\ell_0$-samplers of \cref{lemma:neighbor_l0} (including the associated neighborhood information) together with keeping track of the $\ell_0$ norm for each level $0, \dots, \log_2{\Delta}$
and a sparse recovery data structure of Lemma~\ref{lemma:streaming_recover} for dealing with the case of a small number of points at level $i$. More precisely, for each threshold $i = 1, \dots, \log_{1+\epsilon}\Delta$,
	we maintain the $\ell_0$-samplers, the sparse recovery algorithm,
and the $\ell_0$ norm in level $i'$ of the quadtree, where $i'$ is defined as in~(\ref{eqn:iprime}).
	Note that \cref{alg:mst_add_sketch} indeed outputs a linear sketch, since the extended $\ell_0$-sampler
	of \cref{lemma:neighbor_l0} is a linear sketch as well as the sparse recovery algorithm and the $\ell_0$-norm estimators (cf.~\cite{DBLP:conf/pods/KaneNW10}).

	In the query procedure (\cref{alg:mst_add_query}), for each threshold $i = 1, \dots, \log_{1+\epsilon}{\Delta}$,
we start by checking
whether or not there are at most $\sigma$ non-empty
	grid points at level $i$ (including those not relevant for the query), and if so, we recover them using the sparse recovery algorithm
	and compute $\mathring{c}^{(i)}_P$, the number of level-$i$ components relevant to the query, without any error.
	Otherwise, we cannot recover all non-empty grid points at level $i$ in a small space.
Then we check whether or not the query contains relatively few non-empty grid points, for which we use the first $\sigma$
	of $\ell_0$-samplers; namely, we check if at least $\kappa$ of sampled grid points are relevant for the query, and if not,
	our estimate for the number of components on that level is simply~$0$.
	Otherwise, the query contains relatively large number of non-empty grid points with high probability,
	and we query the remaining $\sigma$ of $\ell_0$-samplers,
	execute the stochastic-stopping BFS from each relevant sampled grid point,
	and use the outcomes of the BFS (i.e., whether or not the whole component was discovered)
	to estimate the number of components on that level, as described above.
		
	\begin{algorithm}[ht]
		\caption{Algorithm for maintaining sketches for \cref{lemma:streaming_mst_add}}
		\label{alg:mst_add_sketch}
		\begin{algorithmic}[1]
			\Procedure{MST-Sketch}{$S$} \Comment{$S$ is provided as a dynamic geometric stream}
				\State let $\kappa \gets k \log k\cdot\poly(\epsilon^{-1}\log{\Delta})$,
				$\sigma \gets k \poly(\epsilon^{-1}\log{\Delta}) \cdot \kappa$,
				$\Gamma \gets \epsilon^{-3} \log{\Delta}$ \Comment{as in~\eqref{eqn:param_def}}
				\For{$i \gets 0, \ldots, \log_{1 + \epsilon}{\Delta}$
				}
					\State let $i'$ be defined as in~\eqref{eqn:iprime}, i.e., the largest integer such that
						$2^{i'} \leq O(\epsilon \cdot (1 + \epsilon)^i)$
					\State initialize $2 \sigma$
					extended $\ell_0$-samplers $\{\mathcal{K}^{(i)}_j\}_j$ of \cref{lemma:neighbor_l0}
					with frequency vector indexed by $\mathcal{G}_{i'}$ and with neighborhoods containing all
					grid points from $\mathcal{G}_{i'}$ at distance at most $\Gamma\cdot (1+\epsilon)^i$
					\State initialize the sparse recovery algorithm $\mathcal{A}^{(i)}$ of Lemma~\ref{lemma:streaming_recover}
					with $T = \sigma$ for the same frequency vector indexed by $\mathcal{G}_{i'}$

\State initialize an $\ell_0$-norm estimator $\mathcal{N}^{(i)}$ (cf.~\cite{DBLP:conf/pods/KaneNW10})
					for the number of non-empty grid points in $\mathcal{G}_{i'}$ with error parameter $\epsilon$
				\EndFor
				\For{each insertion/deletion of point $x$}
					\For{$i \gets 0, \ldots, \log_{1+\epsilon}{\Delta}$}
						\State let $i'$ be defined as in~\eqref{eqn:iprime}
						\State let $y \in \mathcal{G}_{i'}$
						be the center of the level-$i'$ quadtree square containing $x$ \State increase/decrease the frequency of $y$ by $1$ for estimator $\mathcal{N}^{(i)}$
						\State increase/decrease the frequency of $y$ by $1$ in 
						the sparse recovery $\mathcal{A}^{(i)}$
						\For{each $\ell_0$-sampler $\{\mathcal{K}^{(i)}_j\}_j$} 
							\State increase/decrease by $1$ the frequency of $y$
							\State increase/decrease by $1$ the frequency of $y'$ in the neighborhood of any grid point $z\in \mathcal{G}_{i'}$
								with $\dist(z, y)\le \Gamma \cdot (1 + \epsilon)^i$ (cf.~\cite{FIS08})
						\EndFor
\EndFor
				\EndFor \Comment{stream of $S$ terminates}
			\EndProcedure
		\end{algorithmic}
	\end{algorithm}
        
	Both \cref{alg:mst_add_sketch,alg:mst_add_query} use the following parameters
	\begin{align}\label{eqn:param_def}
		\kappa = k \log k\cdot\poly(\epsilon^{-1} \log \Delta), \quad
		\sigma = k \poly(\epsilon^{-1} \log \Delta) \cdot \kappa, \quad
		\Gamma = \epsilon^{-3} \log{\Delta}.
	\end{align}
	We additionally require that
	\begin{equation}\label{eqn:kappa_sigma_relation}
	\frac{\kappa^2}{\sigma\cdot \Gamma^2} \ge \log k\cdot \poly(\epsilon^{-1} \log \Delta)\,,
	\end{equation}
	where $\poly(\epsilon^{-1} \log \Delta)$ in the RHS is the same as in the failure probability of \cref{lemma:streaming_mst_add};
	this inequality holds for a large-enough $\kappa$.
	Note that $\sigma = k^2 \log k\cdot\poly(\epsilon^{-1} \log \Delta)$.

	\begin{algorithm}[!ht]
		\caption{Algorithm for answering queries of \cref{lemma:streaming_mst_add}}
		\label{alg:mst_add_query}
		\begin{algorithmic}[1]
			\Procedure{Query}{$R, \{ R_1, \ldots, R_t \}, A$}
				\Comment{assume the access to sketches in \cref{alg:mst_add_sketch}}
\State let $\kappa, \sigma, \Gamma$
				be the same parameters as in \cref{alg:mst_add_sketch}
				\State let $W$ be as in~\eqref{eqn:diameter_bound}, i.e., the smallest power of $(1+\epsilon)$ that is no smaller than $2^{i_0}$, where $i_0$ is the level of $R$ in the quadtree
				\For{$i \gets 0, \ldots, \log_{1+\epsilon}{W}$}
					\State let $i'$ be as in~\eqref{eqn:iprime}, i.e., the largest integer such that
						$2^{i'} \leq O(\epsilon \cdot (1 + \epsilon)^i)$ \State query $\mathcal{N}^{(i)}$ and record the value as an estimate $\widetilde{n}^{(i)}_S$ \label{line:mathcaln}
					\State query the sparse recovery algorithm $\mathcal{A}^{(i)}$
					\If{$\mathcal{A}^{(i)}$ answers YES}
\State let $\mathcal{S}^{(i)}$ be the set of 
						at most $\sigma$ non-empty grid points in $\mathcal{G}_{i'}$
						returned by $\mathcal{A}^{(i)}$
						\State compute the number $\mathring{c}^{(i)}_P$ of 
						components in $\mathring{G}^{(i)}_P$ exactly
						\State define estimator $\widetilde{c}^{(i)}_P \gets \mathring{c}^{(i)}_P$
					\Else \Comment{More than $\sigma$ non-empty grid points at level $i$}
					\State query the first $\sigma$ of $\ell_0$-samplers $\{\mathcal{K}^{(i)}_{j}\}_j$,
					and suppose the set of uniformly sampled points is $\{ p_j\}_{j=1}^{\sigma}$ and
					$\{ U_j) \}_{j=1}^{\sigma}$ are the associated neighborhoods in grid $\mathcal{G}_{i'}$
\State let $\mathcal{S}^{(i)} \gets \{ p_j :
						\text{\textsc{Is-Relevant}($i'$, $p_j$ )} = \text{TRUE}  \}$
					be the subset relevant to the query \label{line:sample_set}
\If{$|\mathcal{S}^{(i)}| < \kappa$}
						\State define estimator $\widetilde{c}^{(i)}_P \gets 0$ \label{line:additive_error}
\Else \label{line:start_bfs}
\State query the $\sigma$ remaining $\ell_0$-samplers
						$\{\mathcal{K}^{(i)}_{j}\}_j$,
						and use the same notations $\{p_j\}_{j=1}^{\sigma}$ as well as
					$\{ (U_j) \}_{j=1}^{\sigma}$,
						to denote the outcome\label{line:notation}
						\State for each $p_j$,
						let $\beta_{p_j} \gets \textsc{BFS}(i, i', p_j, U_j)$
						\State define estimator \label{line:multiplicative_error}
						$ \widetilde{c}^{(i)}_P \gets (\widetilde{n}^{(i)}_S / \sigma)\cdot \sum_{j=1}^{\sigma}{ \indic( \text{\textsc{Is-Relevant}($ i', p_j $) = TRUE} ) \cdot \beta_{p_j} }$
					\EndIf \label{line:end_bfs}
					\EndIf
				\EndFor
				\State query the original MST sketch algorithm of~\cite{FIS08}, and let
				$\widetilde{\MST}(S)$ be the outcome
				\State \textbf{return} $\widetilde{\MST} \gets \Theta\left(\frac{\poly(\epsilon)}{k\log\Delta}\right) \widetilde{\MST}(S) + \widetilde{c}^{(0)}_P - W + \epsilon \cdot \sum_{i=0}^{\log_{1 + \epsilon}{W}}{ (1 + \epsilon)^i \cdot \widetilde{c}^{(i)}_P }$
					\label{line:MSTestimator}
					
				\Comment{the $\widetilde{\MST}(S)$ term is used to make sure $\widetilde{\MST}$ is never an underestimation w.h.p.}
			\EndProcedure
			\Procedure{BFS}{$i, i', p, U$} \State let $U'\gets \{ q\in U: \text{\textsc{Is-Relevant}($i'$, $q$ )} = \text{TRUE} \}$ be the restriction of the neighborhood of $p$ to the query
					\Comment{recall that $U$ contains all non-empty grid points of $\mathcal{G}_{i'}$ at distance $\le \Gamma\cdot (1+\epsilon)^i$ from $p$}
				\State sample integer $Y$ according to distribution $\Pr[Y \geq m] = \frac{1}{m}$
				\State if $Y \geq \Gamma$ or the component in the $(1+\epsilon)^i$-threshold graph on $U'$
				that contains $p$ has more than $Y$ vertices,
				set $\beta \gets 0$, otherwise set $\beta \gets 1$
				\State \textbf{return} $\beta$
			\EndProcedure
			\Procedure{Is-Relevant}{$i', p$}
			\State \textbf{return} TRUE if (i) $\exists x \in A$ s.t.\ portal $x$ belongs to the level-$i'$ quadtree square corresponding to $p$, or (ii) $\exists R_j$ (which is a query cell) s.t.\ $p\in R_j$; otherwise return FALSE
			\EndProcedure
		\end{algorithmic}
	\end{algorithm}
	
	\paragraph{Space and time analysis.}
	As can be seen from \cref{alg:mst_add_sketch}, the
	space is dominated by the $2\sigma$ extended $\ell_0$-samplers of \cref{lemma:neighbor_l0} with $\Gamma = \epsilon^{-3} \log{\Delta}$
	for each $0 \leq i \leq \log_{1 + \epsilon}{\Delta}$
	(both the sparse recovery algorithm $\mathcal{A}^{(i)}$ of Lemma~\ref{lemma:streaming_recover} and the $\ell_0$-norm estimator $\mathcal{N}^{(i)}$
	have smaller space cost).
	Thus, the space bound follows from \cref{lemma:neighbor_l0} and from the value of $\sigma$, defined in~\eqref{eqn:param_def}.

	The update time	is also dominated by maintaining $2\sigma$ extended $\ell_0$-samplers of \cref{lemma:neighbor_l0},
	which can be done in time $\poly(\log k\cdot\Gamma\cdot \epsilon^{-1} \log \Delta)$ for each sampler, including
	the updates to the associated neighborhoods. Thus, the total update time is $k^2 \cdot \poly(\log k\cdot\epsilon^{-1} \log \Delta)$.
	Finally, the query time is bounded by querying all $\ell_0$-samplers and executing the stochastic-stopping BFS from at most $\sigma$ sampled points, each with at most $\Gamma$ steps, which overall takes time of $k^2 \cdot \poly(\epsilon^{-1} \log k \log\Delta)$.	
	
	\subsection{Error Analysis}
	\label{sec:mst_add_error}
	We analyze the error of the query procedure (\cref{alg:mst_add_query}), using the notation defined in \cref{alg:mst_add_sketch,alg:mst_add_query}.	
	First, we show that procedure \textsc{Is-Relevant} captures the points in $P$ exactly, that is,
	for every $i$ and $p \in \mathring{G}^{(i)}_S$, it holds that $p \in \mathring{G}^{(i)}_P$ if and only if
	\textsc{Is-Relevant}($i', p$) returns TRUE.
	As we round to the centers of level-$i'$ quadtree squares, it suffices to shows that these level-$i'$ squares are not larger
	than the cells of the query. Recall that the simple square containing these cells is of level $i_0$ and
	that $i \leq \log_{1 + \epsilon}{W} \leq \log_{1 + \epsilon}{((1+\epsilon)\cdot 2^{i_0})}$ by~\eqref{eqn:diameter_bound}.
	As the cells have sidelength (at least) $\Theta(2^{i_0} \cdot \epsilon)$ (see \cref{sec:dp_review}),
	we have
	\begin{equation}\label{eqn:relevanceTesting}
	2^{i'} \leq O(\epsilon \cdot (1 + \epsilon)^i)\leq O(\epsilon\cdot (1 + \epsilon) \cdot 2^{i_0}) \leq \Theta(2^{i_0} \cdot \epsilon)\,,
	\end{equation}
	using the definition of $i'$ in~\eqref{eqn:iprime} and that the constant hidden in the $O$ notation in~\eqref{eqn:iprime} is sufficiently small, compared to the constant hidden in $\Theta$ in cell sidelength.
	This enables us to filter out the irrelevant samples from $S$ and work
	only on $\mathring{G}^{(i)}_P$, i.e., the points in $P$ rounded to the $2^{i'}$-grid $\mathcal{G}_{i'}$.

	Fix some $i$.
	We bound the error for the estimators $\widetilde{c}^{(i)}_{P}$.
	First, by the guarantee of the $\ell_0$-norm estimator $\mathcal{N}^{(i)}$
		(see line~\ref{line:mathcaln}),
		we know that it uses space $\poly(\log k\cdot \epsilon^{-1}\log\Delta)$
		to achieve with probability at least $1 - \exp(-\log k\cdot \poly(\epsilon^{-1} \log \Delta))$,
		\begin{equation}
			\widetilde{n}^{(i)}_S \in (1 \pm \epsilon) \cdot \mathring{n}^{(i)}_S.
			\label{eqn:tilden_hatn}
		\end{equation}
	We assume this happens, and the probability that it does not happen can be charged to the total failure probability.
	Similarly, we assume the success of all other sketches,
	namely $\mathcal{A}^{(i)}$ and $\mathcal{K}^{(i)}_j$'s, and we require the failure probability to be at most $\exp(-\log k\cdot \poly(\epsilon^{-1}\log \Delta))$.

First, assume that there are at most $\sigma$ non-empty grid points at level $i$, i.e.,
	$\mathring{n}^{(i)}_S \le \sigma$.
	As we assume that $\mathcal{A}^{(i)}$ succeeds, it recovers all the non-empty grid points at level $i$ and then indeed, we can compute $\mathring{c}^{(i)}_P$ exactly
	(recall that $\mathring{c}^{(i)}_P$ already accounts for the error introduced
	by rounding).

Now assume that $\mathring{n}^{(i)}_S > \sigma$.
We have two cases for $\widetilde{c}^{(i)}_P$ in \cref{alg:mst_add_query}
	depending on whether or not we collect enough samples $\mathcal{S}^{(i)}$ that pass the relevance test.
	However, whether or not enough samples are collected is a random event,
	which is not easy to handle if we do the case analysis on it.
	Therefore, we turn our attention to a tightly related quantity $\mathring{n}^{(i)}_P / \mathring{n}^{(i)}_S$, which is the fraction of non-empty grid points of $\mathcal{G}_{i'}$ in the query instance $P$.
	We analyze in \cref{lemma:np_smaller_ns} the error for the estimator when this fraction is small (and most likely, not enough samples are collected),
	and then, in \cref{lemma:np_larger_ns}, the estimation when $\mathring{n}^{(i)}_P / \mathring{n}^{(i)}_S$ is large. \begin{lemma}
		\label{lemma:np_smaller_ns}
Assume that $\mathring{n}^{(i)}_S > \sigma$.
For every $\lambda > 0$,
		if $\mathring{n}^{(i)}_P \leq \lambda \cdot \mathring{n}^{(i)}_S$,
		then $(1 + \epsilon)^i \cdot \mathring{c}^{(i)}_P \le O((\lambda/\epsilon) \cdot \MST(S))$,
		which implies that the estimator $\widetilde{c}^{(i)}_P = 0$ satisfies
		\begin{align*}
			(1 + \epsilon)^i \cdot \mathring{c}^{(i)}_P
			- O(\lambda / \epsilon \cdot \MST(S))
			\leq (1 + \epsilon)^i \cdot \widetilde{c}^{(i)}_P = 0
			\leq (1 + \epsilon)^i \cdot \mathring{c}^{(i)}_P.
		\end{align*}
	\end{lemma}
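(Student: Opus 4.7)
The plan is to prove the first inequality directly by chaining three elementary bounds, and then observe that the sandwich bound on $\widetilde{c}^{(i)}_P$ is an immediate consequence once we set $\widetilde{c}^{(i)}_P = 0$.

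First I would use the trivial observation that in any graph the number of connected components is at most the number of vertices, so $\mathring{c}^{(i)}_P \leq \mathring{n}^{(i)}_P$. Substituting the hypothesis $\mathring{n}^{(i)}_P \leq \lambda \cdot \mathring{n}^{(i)}_S$ gives
\[
(1 + \epsilon)^i \cdot \mathring{c}^{(i)}_P \;\leq\; \lambda \cdot (1 + \epsilon)^i \cdot \mathring{n}^{(i)}_S.
\]
Then I would invoke \cref{mst:lb}, which asserts $\MST(S) \geq \Omega((1+\epsilon)^i \cdot \mathring{n}^{(i)}_S)$, equivalently $(1+\epsilon)^i \cdot \mathring{n}^{(i)}_S \leq O(\MST(S))$. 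Combining these yields the desired bound $(1+\epsilon)^i \cdot \mathring{c}^{(i)}_P \leq O(\lambda \cdot \MST(S))$.

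For the sandwich inequality, the upper bound $(1+\epsilon)^i \cdot \widetilde{c}^{(i)}_P \leq (1+\epsilon)^i \cdot \mathring{c}^{(i)}_P$ holds trivially because $\widetilde{c}^{(i)}_P = 0$ while $\mathring{c}^{(i)}_P \geq 0$. The lower bound $(1+\epsilon)^i \cdot \mathring{c}^{(i)}_P - O(\lambda \cdot \MST(S)) \leq (1+\epsilon)^i \cdot \widetilde{c}^{(i)}_P = 0$ rearranges precisely to the first claim just proved.

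There is essentially no obstacle here: both the vertex-bound on components and \cref{mst:lb} are off-the-shelf, and the case $\widetilde{c}^{(i)}_P = 0$ was set up in \cref{alg:mst_add_query} (\cref{line:additive_error}) exactly so that, whenever the fraction of relevant non-empty grid points is small, the zero estimate is automatically within an additive error that we can charge to $\MST(S)$. The only thing worth being careful about is that the hypothesis is stated in terms of the \emph{actual} ratio $\mathring{n}^{(i)}_P / \mathring{n}^{(i)}_S$ rather than the algorithm's sample-based proxy $|\mathcal{S}^{(i)}| < \kappa$; this lemma is the ``deterministic'' half of the error analysis, to be paired later with a concentration argument (handled in \cref{lemma:np_larger_ns}) showing that the two case separations essentially align with high probability.
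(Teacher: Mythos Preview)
Your proposal is correct and follows essentially the same approach as the paper: chain $\mathring{c}^{(i)}_P \le \mathring{n}^{(i)}_P$, the hypothesis, and \cref{mst:lb}, then read off the sandwich bound from $\widetilde{c}^{(i)}_P = 0$. The paper's proof is the same three-step inequality chain, just without the additional contextual commentary you provide.
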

	\begin{proof}
Using $\mathring{c}^{(i)}_P\le \mathring{n}^{(i)}_P$ and the condition of the lemma, we get
		\begin{align*}
			(1 + \epsilon)^i \cdot \mathring{c}^{(i)}_P
			\leq (1 + \epsilon)^i \cdot \mathring{n}^{(i)}_P
			\leq (1 + \epsilon)^i \cdot \lambda \cdot \mathring{n}^{(i)}_S
			\leq O(\lambda / \epsilon \cdot \MST(S))\,,
		\end{align*}
		where the last inequality follows from Lemma~\ref{mst:lb},
using that $\mathring{n}^{(i)}_S > \sigma\ge 8$.
\end{proof}

	\begin{lemma}
		\label{lemma:np_larger_ns}
Assume that $\mathring{n}^{(i)}_S > \sigma$.
		For every $\lambda \ge 8/\sigma$,
if $\mathring{n}^{(i)}_P \geq \lambda \cdot \mathring{n}^{(i)}_S$,
		then with probability at least $1 - \exp\left(
				- \Omega(\sigma) \cdot
					\left(\frac{\lambda}{\Gamma}
					\right)^2
			\right)$,
		the estimator $\widetilde{c}^{(i)}_P$ in line \ref{line:multiplicative_error} of \cref{alg:mst_add_query} satisfies
		\begin{align*}
			| \widetilde{c}^{(i)}_P - \mathring{c}^{(i)}_P | \leq O(\epsilon)
			\cdot \mathring{c}^{(i)}_P + O\left( \frac{\MST(P)}{\epsilon\cdot\Gamma \cdot (1 + \epsilon)^i} \right).
		\end{align*}
	\end{lemma}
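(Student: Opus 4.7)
}
My plan is to analyze bias and concentration of $\widetilde{c}^{(i)}_P$ separately. For the bias, note that each $p_j$ returned by the $\ell_0$-sampler is uniform over $\mathring{G}^{(i)}_S$, so conditional on being relevant, $p_j$ is uniform over $\mathring{G}^{(i)}_P$. The stochastic-stopping BFS sets $\beta_j = 1$ iff $|C_P(p_j)| \leq Y < \Gamma$, where $C_P(p_j)$ is the component of $p_j$ in $\mathring{G}^{(i)}_P$; using $\Pr[Y \geq m] = 1/m$ we get
\[
\mathbb{E}[I_j \beta_j] = \frac{1}{\mathring{n}^{(i)}_S} \sum_{v : |C_P(v)| \leq \Gamma}\left(\frac{1}{|C_P(v)|} - \frac{1}{\Gamma}\right) = \frac{1}{\mathring{n}^{(i)}_S}\left( c^{\mathrm{small}} - \frac{m^{\mathrm{small}}}{\Gamma} \right),
\]
where $c^{\mathrm{small}}$ is the number of components of size at most $\Gamma$ in $\mathring{G}^{(i)}_P$ and $m^{\mathrm{small}}$ the total number of vertices in these small components. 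Since the number of large components is at most $\mathring{n}^{(i)}_P/\Gamma$, both the discrepancy $|c^{\mathrm{small}} - \mathring{c}^{(i)}_P|$ and $m^{\mathrm{small}}/\Gamma$ are bounded by $\mathring{n}^{(i)}_P/\Gamma$, giving $\big|\mathring{n}^{(i)}_S \cdot \mathbb{E}[\hat X] - \mathring{c}^{(i)}_P\big| \leq O(\mathring{n}^{(i)}_P/\Gamma)$, where $\hat X := \frac{1}{\sigma}\sum_j I_j \beta_j$.

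Combining this with~\eqref{eqn:tilden_hatn}, namely $\widetilde{n}^{(i)}_S \in (1\pm\epsilon)\mathring{n}^{(i)}_S$, yields the bias bound $\big|\mathbb{E}[\widetilde{c}^{(i)}_P] - \mathring{c}^{(i)}_P\big| \leq O(\epsilon)\mathring{c}^{(i)}_P + O(\mathring{n}^{(i)}_P/\Gamma)$. Translating via the MST lower bound of \cref{mst:lb} applied to $P$, the additive term is $O(\MST(P)/((1+\epsilon)^i \Gamma))$, matching the target form.

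For the concentration step, I would apply Hoeffding's inequality to $\hat X$: the $\sigma$ terms $I_j\beta_j \in \{0,1\}$ are mutually independent (each $\ell_0$-sampler uses fresh randomness, and each $Y$ is drawn independently), so
\[
\Pr\!\left[|\hat X - \mathbb{E}[\hat X]| \geq t\right] \leq 2\exp(-2\sigma t^2).
\]
I will choose $t = \Theta(\lambda/\Gamma)$, which gives a failure probability of $\exp(-\Omega(\sigma (\lambda/\Gamma)^2))$ as required. Since $\widetilde{n}^{(i)}_S \leq 2\mathring{n}^{(i)}_S$ and $\mathring{n}^{(i)}_S \leq \mathring{n}^{(i)}_P/\lambda$ by the hypothesis $\mathring{n}^{(i)}_P \geq \lambda \mathring{n}^{(i)}_S$, we get $|\widetilde{c}^{(i)}_P - \mathbb{E}[\widetilde{c}^{(i)}_P]| = \widetilde{n}^{(i)}_S |\hat X - \mathbb{E}[\hat X]| \leq O(\mathring{n}^{(i)}_P/\Gamma)$, which again converts to $O(\MST(P)/((1+\epsilon)^i\Gamma))$ through \cref{mst:lb}. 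A triangle inequality between bias and concentration yields the lemma.

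The main obstacle is making sure the sampling step and the stochastic BFS remain correct when restricted to $\mathring{G}^{(i)}_P$: one must verify that the neighborhoods $U$ returned by the extended $\ell_0$-sampler of \cref{lemma:neighbor_l0}, after filtering through \textsc{Is-Relevant}, still contain every non-empty grid point of $\mathring{G}^{(i)}_P$ within distance $\Gamma\cdot(1+\epsilon)^i$ of the sampled point, so that the capped BFS executed on $U'$ reaches exactly the same component-within-$Y$-steps as it would in the full graph $\mathring{G}^{(i)}_P$. This relies on the radius $\Gamma\cdot(1+\epsilon)^i$ being at least the maximum BFS diameter $Y \cdot (1+\epsilon)^i$ we ever explore (ensured by $Y < \Gamma$), and on the compatibility of level-$i'$ squares with query cells established in~\eqref{eqn:relevanceTesting}. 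The remaining subtlety is threading the choices of $\sigma$, $\kappa$, and $\Gamma$ in~\eqref{eqn:param_def} and~\eqref{eqn:kappa_sigma_relation} so that the Hoeffding exponent $\sigma(\lambda/\Gamma)^2$ dominates $\log k\cdot \poly(\epsilon^{-1}\log\Delta)$ for the values of $\lambda$ at which \cref{lemma:np_smaller_ns,lemma:np_larger_ns} will subsequently be stitched together.
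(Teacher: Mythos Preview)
Your proposal is correct and follows essentially the same approach as the paper: compute $\E[I_j\beta_j]$ by decomposing over components of $\mathring{G}^{(i)}_P$, use \eqref{eqn:tilden_hatn} to absorb the $\widetilde{n}^{(i)}_S$ vs.\ $\mathring{n}^{(i)}_S$ discrepancy into an $O(\epsilon)\mathring{c}^{(i)}_P$ term, apply Hoeffding with deviation $\Theta(\lambda/\Gamma)$, and convert the resulting $O(\mathring{n}^{(i)}_P/\Gamma)$ additive error via \cref{mst:lb}. The only cosmetic difference is that the paper bounds $\E[\widetilde{c}^{(i)}_P]$ from above and below separately (obtaining the cleaner upper bound $(1+\epsilon)\mathring{c}^{(i)}_P$ without an additive term), whereas you bound the bias symmetrically; both routes yield the same final inequality.
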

	\begin{proof}
		We use some of the notations from \cref{alg:mst_add_query},
		and suppose $p_j$ and $U_j$ are those from line~\ref{line:notation}.
		For the ease of notation, let
		$\mathcal{E}_j$ be the event that \textsc{Is-Relevant}($i', p_j$) = TRUE.
		Using a similar calculation as in~\cite{FIS08}, we observe that for every $p_j$,
\begin{align}
			\E[ \indic(\mathcal{E}_j) \cdot \beta_{p_j} ]
			&= \Pr[\mathcal{E}_j] \cdot \Pr[\beta_{p_j} = 1 \mid \mathcal{E}_j] \nonumber \\
			&= \frac{\mathring{n}^{(i)}_P}{\mathring{n}^{(i)}_S} \cdot
			\sum_{\text{component H in $\mathring{G}^{(i)}_P$}}{ \Pr[p_j \in H \mid \mathcal{E}_j] \cdot \Pr[Y \geq |H| \land Y < \Gamma]  } \nonumber \\
			&= \frac{\mathring{n}^{(i)}_P}{\mathring{n}^{(i)}_S} \cdot
			\sum_{\text{component H in $\mathring{G}^{(i)}_P$}}{ \frac{|H|}{\mathring{n}^{(i)}_P} \cdot  \Pr[Y \geq |H| \land Y < \Gamma]} \nonumber \\
			&= \frac{1}{\mathring{n}^{(i)}_S} \cdot
			\sum_{\text{component H in $\mathring{G}^{(i)}_P$}}{ |H|  \cdot  \Pr[Y \geq |H| \land Y < \Gamma]},
			\label{eqn:exp_single}
		\end{align}
        recalling that $\beta$'s are binary variables.
		Then, we get an upper bound on $\E[\widetilde{c}^{(i)}_P]$ using~\eqref{eqn:tilden_hatn},
		\begin{align*}
			\E[\widetilde{c}^{(i)}_P]
			&= \frac{\widetilde{n}^{(i)}_S}{\sigma } \sum_{j=1}^{\sigma}{
					\E[ \indic(\mathcal{E}_j) \cdot \beta_{p_j} ]
			} \\
			&= \frac{\widetilde{n}^{(i)}_S}{\sigma \mathring{n}^{(i)}_S} \sum_{j=1}^{\sigma}{
				\sum_{\text{component $H$ in $\mathring{G}^{(i)}_P$}}
				{
					|H| \cdot \Pr[Y \geq |H| \land Y < \Gamma]
				}
			} \\
			&\leq \frac{1 + \epsilon}{\sigma} \cdot \sigma\cdot {
				\sum_{\text{component $H$ in $\mathring{G}^{(i)}_P$}}
				{
					|H| \cdot \Pr[Y \geq |H|]
				}
			} \\
			&= (1 + \epsilon) \cdot \mathring{c}^{(i)}_P\,,
		\end{align*}
		where the last step follows from the distribution of $Y$.
		Similarly, we also get a lower bound
		\begin{align*}
			\E[\widetilde{c}^{(i)}_P]
			&= \frac{\widetilde{n}^{(i)}_S}{\sigma } \sum_{j=1}^{\sigma}{
					\E[ \indic(\mathcal{E}_j) \cdot \beta_{p_j} ]
			} \\
			&= \frac{\widetilde{n}^{(i)}_S}{\sigma \mathring{n}^{(i)}_S} \sum_{j=1}^{\sigma}{
				\sum_{\text{component $H$ in $\mathring{G}^{(i)}_P$}}
				{
					|H| \cdot \Pr[Y \geq |H| \land Y < \Gamma]
				}
			} \\
			&\geq \frac{1 - \epsilon}{\sigma} \cdot \sigma\cdot{
				\sum_{\text{component $H$ in $\mathring{G}^{(i)}_P$}}
				{
					|H| \cdot \left(\frac{1}{|H|} - \frac{1}{\Gamma}\right)
				}
			} \\
			&\geq (1 - \epsilon) \cdot \mathring{c}^{(i)}_P - \frac{1 - \epsilon}{\Gamma} \cdot \mathring{n}^{(i)}_P.
\end{align*}
Next, we apply Hoeffding's inequality. To this end, consider random variables
		$Z_j = \widetilde{n}^{(i)}_S / \sigma \cdot \indic(\mathcal{E}_j) \cdot \beta_{p_j}$ for $j = 1,\dots,\sigma$; note that they are independent
		and we have that $\widetilde{c}^{(i)}_P = \sum_{j=1}^\sigma Z_j$.
		Therefore, by Hoeffding's inequality, it holds that
		\begin{align*}
			\Pr\left[
				\left| \widetilde{c}^{(i)}_P - \E[\widetilde{c}^{(i)}_P] \right|
				\geq \Omega\left( \frac{ \mathring{n}^{(i)}_P } {\Gamma} \right)
			\right]
			&\leq 2 \exp\left(
				- 2\cdot \Omega\left( \frac{ \mathring{n}^{(i)}_P } {\Gamma} \right)^2\cdot \frac{1}{\sigma}
				\cdot \left(\frac{\sigma}{ \widetilde{n}^{(i)}_S} \right)^2
			\right)
			\\
			&= \exp\left(
				- \Omega(\sigma) \cdot
					\left(\frac{\mathring{n}^{(i)}_P}{ \Gamma \widetilde{n}^{(i)}_S}
					\right)^2
			\right)
			\leq \exp\left(
				- \Omega(\sigma) \cdot
					\left(\frac{\lambda}{\Gamma}
					\right)^2
			\right).
		\end{align*}
		where in the second inequality, we use $\mathring{n}^{(i)}_P \geq \lambda \cdot \mathring{n}^{(i)}_S\geq \lambda \cdot (1-\epsilon)\cdot \widetilde{n}^{(i)}_S$ by the assumption of the lemma and by~\eqref{eqn:tilden_hatn}.
		Combining the expectation bound and the above concentration inequality,
		we conclude that with probability at least $1 - \exp\left(
				- \Omega(\sigma) \cdot
					\left(\frac{\lambda}{\Gamma}
					\right)^2
			\right)$,
		\begin{align*}
			| \widetilde{c}^{(i)}_P - \mathring{c}^{(i)}_P |
			\leq O(\epsilon) \cdot \mathring{c}^{(i)}_P
				+ O\left( \frac{ \mathring{n}^{(i)}_P } {\Gamma} \right)
			\leq O(\epsilon) \cdot \mathring{c}^{(i)}_P
				+ O\left( \frac{ \MST(P) } {\epsilon\cdot\Gamma \cdot (1 + \epsilon)^i} \right),
		\end{align*}
		where the last inequality follows from \cref{mst:lb}; 
here we use that $\mathring{n}^{(i)}_P\ge 8$,
		which holds by the assumptions of the lemma
		as $\mathring{n}^{(i)}_P \geq \lambda \cdot \mathring{n}^{(i)}_S > \lambda \cdot \sigma \ge 8$.
\end{proof}
	Next, we do the following case analysis,
still assuming that $\mathring{n}^{(i)}_S > \sigma$.
Let $\lambda_1 := \frac{\kappa}{2\sigma}$, and $\lambda_2 := \frac{2\kappa}{\sigma}$; note that $\lambda_1 \leq \lambda_2$.

	\paragraph{Case I: $\mathring{n}^{(i)}_P < \lambda_1 \cdot \mathring{n}^{(i)}_S$.}
	We claim that with high probability, $|\mathcal{S}^{(i)}| < \kappa$.
	This can be done by using Hoeffding's inequality, as shown in
	\cref{claim:hoeffding_small}.
	\begin{claim}
		\label{claim:hoeffding_small}
		If $\mathring{n}^{(i)}_P < \lambda_1 \cdot \mathring{n}^{(i)}_S$, then $\Pr[|\mathcal{S}^{(i)}| \geq \kappa] \leq \exp(-\log k\cdot \poly(\epsilon^{-1}\log{\Delta}))$.
	\end{claim}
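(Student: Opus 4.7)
The plan is to write $|\mathcal{S}^{(i)}|$ as a sum of $\sigma$ independent Bernoulli indicators and then invoke Hoeffding's inequality, exactly as the discussion before the claim suggests. Conditioning on the success of all $\ell_0$-samplers $\mathcal{K}^{(i)}_j$ (whose failure probability is already absorbed into the global bound), each query returns a point $p_j$ that is uniformly distributed over the non-empty grid points of $\mathring{G}^{(i)}_S$, and distinct samplers are independent. Let $Y_j := I(\text{\textsc{Is-Relevant}}(i', p_j) = \text{TRUE})$. By the analysis right after~\eqref{eqn:relevanceTesting}, \textsc{Is-Relevant} returns TRUE exactly for the non-empty grid points in $\mathring{G}^{(i)}_P$, so
\begin{equation*}
\E[Y_j] \;=\; \frac{\mathring{n}^{(i)}_P}{\mathring{n}^{(i)}_S} \;<\; \lambda_1 \;=\; \frac{\kappa}{2\sigma},
\end{equation*}
and hence $\E[|\mathcal{S}^{(i)}|] = \sum_{j=1}^{\sigma} \E[Y_j] < \kappa/2$.

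Since $|\mathcal{S}^{(i)}| = \sum_{j=1}^\sigma Y_j$ is a sum of $\sigma$ independent $\{0,1\}$-valued random variables, Hoeffding's inequality yields
\begin{equation*}
\Pr\bigl[\,|\mathcal{S}^{(i)}| \geq \kappa\,\bigr]
\;\leq\; \Pr\!\left[\,|\mathcal{S}^{(i)}| - \E[|\mathcal{S}^{(i)}|] \geq \tfrac{\kappa}{2}\,\right]
\;\leq\; \exp\!\left(-\,\frac{2 (\kappa/2)^2}{\sigma}\right)
\;=\; \exp\!\left(-\,\frac{\kappa^2}{2\sigma}\right).
\end{equation*}
It remains to translate this bound into the stated form. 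By assumption~\eqref{eqn:kappa_sigma_relation}, $\kappa^2/(\sigma\Gamma^2) \ge \log k\cdot \poly(\epsilon^{-1}\log\Delta)$, and since $\Gamma = \epsilon^{-2}\log\Delta \ge 1$, we also have $\kappa^2/\sigma \ge \log k\cdot \poly(\epsilon^{-1}\log\Delta)$ (with the polynomial on the right-hand side of the same form as the one in the failure probability of \cref{lemma:streaming_mst_add}). Plugging this in gives the desired bound $\exp(-\log k\cdot \poly(\epsilon^{-1}\log\Delta))$.

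The only subtle point is the conditioning: strictly speaking, the $\ell_0$-samplers can fail, but their failure probability is $\exp(-\log k\cdot\poly(\epsilon^{-1}\log\Delta))$, so we can union-bound it into the claim without changing its form. No new ideas are required beyond this; the claim is essentially a direct Hoeffding calculation made possible by the chosen sample size $\sigma$ being large relative to $\kappa$.
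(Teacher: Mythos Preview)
Your proof is correct and follows essentially the same approach as the paper: write $|\mathcal{S}^{(i)}|$ as a sum of $\sigma$ independent Bernoulli indicators with mean $\mathring{n}^{(i)}_P/\mathring{n}^{(i)}_S < \lambda_1 = \kappa/(2\sigma)$, apply Hoeffding's inequality to get the $\exp(-\kappa^2/(2\sigma))$ bound, and invoke~\eqref{eqn:kappa_sigma_relation} to conclude. Your explicit remark that $\Gamma\ge 1$ justifies passing from $\kappa^2/(\sigma\Gamma^2)$ to $\kappa^2/\sigma$ is a detail the paper leaves implicit, and your handling of the conditioning on sampler success is likewise already absorbed into the paper's standing assumptions.
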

	\begin{proof}
		Let $Z_j$ be the $\{0, 1\}$ random variable, that takes $1$
		if \textsc{Is-Relevant}($i', p_j$) = TRUE.
		Then
		\begin{equation*}
			\Pr[Z_j = 1] = \frac{\mathring{n}^{(i)}_P}{\mathring{n}^{(i)}_S}.
		\end{equation*}
		So $|\mathcal{S}^{(i)}| = \sum_{j=1}^{\sigma}{ Z_j }$,
		and hence,
		\begin{equation*}
			\E[\mathcal{S}^{(i)}] = \frac{\sigma \mathring{n}^{(i)}_P}{\mathring{n}^{(i)}_S} < \sigma \lambda_1
			= \frac{\kappa}{2}.
		\end{equation*}
		By Hoeffding's inequality,
		\begin{align*}
			\Pr[|\mathcal{S}^{(i)}| \geq \kappa]
			&= \Pr[|\mathcal{S}^{(i)}| - \E[ |\mathcal{S}^{(i)}| ]
				\geq \kappa- \E[\mathcal{S}^{(i)}]
			] \\
			&\leq \Pr\left[|\mathcal{S}^{(i)}| - \E[ |\mathcal{S}^{(i)}| ]
				\geq  \frac{\kappa}{2}\right] \\
			&\leq \exp\left(-\frac{\kappa^2}{2\sigma}\right)
			= \exp(-\log k\cdot \poly(\epsilon^{-1}\log\Delta)),
		\end{align*}
		where the last inequality follows from~\eqref{eqn:kappa_sigma_relation}.
\end{proof}
	Then, we assume $|\mathcal{S}^{(i)}| < \kappa$ happens in Case~I,
	and by using \cref{lemma:np_smaller_ns} with $\lambda = \lambda_1$,
	we conclude that the estimator $\widetilde{c}^{(i)}_P = 0$ satisfies
	\begin{align}\label{eqn:tilde_c_caseI}
		(1 + \epsilon)^i \cdot \mathring{c}^{(i)}_P - O\left(\frac{\lambda_1}{\epsilon} \MST(S)\right)
		\leq (1 + \epsilon)^i \cdot \widetilde{c}^{(i)}_P
		\leq (1 + \epsilon)^i \cdot \mathring{c}^{(i)}_P.
	\end{align}
	The case $|\mathcal{S}^{(i)}| \ge \kappa$ only happens with a very small probability,
	and we charge it to the total failure probability of \cref{lemma:streaming_mst_add}.

	\paragraph{Case II: $\mathring{n}^{(i)}_P > \lambda_2 \cdot \mathring{n}^{(i)}_S$.}
	Similarly to Case I, we claim that with probability $1 - \exp(-\log k\cdot \poly(\epsilon^{-1}\log{\Delta}))$,
	$|\mathcal{S}^{(i)}| \geq \kappa$.
	This can be done again by using Hoeffding's inequality,
	and we omit the details since it is very similar to \cref{claim:hoeffding_small}.
	Then, assuming $|\mathcal{S}^{(i)}| \ge \kappa$ and
using \cref{lemma:np_larger_ns} with $\lambda = \lambda_2$,
	we conclude that with probability $1 - \exp\left(
				- \Omega(\sigma) \cdot
					\left(\frac{\lambda_2}{\Gamma}
					\right)^2
			\right)$, the estimator in line \ref{line:multiplicative_error} of \cref{alg:mst_add_query} satisfies
	\begin{align}\label{eqn:tilde_c_caseII}
			| \widetilde{c}^{(i)}_P - \mathring{c}^{(i)}_P | \leq O(\epsilon)
			\cdot \mathring{c}^{(i)}_P + O\left( \frac{\MST(P)}{\epsilon\cdot\Gamma \cdot (1 + \epsilon)^i} \right).
	\end{align}
	\paragraph{Case III:}
	None of the other two cases happens,
	so
	$ \lambda_1 \cdot \mathring{n}^{(i)}_S \leq \mathring{n}^{(i)}_P \leq \lambda_2 \cdot \mathring{n}^{(i)}_S$.
	Then we cannot decide with high probability which type of estimate
	for $\widetilde{c}^{(i)}_P$ the algorithm uses.
	However, since we have both an upper and a lower bound for $\mathring{n}^{(i)}_P / \mathring{n}^{(i)}_S$,
	\cref{lemma:np_smaller_ns,lemma:np_larger_ns} can both be applied
	with a reasonable guarantee.
	In particular, we apply \cref{lemma:np_smaller_ns} with $\lambda = \lambda_2$, which for the estimator $\widetilde{c}^{(i)}_P = 0$
	implies
	\begin{align*}
		(1 + \epsilon)^i \cdot \mathring{c}^{(i)}_P - O\left(\frac{\lambda_2}{\epsilon} \MST(S)\right)
		\leq (1 + \epsilon)^i \cdot \widetilde{c}^{(i)}_P
		\leq (1 + \epsilon)^i \cdot \mathring{c}^{(i)}_P\,.
	\end{align*}
	Next, \cref{lemma:np_larger_ns} with $\lambda = \lambda_1$
	yields with probability at least $1 - \exp\left(
				- \Omega(\sigma) \cdot
					\left(\frac{\lambda_1}{\Gamma}
					\right)^2
			\right)$,
	\begin{align*}
			| \widetilde{c}^{(i)}_P - \mathring{c}^{(i)}_P | \leq O(\epsilon)
			\cdot \mathring{c}^{(i)}_P + O\left( \frac{\MST(P)}{\epsilon\cdot\Gamma \cdot (1 + \epsilon)^i} \right)
	\end{align*}
	for the estimator in line \ref{line:multiplicative_error} of \cref{alg:mst_add_query}.
	Since we do not know which estimator the algorithm actually uses,
	to bound the error, we need to take the worse bound for both directions of these two inequalities, and for the failure probability as well.
	Thus, it holds with probability at least $1 - \exp\left(
				- \Omega(\sigma) \cdot
					\left(\frac{\lambda_1}{\Gamma}
					\right)^2
			\right)$ that
	\begin{align}\label{eqn:tilde_c_caseIII_UB}
		(1 + \epsilon)^i \cdot \widetilde{c}^{(i)}_P
		\leq (1 + O(\epsilon)) \cdot (1 + \epsilon)^i \cdot \mathring{c}^{(i)}_P + O\left(\frac{\MST(P)}{\epsilon\cdot\Gamma}\right),
	\end{align}
	and that
	\begin{align}\label{eqn:tilde_c_caseIII_LB}
		(1 + \epsilon)^i \cdot \widetilde{c}^{(i)}_P
		\geq (1 - O(\epsilon)) \cdot (1 + \epsilon)^i \cdot \mathring{c}^{(i)}_P - O\left( \frac{\MST(P)}{\epsilon\cdot\Gamma} \right) - O\left(\frac{\lambda_2}{\epsilon} \MST(S)\right).
	\end{align}

	\paragraph{Conclusion of the error analysis.}
	Overall, we bound $\widetilde{c}^{(i)}_P$ using the worse bound in both directions (similarly as in Case~III).
Namely, using \cref{eqn:tilde_c_caseI,eqn:tilde_c_caseII,eqn:tilde_c_caseIII_UB,eqn:tilde_c_caseIII_LB} and the union bound,
	except for a failure probability at most
	\begin{align}
			&\quad O(\log_{1 + \epsilon}{W}) \cdot \left(
				\exp(-\log k\cdot \poly(\epsilon^{-1}\log \Delta))
				+ \exp\left(
				- \Omega(\sigma) \cdot
					\left(\frac{\lambda_1}{\Gamma}
					\right)^2
			\right) \right) \nonumber\\
			&\leq \exp(-\log k\cdot \poly(\epsilon^{-1}\log \Delta))
			+ \exp\left(
				-\Omega\left(
					\frac{\kappa^2}{\sigma\Gamma^2}
				\right)
			\right) \nonumber\\
			&\leq \exp(-\log k\cdot \poly(\epsilon^{-1}\log \Delta))\,, \label{eqn:tilde_c_failure_prob}
	\end{align}
	where the last inequality follows from~\eqref{eqn:kappa_sigma_relation},
for any $i = 0,\dots, \log_{1 + \epsilon}{W}$, it holds that
	\begin{align*}
		(1 + \epsilon)^i \cdot \widetilde{c}^{(i)}_P
		\leq (1 + O(\epsilon)) \cdot (1 + \epsilon)^i \cdot \mathring{c}^{(i)}_P + O\left(\frac{\MST(P)}{\epsilon\cdot\Gamma}\right)\,,
	\end{align*}
	and that
	\begin{align*}
		(1 + \epsilon)^i \cdot \widetilde{c}^{(i)}_P
		\geq (1 - O(\epsilon)) \cdot (1 + \epsilon)^i \cdot \mathring{c}^{(i)}_P - O\left( \frac{\MST(P)}{\epsilon\cdot\Gamma} \right) - O\left(\frac{\lambda_2}{\epsilon} \MST(S)\right)\,.
	\end{align*}
(Note that when $\mathring{n}^{(i)}_S \le \sigma$, both of
		these inequalities hold trivially as $\widetilde{c}^{(i)}_P = \mathring{c}^{(i)}_P$.)
Summing over $i$, except for a failure probability
        bounded as in~\eqref{eqn:tilde_c_failure_prob}, we have
	\begin{align*}
		\sum_{i=0}^{\log_{1 + \epsilon}{W}}&{(1 + \epsilon)^i \cdot \widetilde{c}^{(i)}_P}
		\\
		&\geq \sum_{i=0}^{\log_{1 + \epsilon}{W}}{
			(1 - O(\epsilon)) \cdot (1 + \epsilon)^i \cdot \mathring{c}^{(i)}_P
			- O\left( \frac{\MST(P)}{\epsilon\cdot\Gamma} \right) - O\left(\frac{\lambda_2}{\epsilon} \MST(S)\right)
		} \\
		&\geq (1 - O(\epsilon)) \cdot
		\left(\sum_{i=1}^{\log_{1 + \epsilon}{W}}
		{
			(1 + \epsilon)^i \cdot \mathring{c}^{(i)}_P
		} \right)
		- O\left( \frac{\log \Delta}{\epsilon^2 \Gamma}  \right) \cdot \MST(P)
		- O\left( \frac{\lambda_2 \log{\Delta}}{\epsilon^2}\right) \cdot \MST(S)  \\
		&\geq (1 - O(\epsilon)) \cdot
		\left(\sum_{i=1}^{\log_{1 + \epsilon}{W}}
		{
			(1 + \epsilon)^i \cdot \mathring{c}^{(i)}_P
		}\right)
		- O(\epsilon) \cdot \MST(P) - O\left( \frac{\kappa \log{\Delta}}{\epsilon^2 \sigma} \right) \cdot \MST(S) \\
		&\geq (1 - O(\epsilon)) \cdot
		\left( \sum_{i=1}^{\log_{1 + \epsilon}{W}}
		{
			(1 + \epsilon)^i \cdot \mathring{c}^{(i)}_P
		} \right)
		- O(\epsilon) \cdot \MST(P) - O\left( \frac{\poly(\epsilon)}{k \log\Delta} \right) \cdot \MST(S),
	\end{align*}
	where we use $W \leq O(\Delta)$, $\Gamma = \epsilon^{-3}\cdot \log{\Delta}$, and the last inequality is by the definition of $\sigma = k \poly(\epsilon^{-1}\log \Delta) \cdot \kappa$.
	Similarly, for the other direction,
	\begin{align*}
		\sum_{i=0}^{\log_{1 + \epsilon}{W}}{(1 + \epsilon)^i \cdot \widetilde{c}^{(i)}_P}
		&\leq \sum_{i=0}^{\log_{1 + \epsilon}{W}}{
			(1 + O(\epsilon)) \cdot (1 + \epsilon)^i \cdot \mathring{c}^{(i)}_{P}
			+ O\left( \frac{\MST(P)}{\epsilon\cdot\Gamma} \right)
		}  \\
		&\leq (1 + O(\epsilon))
		\left( \sum_{i=1}^{\log_{1 + \epsilon}{W}}{
			(1 + \epsilon)^{i} \cdot \mathring{c}^{(i)}_P
		} \right)
		+ O\left(\frac{\log\Delta}{\epsilon^2 \Gamma}\right) \cdot \MST(P) \\
		&\leq (1 + O(\epsilon))
		\left( \sum_{i=1}^{\log_{1 + \epsilon}{W}}{
			(1 + \epsilon)^{i} \cdot \mathring{c}^{(i)}_P
		} \right)
		+ O(\epsilon) \cdot \MST(P),
	\end{align*}
	where the last inequality is by the definition of $\Gamma = \epsilon^{-3} \log{\Delta}$.
We plug in these two bounds together with \cref{lemma:num_component_relation} into 
our estimate $\widetilde{\MST}$ in line~\ref{line:MSTestimator} of Algorithm~\ref{alg:mst_add_query}.
	Finally, we apply the MST formula in~\eqref{eqn:mst_component} to show the error guarantee of \cref{lemma:streaming_mst_add},
	which concludes the proof of this lemma.\qed

\section{Lower Bound: $\Omega(k)$ Space is Necessary}
\label{sec:lb}

In this section we demonstrate that streaming algorithms for SFP achieving any finite approximation ratio for SFP require $\Omega(k)$ bits of space.

\begin{theorem}
\label{thm:lb}
For every $k > 0$,
every randomized streaming algorithm achieving a finite appro\-xi\-ma\-tion ratio
for SFP with $k$ color classes of size at most $2$
must require $\Omega(k)$ bits of space.
This holds even for insertion-only algorithms and even when points are from the one-dimensional line~$\mathbb{R}$.
\end{theorem}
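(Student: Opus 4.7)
The plan is to reduce from the one-way communication complexity of \textsc{Index}, which is known to require $\Omega(k)$ bits even for randomized protocols with constant success probability (see, e.g., \cite{KushilevitzNisan97}). Recall the setup: Alice holds a string $x \in \{0,1\}^k$, Bob holds an index $j \in [k]$, Alice sends a single message to Bob, and Bob must output $x_j$.

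Assume we have a streaming SFP algorithm $\mathcal{A}$ with finite approximation ratio and space $s$. I would turn it into a one-way protocol as follows. Alice, on input $x$, feeds $\mathcal{A}$ with an insertion-only stream in which, for every $i \in [k]$, she inserts a single point of color $i$ at integer position $3i + x_i \in \{3,4,\dots,3k+1\}$ on the line. She then sends the $s$-bit memory contents of $\mathcal{A}$ to Bob. Bob, on input $j$, resumes the stream by inserting one additional point of color $j$ at position $3j$, and then queries $\mathcal{A}$ for its estimate of the SFP cost.

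By construction, each color class $C_i$ with $i \neq j$ consists of a single point and thus contributes $0$ to $\OPT$, while $C_j$ consists of exactly two points, at positions $3j$ and $3j + x_j$, contributing $x_j \in \{0,1\}$. Hence $\OPT = x_j$, and a finite approximation ratio forces $\mathcal{A}$'s output to be $0$ exactly when $\OPT = 0$. Bob can therefore recover $x_j$ from the output, which by the \textsc{Index} lower bound yields $s = \Omega(k)$. All requirements of the theorem are satisfied: the stream is insertion-only with $k+1$ updates, every color class has size at most~$2$, and all points lie on $\mathbb{R}$ (indeed on an integer grid of size $\Theta(k)$, whose $O(\log k)$-bit coordinate encoding is dominated by $\Omega(k)$).

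No step of this argument is technically hard; the only design choice is an encoding of $x$ into coordinates such that Bob's single extra insertion makes the contribution of color $j$ vanish iff $x_j = 0$. The map $i \mapsto 3i + x_i$ is a convenient such encoding, with integer coordinates and pairwise separation at least~$2$. The argument is robust to minor variations in the definition of ``finite approximation ratio,'' provided only that $\OPT = 0$ forces the algorithm's output to be $0$, and it trivially extends to amplify the constant success probability of the assumed algorithm.
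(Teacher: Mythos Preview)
Your proposal is correct and essentially identical to the paper's own proof: both reduce from one-way \textsc{Index}, have Alice insert one point per color at position $ci + x_i$ (the paper uses $c=2$, you use $c=3$), and have Bob insert a single point of his color at $cj$, so that $\OPT = x_j \in \{0,1\}$ and any finite-ratio estimator separates the two cases. The only (inconsequential) difference is the spacing constant; your remark about pairwise separation is not actually needed since distinct colors impose no mutual constraint in SFP.
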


\begin{proof}
        The proof is a reduction from the INDEX problem on $k$ bits, where Alice holds a binary string $x\in\{0,1\}^k$,
        and Bob has an index $i\in[k]$. The goal of Bob is to compute the bit $x_i$
        in the one-way communication model, where only Alice can send a message to Bob and not vice versa.
        It is well-known that Alice needs to send $\Omega(k)$ bits for Bob to succeed with constant probability \cite{KNR99} (see also~\cite{KushilevitzNisan97,JKS08}).
        Our reduction is from INDEX to SFP on the (discretized) one-dimensional line $[2k]$.
        Consider a randomized streaming algorithm ALG for SFP that approximates the optimal cost
        and in particular can distinguish whether the optimal cost is $0$ or $1$ with constant probability.
        We show that it can be used to solve the INDEX problem, implying that ALG needs to use $\Omega(k)$ bits of space.

        Indeed, Alice applies ALG on the following stream:
        For each bit $x_j$, she adds to the stream a point of color $j$ at location $2j+x_j$.
        So far $\OPT = 0$.
        She now sends the internal state of ALG to Bob.
        Then, Bob continues the execution of ALG (using the same random coins) by adding one more point to the stream:
        Given his index $i\in[k]$,
        he adds a point of color $i$ at location $2i$.
        After that, $\OPT = 0 + x_i$, which is either $0$ or $1$.
        It follows that if ALG achieves a finite approximation with constant probability,
        then Bob can discover $x_i$ and solve INDEX.
\end{proof}

\section{Conclusions and Future Directions}
\label{sec:conclusions}

Our paper makes progress on geometric streaming algorithms and particularly on the applica\-bility of Arora's framework to low-space streaming algorithms for geometric optimization problems. Still, our work leaves a number of very interesting open problems. 

Our approximation ratio $\alpha_2 + \eps$ matches the current approximation ratio for the Steiner tree problem in geometric streams. Hence, any improvement to our approximation ratio would require to first improve the approximation for Steiner tree, even in insertion-only streams.
This naturally leads to the main open problem of obtaining a $(1+\eps)$-approximation for Steiner tree in geometric streams using only $\poly(\eps^{-1}\log\Delta)$ space.

Our naive algorithm for SFP, in Theorem~\ref{thm:easy}, 
achieves the same $(\alpha_2 + \eps)$-approximation with $\poly(k \eps^{-1} \log\Delta)$ space,
but with query time exponential in $k$, because it queries an (approximate) MST-value oracle on all possible subsets of color classes to find the minimum.
We do not know if a smaller number of queries suffices here, but it is known that in a similar setup for coverage problems any oracle-based $O(1)$-approximation requires exponentially many queries to an approximate oracle \cite{DBLP:conf/spaa/BateniEM17}. Thus, it would not be surprising if a similar lower bound holds for our problem.

Our Theorem~\ref{thm:lb} shows that for SFP with color classes of size at most $2$ one cannot achieve any bounded approximation ratio using space that is sublinear in $n\le 2k$. This strongly suggests that SFP with pairs of terminals (i.e., $C_i = \{s_i,t_i\}$) does not admit a constant-factor approximation in the streaming setting, although our lower bound proof does not extend to this case (it relies on having some color classes of size $1$). We leave it as an open problem whether a constant-factor approximation in sublinear (in $n=2k$) space is possible for this version. We notice however that for the case where both points of each terminal pair are inserted/deleted together, it is possible to get an $O(\log n)$-approximation using the metric embedding technique of Indyk~\cite{Indyk04}.

Looking at how our space complexity bounds depend on $k$
(ignoring time complexity) reveals a gap 
between the $O(k^2)$ in \cref{thm:easy} 
and the $\Omega(k)$ in \cref{thm:lb}.
(This question was mentioned to us by David Woodruff.)
In an effort to tighten this gap, 
one may consider instead the sum-of-MST objective,
which asks to minimize the sum of costs of MST trees (i.e., without Steiner points), 
such that points of the same color are in the same tree.
This variant of the question may be more accessible, 
because our naive algorithm in \cref{thm:easy} achieves $(1+\epsilon)$-approximation for sum-of-MST,
and thus does not involve $\alpha_2$ as for SFP.

The focus of this paper is on SFP in the Euclidean plane, but in principle, our entire analysis can be extended to the Euclidean space $\RR^d$, for any fixed $d \ge 2$. However, this would require extending the arguments of~\cite{DBLP:journals/algorithmica/BateniH12, DBLP:journals/talg/BorradaileKM15}, namely, the structural result restated in Theorem~\ref{thm:dp_struct}, and these details were not written explicitly in~\cite{DBLP:journals/algorithmica/BateniH12, DBLP:journals/talg/BorradaileKM15}.

The techniques developed in this paper seem to be general enough to be applicable to other problems/objectives with \emph{connectivity constraints}, where the connectivity is specified by the colors and a solution is feasible if the points of the same color are connected.
One such closely related problem is the aforementioned sum-of-MST objective
(see also \cite{DBLP:journals/ijcga/AnderssonGLN02,ZNI05} for related problems).
We hope that the approach developed in our paper can lead to a $(1+\eps)$-approximation of the geometric version of this problem, using $\poly(k\eps^{-1}\log\Delta)$ space and time (for space only, one can use similar techniques as in Theorem~\ref{thm:easy}).
Moreover, it may be possible to apply our approach to solve the connectivity-constrained variants of other classical problems, especially those where dynamic programming has been employed successfully, like $r$-MST and TSP~\cite{DBLP:journals/jacm/Arora98}. For example, the TSP variant could be to find a collection of cycles of minimum total length
with points of the same color in the same cycle.

At a higher level, the connectivity constraints may be more generally interpreted as grouping constraints. For instance, in the context of clustering, our color constraints may be viewed as \emph{must-link} constraints, where points of the same color have to be placed in the same cluster. Such constrained clustering framework is of significant interest in data analysis
(see, e.g., \cite{DBLP:conf/icml/WagstaffCRS01}).
Our framework, combined with coreset techniques~\cite{FS05} and Arora's quadtree methods (see \cite{DBLP:conf/stoc/AroraRR98}), may be used to design streaming algorithms for such clustering problems.
In fact, optimization problems with connectivity and grouping constraints
are interesting on their own, beyond the streaming or standard offline settings,
and deserve to be studied also in other settings, such as online algorithms, approximation algorithms, fixed-parameter tractability, and heuristics.

\bibliographystyle{alphaurl}
\begin{small}	
\bibliography{ref}
\end{small}

\newpage
\appendix
\section{Technical Lemmas}
\begin{lemma}
\label{lemma:streaming_recover}
There exists an algorithm that, given an integer $n$,
a stream of dynamic updates to a frequency vector $U \in [-n, n]^n$,
and an (integer) threshold $T \geq 1$,
with probability at least $1 - 1/\poly(n)$
it reports YES if $0 < \|U\|_0 \leq T$ and NO if $\|U\|_0 > 2T$ (otherwise, the answer may be arbitrary),
and if the answer is YES, it returns all the non-zero coordinates of $U$,
using space $O(T \cdot \polylog n)$ and the same time per update.
The memory contents of the algorithm is a linear sketch of $U$.
\end{lemma}

\begin{proof}
The algorithm maintains an $\ell_0$-norm estimator for $U$ (see e.g.~\cite{DBLP:conf/pods/KaneNW10}) with relative error $\epsilon=0.5$ using space $\polylog{n}$,
and a compressed sensing structure that
recovers $2T$ non-zero elements of $U$ (see e.g.~\cite{DBLP:conf/sirocco/CormodeM06}) using space $O(T \polylog{n})$,
each succeeding with probability at least $1 - 1/\poly(n)$.
When the stream ends, the algorithm queries the $\ell_0$-norm estimator,
which is accurate enough to distinguish whether $\|U\|_0 \le T$ or $\|U\|_0 > 2T$.
Conditioned on this estimator succeeding,
if it is determined that $\|U\|_0 \leq 2T$,
the algorithm uses the compressed sensing structure to recover the at most $2T$
non-zero coordinates of $U$.
Both the $\ell_0$-norm estimator and the compressed sensing algorithm
are based on linear sketching.
This finishes the proof of \cref{lemma:streaming_recover}.
\end{proof}

We remark that an alternative algorithm is to maintain
$O(T \cdot \polylog n)$ independent instances of
an $\ell_0$-sampler (cf.~\cref{lemma:neighbor_l0})
and applying a coupon collector argument
to both estimate the $\ell_0$ norm and recover the non-zero coordinates. 
The connection goes also the other way: An $\ell_0$-sampler may be designed
using this lemma together with an appropriate subsampling of the domain.

 \section{Composing MST Sketches: $k^k$ Query Time $k^2$ Space}\label{sec:easy_alg}

	As outlined in the introduction, one can solve SFP in a simple way with query time $O(k^k)\cdot \polylog \Delta$.
	In this section, we provide details of this approach and prove the following theorem:

\begin{theorem}
	\label{thm:easy}
		For any integers $k, \Delta \geq 1$ and any $0 < \epsilon < 1/2$,
		one can with high probability $(\alpha_2 + \epsilon)$-approximate
		SFP cost of an input $X \subseteq [\Delta]^2$ presented as a dynamic 
		geometric stream, using space and update time of $O(k^2\cdot \poly(\epsilon^{-1}\cdot \log \Delta)) $
		and with query time $O(k^k) \cdot \poly(\epsilon^{-1}\cdot \log\Delta)$.
\end{theorem}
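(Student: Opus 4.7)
}

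The plan is to instantiate the simple linear-sketching approach outlined in \cref{subsec:other_approaches}. First, during stream processing, for each color $i\in[k]$ I would maintain a linear sketch $\mathcal{K}_i$ of the FIS08 MST algorithm~\cite{FIS08} restricted to the points of color $C_i$. Crucially, all $k$ sketches share the same random coins (i.e., the same underlying linear mapping applied to the color-indicator vectors), so their contents can be added coordinatewise. Each individual sketch is amplified to achieve failure probability at most $k^{-ck}\cdot \poly(\Delta^{-1})$ for a suitable constant $c$; since one amplified FIS08 sketch uses $\poly(\eps^{-1}\log\Delta)$ space with constant success probability, running $O(k\log k)$ independent copies per color and taking the median yields the required error probability while keeping the per-color space at $k\cdot\poly(\eps^{-1}\log\Delta)$, and hence the total space at $O(k^2\cdot\poly(\eps^{-1}\log\Delta))$.

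For the query, I would enumerate every partition $[k]=P_1\sqcup\cdots\sqcup P_\ell$ of the color set; the number of such partitions is at most $k^k$. For a fixed partition, and each part $P_j$, I would form the sketch
\[
    \mathcal{K}_{P_j} \;:=\; \sum_{i\in P_j} \mathcal{K}_i,
\]
which by linearity is exactly the FIS08 sketch of the point set $\bigcup_{i\in P_j} C_i$. Querying $\mathcal{K}_{P_j}$ returns an estimate $M_j$ of $\MST\bigl(\bigcup_{i\in P_j}C_i\bigr)$ within factor $1+\eps$. The algorithm then outputs the minimum over all enumerated partitions of $\sum_j M_j$.

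For correctness, any partition-based value $\sum_j \MST\bigl(\bigcup_{i\in P_j}C_i\bigr)$ is an upper bound on the sum of Steiner tree costs for those unions, and this sum is itself a feasible SFP cost; hence the minimum over partitions is at most the optimum sum-of-Steiner-trees value, which is at most $\alpha_2\cdot \OPT$ by the Euclidean Steiner ratio. For the reverse direction, taking the partition $\{P_j\}$ induced by the connected components of an optimal SFP solution shows the minimum is at least $\OPT$ (each $\MST(\bigcup_{i\in P_j}C_i)$ only undercounts by using Steiner points outside $X$, but this direction goes the right way). Combined with the $(1+\eps)$ per-part multiplicative error, the algorithm outputs a value in $[\OPT,\,(\alpha_2+O(\eps))\OPT]$, and rescaling $\eps$ gives the claimed ratio.

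The main technical item to handle carefully is the probability amplification: we need a single union bound over at most $k^k$ partitions, each involving at most $k$ sketch queries, so the per-sketch failure probability must be $\le k^{-\Omega(k)}$. Because the FIS08 sketch is linear, the merged sketches $\mathcal{K}_{P_j}$ are not independent across different choices of $P_j$, so one cannot apply fresh randomness per partition; however, the standard median-of-$O(k\log k)$-copies amplification still works because it bounds the failure probability of \emph{each} linear combination simultaneously, and a union bound over the $\le k^{k+1}$ query events then suffices. The update time is dominated by feeding each arriving point into one color's $O(k\log k)$ amplified sketches, giving $O(k^2\cdot\poly(\eps^{-1}\log\Delta))$; the query time is $k^k$ enumerations times $O(k)$ sketch merges and queries, each of cost $\poly(\eps^{-1}\log\Delta)$, matching the stated bound.
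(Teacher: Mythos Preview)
Your approach is essentially the paper's: one FIS08 linear sketch per color with shared randomness, merge by addition for each subset of colors, enumerate partitions, and return the minimum total. Two points need fixing, though.

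\textbf{The correctness argument is inverted.} In your first paragraph you correctly note that $\sum_j \MST(\bigcup_{i\in P_j}C_i)\ge \sum_j \text{SteinerTree}(\bigcup_{i\in P_j}C_i)\ge \OPT$ for \emph{every} partition; this is the \emph{lower} bound on the algorithm's output, and your ``hence the minimum over partitions is at most the optimum sum-of-Steiner-trees value'' does not follow (the inequality goes the other way). Conversely, the \emph{upper} bound is obtained by looking at the specific partition $P^*$ induced by the components of an optimal SFP solution and using the Steiner ratio in the form $\MST\le \alpha_2\cdot\text{SteinerTree}$: then $\sum_j \MST(P^*_j)\le \alpha_2\sum_j \text{SteinerTree}(P^*_j)\le \alpha_2\cdot\OPT$, so the minimum over partitions is at most $\alpha_2\cdot\OPT$. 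You have the roles of ``every partition'' and ``the OPT partition'' swapped between the two directions; the parenthetical about MST ``undercounting by using Steiner points outside $X$'' is also backwards (MST never uses Steiner points, so $\MST\ge\text{SteinerTree}$).

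\textbf{The union bound is looser than needed, costing a $\log k$ factor.} Although you enumerate $k^k$ partitions, there are only $2^k$ distinct subsets $P_j\subseteq[k]$ you ever query, and the answer for a fixed subset is deterministic once the sketches are fixed. So it suffices to set the per-query failure probability to $O(2^{-k})$ and union bound over the $2^k$ subsets (this is exactly what the paper does, taking $\delta=2^{-k}/3$). That requires only $O(k)$ amplification per color, giving space $O(k^2)\cdot\poly(\eps^{-1}\log\Delta)$ on the nose. Your $O(k\log k)$ copies to withstand a $k^{k+1}$-event union bound works, but leaves an extra $\log k$ factor that is not absorbed by $\poly(\eps^{-1}\log\Delta)$ since $k$ and $\Delta$ are independent parameters.
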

	
	The proof uses the streaming algorithm for MST from~\cite{FIS08} in a black box manner.
	Namely, we use the following result\footnote{While Theorem~6 in~\cite{FIS08} does not explicitly state that
		the algorithm produces a linear sketch, this follows as the data structures maintained in the algorithm are all linear sketches.
		We describe the MST sketch in Section~\ref{sec:proof_mst_add}.}:
	
	\begin{theorem}[Theorem~6 in~\cite{FIS08}]\label{thm:MSTsketch}
	There is an algorithm that for every $\epsilon, \delta \in (0,1)$, integer $\Delta\ge 1$,
	given a (multi)set $X \subseteq [\Delta]^2$ of points
	presented as a dynamic geometric stream, computes a $(1+\epsilon)$-approximate estimate
	for the cost of the Euclidean minimum spanning tree of $X$
	with probability at least $1-\delta$,
	using space $O\left(\log(1/\delta)\cdot \poly(\epsilon^{-1}\cdot \log \Delta)\right)$ and with both update and query times bounded by the same quantity.
	Furthermore, the algorithm returns a linear sketch from which the estimate can be computed.
	\end{theorem}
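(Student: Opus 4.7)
The plan is to maintain, for each color class $C_i$, a linear MST sketch from Theorem~\ref{thm:MSTsketch} instantiated with accuracy parameter $\epsilon/2$ and failure probability $\delta := 2^{-k}/\poly(\log \Delta)$, so that $\log(1/\delta) = O(k + \log\log\Delta)$ and each sketch uses $O(k \cdot \poly(\epsilon^{-1} \log\Delta))$ space. Crucially, the $k$ sketches are generated using the \emph{same} random coins (the same hash/linear map), so that they can be combined by pointwise addition to obtain a valid MST sketch of any union of color classes. Processing the stream is straightforward: on arrival/deletion of a point $x$ with color $\col(x) = i$, we update only the sketch for $C_i$. This gives total space and update time $O(k^2 \cdot \poly(\epsilon^{-1} \log\Delta))$, as required.

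At query time, I would enumerate all partitions of the color index set $[k]$. There are at most $k^k$ such partitions. For each partition $[k] = P_1 \sqcup \cdots \sqcup P_l$, and for each part $P_j$, I would form the linear combination $\sum_{i \in P_j} \mathcal{K}_i$ of the per-color sketches, which by linearity is a valid MST sketch for the point multiset $\bigcup_{i \in P_j} C_i$. Querying this combined sketch yields an estimate $\widetilde{M}_j$ that is a $(1+\epsilon/2)$-approximation of $\MST(\bigcup_{i \in P_j} C_i)$ with probability $1 - \delta$. The value associated with the partition is $\sum_j \widetilde{M}_j$, and the final output is the minimum of these sums over all enumerated partitions. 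The query time is dominated by $k^k$ partitions times the per-query cost of the MST sketch, giving $O(k^k) \cdot \poly(\epsilon^{-1} \log \Delta)$.

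For correctness, observe first that any such partition induces a feasible SFP solution: for each part $P_j$, build a Steiner tree on $\bigcup_{i \in P_j} C_i$; every color class $C_i$ is then connected. Hence the minimum over partitions of $\sum_j \mathrm{Steiner}\bigl(\bigcup_{i \in P_j} C_i\bigr)$ is at least $\OPT$. Conversely, take an optimal SFP solution $F^\star$ and let the partition $\{P_j\}_j$ be obtained by grouping colors whose classes lie in a common connected component of $F^\star$; then $\sum_j \mathrm{Steiner}(\bigcup_{i \in P_j} C_i) \leq w(F^\star) = \OPT$. Using the Euclidean Steiner ratio $\alpha_2$ to pass from Steiner tree to MST on each part, and the $(1 + \epsilon/2)$-multiplicative guarantee of each MST sketch, the value returned lies between $\OPT$ and $(\alpha_2 + \epsilon) \cdot \OPT$, up to constants hidden in the choice of $\epsilon$.

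The only subtle point is the failure probability bookkeeping. We must union-bound the event that some queried combined sketch fails. Every combined sketch corresponds to a subset $S \subseteq [k]$ (the union of colors in some part), and there are only $2^k$ such subsets; by our choice $\delta = 2^{-k}/\poly(\log\Delta)$, the union bound over all $2^k$ subsets gives overall failure probability $1/\poly(\log\Delta)$. Since all invocations over all $k^k$ partitions only ever query sketches of such subsets, this single union bound suffices—there is no need to pay $k^k$ in the space. This is the one place where it matters that the per-color sketches share the same randomness: the combined sketch for a subset $S$ is a single, fixed random object, independent of which partition uses it. No other step poses a real obstacle; the rest is bookkeeping on parameters.
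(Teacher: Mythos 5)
You have proved the wrong statement. Theorem~\ref{thm:MSTsketch} is about a \emph{single} multiset $X$ and asserts the existence of a linear sketch that $(1+\epsilon)$-approximates the Euclidean MST cost of $X$ in a dynamic geometric stream with space $O(\log(1/\delta)\cdot\poly(\epsilon^{-1}\log\Delta))$. There are no colors, no SFP, no partitions anywhere in its statement. In the paper this is a black-box import of Theorem~6 from Frahling, Indyk, and Sohler~\cite{FIS08}; the only thing the paper adds is the (footnoted) observation that the data structures maintained by that algorithm — $\ell_0$-samplers with neighborhood information and $\ell_0$-norm estimators — are all linear sketches, so the combined memory contents is itself a linear sketch. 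Your proposal instead proves Theorem~\ref{thm:easy}: you instantiate per-color MST sketches, share randomness, combine them by linearity over subsets, enumerate partitions of $[k]$, and union-bound over $2^k$ subsets. That is precisely the argument the paper gives \emph{for Theorem~\ref{thm:easy}}, and indeed your version is essentially a faithful reproduction of it.

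As a proof of Theorem~\ref{thm:MSTsketch}, your argument is circular: its very first step is ``maintain \dots a linear MST sketch from Theorem~\ref{thm:MSTsketch}.'' You cannot establish a result by assuming it. To actually prove Theorem~\ref{thm:MSTsketch} one would have to reproduce the MST-cost machinery of~\cite{FIS08}: express $\MST(X)$ as a weighted sum of component counts $c^{(i)}$ of threshold graphs (cf.\ \cref{eqn:mst_component}), round to a grid at the appropriate scale so that $c^{(i)}$ is approximated by $\mathring{c}^{(i)}$ (\cref{lemma:num_component_relation}), estimate $\mathring{c}^{(i)}$ via $\ell_0$-sampling plus a stochastic-stopping BFS (\cref{lemma:neighbor_l0}), show that large components can be discarded at small cost (\cref{mst:lb}), and finally verify that every ingredient admits a linear sketch with the stated $\log(1/\delta)$ dependence via repetition and median. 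None of this appears in your proposal. The color/partition bookkeeping you carefully worked out is correct, but it belongs to the downstream application, not to the theorem at hand.
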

	
	\begin{proof}[Proof of \cref{thm:easy}]
	First, we compute the MST sketch $\mathcal{K}_C$ for each color $C$ separately, by using Theorem~\ref{thm:MSTsketch}
	with the same $\epsilon$, with $\delta = 2^{-k} / 3$, and also with the same random bits (so that we are able
	to add up the sketches for different colors). 
	After processing the stream and obtaining sketches $\mathcal{K}_C$, we enumerate all subsets of $k$ colors
	and estimate the MST cost for all colors in the subset.
	Namely, for each subset $S\subseteq [k]$, we first merge (copies of) sketches $\mathcal{K}_C$ for $C\in S$, using that they
	are linear sketches, to get an MST sketch $\mathcal{K}_S$ for all points of colors in $S$.
	Running the estimation procedure from~\cite{FIS08} on sketch $\mathcal{K}_S$,
	we get an estimate of the MST cost for $S$ and store this estimate in memory.
	Then, we enumerate all $O(k^k)$ partitions of $k$ colors and for each partition $I_1, \dots, I_r \subseteq [k]$, we 
	estimate its cost by summing up the estimates for subsets $I_1, \dots, I_r$, computed in the previous step.
	Finally, the algorithm returns the smallest estimate of a partition.
	
	By the union bound, all estimates for subsets of $[k]$ are 	$(1+\epsilon)$-approximate with probability 
	at least $1 - \delta\cdot 2^k \ge \frac{2}{3}$, by the choice of $\delta$.
	Conditioning on this, the cost of any partition is at least $1-\epsilon$ times the optimal cost.
	Consider an optimal solution $F$ of SFP. After removing all Steiner points, we obtain an $\alpha_2$-approximate solution $F'$,
	by the definition of the Steiner ratio $\alpha_2$.
	The partition of colors into components in $F'$ is considered by the algorithm and the estimated 
	cost of this partition is at most $(1+\epsilon)\cdot w(F')\le (1+\epsilon)\cdot \alpha_2\cdot w(F)$ with probability at least $\frac{2}{3}$,
	which implies the correctness of the algorithm.
	The space and time bounds follow from the choice of $\delta = O(2^{-k})$ and from Theorem~\ref{thm:MSTsketch}.
      \end{proof}

\end{document}